\definecolor{dominik}{rgb}{0.4,.0,0.6}
\newtheorem{theorem}{Theorem}
\newtheorem{proposition}[theorem]{Proposition}
\newtheorem{definition}[theorem]{Definition}
\newtheorem{lemma}[theorem]{Lemma}
\newtheorem{corollary}[theorem]{Corollary}
\newtheorem*{remark}{Remark}
\newcommand{\Y}{\mathsf{Y}}
\newcommand{\Z}{\mathsf{Z}}
\newclass{\np}{NP}
\newclass{\sP}{\#P}
\newclass{\postBQP}{postBQP}
\newclass{\sampBPP}{sampBPP}
\newcommand{\X}{\mathsf{X}}
\newcommand{\tw}{\mathsf{tw}}
\newcommand{\rw}{\mathsf{rw}}
\newcommand{\cw}{\mathsf{cw}}
\newcommand{\ew}{\mathsf{ew}}
\newcommand{\srw}{\mathsf{srw}}
\newcommand{\B}{\mathsf{B}}
\let\oldnl\nl
\newcommand{\nonl}{\renewcommand{\nl}{\let\nl\oldnl}}
\newcommand{\ccol}[2]{\color{#1}#2\color{black}}
\crefname{section}{Section}{Section} 
\crefname{subsection}{Section}{Section}
\newtheoremstyle{mystyle}
{3pt}
{3pt}
{\upshape}
{}
{\bfseries}
{.}
{.5em}
{}
\theoremstyle{theorem}
\crefname{thm}{Theorem}{Theorems}
\Crefname{thm}{Theorem}{Theorems}
\crefname{corll}{Corollary}{Corollaries}
\theoremstyle{remark}
\theoremstyle{definition}
\theoremstyle{mystyle}
\NewDocumentEnvironment{todo}{o}
 {\IfNoValueTF{#1}
   {\todoaux\addcontentsline{toc}{subsection}{\protect\numberline{\thesubsection}\ccol{red}{To-Do item}}}
   {\todoaux[#1]\addcontentsline{toc}{subsection}{\protect\numberline{\thesubsection}{\ccol{red}{To-Do item}} (#1)}}
   \ignorespaces}
 {\endtodoaux}
\newcolumntype{\expand}{}
\long\@namedef{NC@rewrite@\string\expand}{\expandafter\NC@find}
  \def\problem@arg{#1}
  \def\problem@framed{framed}
  \def\problem@lined{lined}
  \def\problem@doublelined{doublelined}
    \def\problem@hline{}
      \def\problem@hline{\hline\hline}
      \def\problem@hline{\hline}
    \def\problem@tablelayout{|>{\bfseries}lX|c}
    \def\problem@title{\multicolumn{2}{|l|}{
        \raisebox{-\fboxsep}{\textsc{#2}}
      }}
    \def\problem@tablelayout{>{\bfseries}lXc}
    \def\problem@title{\multicolumn{2}{l}{
        \raisebox{-\fboxsep}{\textsc{#2}}
      }}
\begin{document}

\title{Sharp complexity phase transitions generated by entanglement}

\author{Soumik Ghosh}
\affiliation{Department of Computer Science, University of Chicago, Chicago, Illinois 60637, USA}

\author{Abhinav Deshpande}
\affiliation{Institute for Quantum Information and Matter, California Institute of Technology, Pasadena, California 91125, USA}

\author{Dominik Hangleiter}
\affiliation{Joint Center for Quantum Information and Computer Science and Joint Quantum Institute, University of Maryland \& NIST, College Park, Maryland 20742, USA}

\author{Alexey V.\ Gorshkov}
\affiliation{Joint Center for Quantum Information and Computer Science and Joint Quantum Institute, University of Maryland \& NIST, College Park, Maryland 20742, USA}

\author{Bill Fefferman}
\affiliation{Department of Computer Science, University of Chicago, Chicago, Illinois 60637, USA}

\date{\today}

\begin{abstract}
Entanglement is one of the physical properties of quantum systems responsible for the computational hardness of simulating quantum systems. But while the runtime of specific algorithms, notably tensor network algorithms, explicitly depends on the amount of entanglement in the system, it is unknown whether this connection runs deeper and entanglement can also cause inherent, algorithm-independent complexity.
In this work, we quantitatively connect the entanglement present in certain quantum systems to the computational complexity of simulating those systems.
Moreover, we completely characterize the entanglement and complexity as a function of a system parameter. Specifically, we consider the task of simulating single-qubit measurements of $k$--regular graph states on $n$ qubits. We show that, as the regularity parameter is increased from $1$ to $n-1$, there is a sharp transition from an easy regime with low entanglement to a hard regime with high entanglement at $k=3$, and a transition back to easy and low entanglement at $k=n-3$. As a key technical result, we prove a duality for the simulation complexity of regular graph states between low and high regularity.
\end{abstract}

\maketitle

\noindent A fundamental question since the inception of quantum computing has been to understand the physical mechanisms underlying the computational speedup of quantum computers. One of the most widely studied resources for a quantum speedup is entanglement \cite{Vidal2003,Jozsa2003}. 
However, understanding precisely \emph{how much} entanglement is necessary \emph{and} sufficient for a quantum system to be intractable to arbitrary classical simulation techniques has remained elusive. 
Quantum computations involving next to no entanglement can be hard to simulate classically \cite{Knill1998,Biham2004,Datta2005} and
relatively little entanglement can be universal for quantum computation \cite{Raussendorf2002,Hein2006,VandenNest2013}, while states with very high entanglement can be useless for quantum computation \cite{Bremner2009,Gross2009}.

One way the relation between entanglement and hardness has been studied is by considering the performance of specific simulation methods, like tensor networks \cite{Markov2008,Shi2006,Huang2020b,Pan2022}. The runtime of tensor-network algorithms depends exponentially on the amount of a certain type of entanglement \cite{Vidal2003,Shi2006,Markov2008}, as it determines how efficiently we can contract the tensor network. However, it is an open problem to characterize the situations in which tensor network algorithms are optimal. 
When can we find another algorithm that could do better in situations in which tensor networks are inefficient?
Moreover, when does the failure of tensor networks coincide with an inherent hardness of the problem itself? 
This essentially is the content of the second of Aaronson's ``Ten Semi-Grand Challenges for Quantum Computing Theory'' \cite{Aaronson2005b}.

The effect of the presence of entanglement on the hardness of classical simulation has been considered in various settings including  measurement-based quantum computing (MBQC) \cite{VandenNest2004,Nest2007a,Bremner2009,Gross2009}, the one-clean-qubit model \cite{Yoganathan2019}, and more recently in a line of research considering the time evolution under certain classes of Hamiltonians \cite{Bouland2016,Maskara2022}.
However, we are lacking a \emph{quantitative connection} between the entanglement present in certain quantum states and the inherent computational complexity of simulating those states.

\begin{figure*}
    \includegraphics{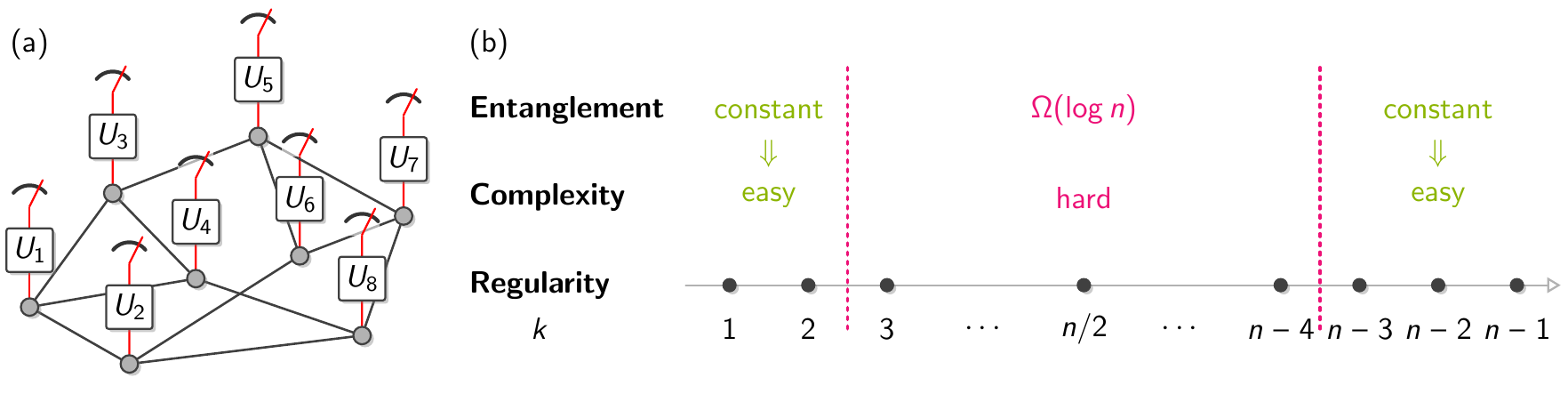}
    \caption{(a)~The family of quantum states we consider are graph states on a $k$--regular graph $G$ on $n$ qubits with arbitrary single qubit rotations $U_1, U_2, \ldots, U_n$. 
    The measurements are done in the standard basis. 
    (b)~Phase transitions of the entanglement (as measured by entanglement width) and computational complexity---whether classical simulation is easy or hard---as a function of the regularity parameter $k$. For both the entanglement width and the computational complexity, we take the worst case over all $k$--regular graphs $G$ as well as $U_1, U_2, \ldots, U_n$. 
    }
    \label{fig:setup}
\end{figure*}

In this paper, we answer Aaronson's question quantitatively with respect to the entanglement of regular graph states. For a simple graph $G$\,$=$\,$(V,E)$ given by the pair of vertex set $V$ and edge set $E$, the corresponding graph state $\ket{G}$ is defined as
\begin{equation}
    \ket{G} = \underset{(i, j) \in E}{\prod} (\mathsf{CZ})_{i, j} | + \rangle^{\otimes n},
\end{equation}
where $\mathsf{CZ}_{i, j}$ is the controlled-$\mathsf{Z}$ operator acting on vertices $i$ and $j$. The action of the $\mathsf{CZ}_{i, j}$ gate is invariant with respect to changing the control and the target qubits.
Graph states \cite{Hein2004} are a very well-motivated class to investigate the 
{interplay} of classical simulability {and entanglement}. 
On the one hand, a graph state directly maps to a tensor network, and one can invoke the measurement-based model of quantum computing \cite{Raussendorf2002,Raussendorf2006,Nielsen2006} to argue that certain graph states are not efficiently simulable and are, moreover, universal resources for quantum computations.
On the other hand, their entanglement can be conveniently analyzed using graph theory \cite{Hein2006}. 

Examples of universal resource states are graph states on hexagonal, square, or triangular lattices \cite{Raussendorf2003,Nest2006}. 
Under closed boundary conditions these resource states precisely correspond to $3$--, $4$--, and $6$--regular graphs, respectively. Conversely, graph states on a $2$--regular graph, i.e., a one-dimensional cluster state, and the graph state on an $(n$\,$-$\,$1)$--regular graph on $n$ qubits, i.e., the complete graph, are also well studied: both are efficiently simulable and at the same time have low entanglement \cite{Nest2006,Hein2006,Markham2007}.
However, for all other values of the regularity parameter $k$, it is unknown exactly when, if at all, classical simulation is intractable, and how the regularity parameter relates to the entanglement of the corresponding graph state.
\\

\noindent \textit{Our contributions.}---In this work, we completely characterize the computational complexity of simulating $k$--regular graph states in arbitrary product bases and their entanglement as a function of the regularity parameter $k$; see \cref{fig:setup}. 
We also identify new resource states for MBQC, which 
is a result of independent interest. 
Indeed, our constructions reach all the way to almost fully connected graphs that may be more natural for some experimental architectures such as ion traps~\cite{blatt_quantum_2012} or cavity quantum electrodynamics~\cite{Swingle2016} than low-degree lattices.
We also obtain new bounds in graph theory from complexity-theoretic assumptions.

Our setup includes arbitrary single-qubit gates at the end to perform the measurement in arbitrary local bases. This ensures that classical simulation algorithms that exploit specific properties---in particular, low stabilizer rank or $\mathsf{T}$-count \cite{Bravyi2016,Bravyi2019b} and low negativity in quasiprobability representations \cite{Mari2012,Pashayan2015,Delfosse2015, Raussendorf2017,Raussendorf2020}---are rendered inefficient.
Importantly, the last layer of local rotations does not affect the entanglement properties of the quantum state. 
In other words, the local rotations serve to isolate entanglement as the key causal factor responsible for hardness or easiness, respectively, and enable us to understand to what extent the entanglement present in a state serves as a necessary and sufficient criterion characterizing the simulation complexity. 

Our two main results can be summarized as follows and are illustrated in \cref{fig:setup}(b).

\begin{itemize}
\item As the regularity parameter $k$ is increased from its minimal value of $1$ to its maximal value of $n$\,$-$\,$1$, the simulation complexity first sharply changes from easy to provably hard precisely at $k$\,$=$\,$3$, but then changes sharply back to easy again at $k$\,$=$\,$n$\,$-$\,$3$. 

\item The entanglement scaling, as measured by the \emph{entanglement width} \cite{Nest2006}, is in one-to-one correspondence with the simulation complexity, changing from constant to at least logarithmic to constant at \emph{the same values of $k$} at which the simulation complexity changes from easy to hard and back to easy. 
\end{itemize}

Qualitatively, the entanglement width  measures the entanglement of  ``tree-like'' bipartitions of the state and this feature directly determines the runtime of tensor-network algorithms. 
It is also an LOCC (Local Operations and Classical Communication) monotone and hence a meaningful measure of entanglement~\cite{Raussendorf2002}.

We have thus identified a setup in which all the known easy cases are efficiently simulable using tensor-network algorithms precisely \emph{by virtue} of the state of the system having little entanglement. 
At the same time, all other cases are provably hard to simulate \emph{because} the entanglement present in the system facilitates universal measurement-based quantum computation, as we detail below.   
In this sense, the entanglement may justifiably be said to \emph{cause} the sharp complexity phase transitions. 
To the best of our knowledge, this is the first setup in which both features have been simultaneously demonstrated, and moreover, the entanglement and complexity transitions, as a function of a natural system parameter, are sharp. 

Finally, using appropriate complexity theoretic conjectures, bounds on entanglement width for the hard cases depicted in \cref{fig:setup}(b) can be improved from logarithmic to superlogarithmic or polynomial.
\\

\begin{figure*}
    \includegraphics{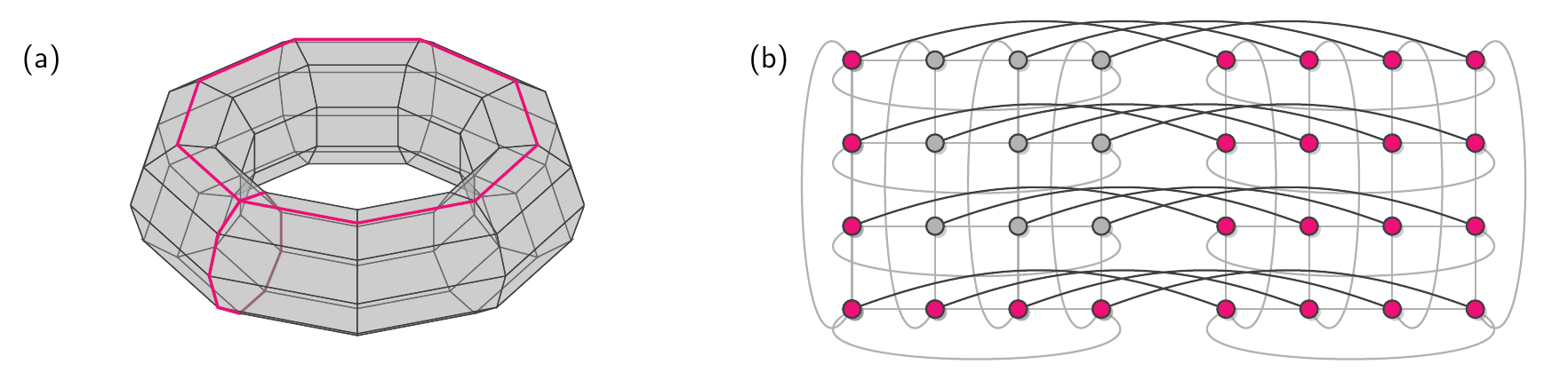}
    \caption{(a) A grid graph with closed boundary conditions is a torus, which is a $4$--regular graph. This is a resource state for MBQC because ``cutting open'' the torus along the pink lines gives back a grid graph.
    (b) Two tori connected together to construct a $5$--regular graph. The pink vertices are the ones we delete to recover a grid graph, which proves that this is a valid resource state for MBQC.
    \label{fig:hard graph}
    }
\end{figure*}

\noindent \textit{Main results.}---Let us now state our main results. 
We consider simulation of the quantum states in terms of both sampling from their output distributions \emph{and} computing their output probabilities up to constant multiplicative error in an arbitrary local product basis.
We also stress that all our results carry over to the case of approximate sampling, assuming appropriate complexity-theoretic conjectures (see the Supplemental Material~\cite{SM} for details).

Indeed, in the case in which simulation is hard, the two notions of simulation are intricately linked: 
given that computing output probabilities to constant multiplicative error is harder than any problem in the complexity class \sP, the sampling task cannot be efficiently solved. 
This can be shown by a standard reduction due to \textcite{Stockmeyer1983}.
In the hard regime, our proofs thus rely on showing \sP-hardness of estimating probabilities of a specific family of $k$--regular graphs in a specific family of local bases, implying the hardness of sampling.  
Conversely, easiness of sampling and computing output probabilities up to constant multiplicative error are independent properties and not implied by one another. 
However, our proofs in the easy regimes show that \emph{both} tasks are efficiently possible for our particular setup. 

Specifically, we prove the following results. 

\begin{theorem}[The easy regime]
\label{abstract thm:easiness}
In the regimes of very low ($k$\,$\leq$\,$2$) and very high ($k$\,$\geq$\,$n$\,$-$\,$3$) regularity, locally rotated $k$--regular graph states
(a) have constant entanglement width, and (b) can be simulated by a polynomial time classical algorithm. 
\end{theorem}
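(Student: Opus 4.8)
The plan is to establish part~(a) first and then obtain part~(b) as a consequence. The key tool for (a) is the identity, due to Van den Nest et al.~\cite{Nest2006}, that the entanglement width of a graph state $\ket{G}$ equals the rank-width $\rw(G)$ of $G$, where the relevant connectivity function (the cut-rank) is $\rho_G(A) = \mathrm{rank}_{\mathbb{F}_2}(M_A)$ with $M_A$ the submatrix of the adjacency matrix of $G$ indexed by rows in $A$ and columns in $V \setminus A$; since entanglement width is invariant under local unitaries, the final layer of rotations $U_1, \dots, U_n$ plays no role in (a), and it suffices to show that $\rw(G) = O(1)$ for every $k$-regular $G$ in the two regimes. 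For $k \leq 2$ this is immediate: a $k$-regular graph is a disjoint union of isolated vertices, isolated edges, and cycles; forests have rank-width at most $1$, cycles have rank-width at most $2$, and the rank-width of a disjoint union is the maximum of the rank-widths of its components, so $\rw(G) \leq 2$.

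For $k \geq n-3$ I would use graph-complement duality. The complement $\overline{G}$ is $(n-1-k)$-regular with $n-1-k \in \{0,1,2\}$, so by the previous paragraph $\rw(\overline{G}) \leq 2$. To transfer this bound to $G$, note that for any bipartition $(A, V \setminus A)$ the cut matrix of $\overline{G}$ is $J + M_A$ over $\mathbb{F}_2$, with $J$ the all-ones matrix, whence $\rho_{\overline{G}}(A) = \mathrm{rank}_{\mathbb{F}_2}(J + M_A) \leq \mathrm{rank}_{\mathbb{F}_2}(M_A) + 1 = \rho_G(A) + 1$. Thus the cut-rank functions of $G$ and $\overline{G}$ differ pointwise by at most $1$, so any branch decomposition witnessing $\rw(\overline{G}) = w$ witnesses width at most $w+1$ for $G$; hence $\rw(G) \leq \rw(\overline{G}) + 1 \leq 3$. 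This proves (a), and I expect the same pointwise comparison of cut-ranks to be the combinatorial core of the ``duality'' referred to in the abstract.

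For part~(b), bounded rank-width of $G$ means the locally rotated graph state has constant entanglement width, so I would invoke the classical simulation results for measurement-based computation on graph states of bounded (indeed $O(\log n)$) entanglement width~\cite{Nest2006}: computing an amplitude $\bra{x}\bigl(\bigotimes_i U_i\bigr)\ket{G}$ --- equivalently, the overlap of $\ket{G}$ with a product state --- amounts to contracting the tensor network attached to $G$ along a branch decomposition with bond dimension $2^{O(\rw(G))}$, which is constant here and therefore gives a $\mathrm{poly}(n)$-time algorithm. This computes output probabilities exactly (hence a fortiori to multiplicative error), and chaining the resulting conditional marginals yields an efficient sampler; so both simulation tasks are polynomial-time.

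The main obstacle is the high-regularity case of part~(a): one must spot that complementation changes every cut-rank by at most one, which is precisely what makes the high-$k$ regime inherit the triviality of its low-regularity complement. Everything else is either standard structural graph theory (the $k \leq 2$ case) or a direct appeal to existing simulation machinery (part~(b)). The one additional point that needs checking is that the cited simulation result applies verbatim to our setting --- arbitrary product measurement bases, and both the sampling task and multiplicative-error probability estimation, rather than a restricted family of measurements; if necessary this is handled by re-deriving the contraction bound directly from a branch decomposition of $G$.
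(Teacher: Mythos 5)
Your proposal is correct and follows the same overall strategy as the paper: identify entanglement width with rank width, bound the rank width by a constant in both regimes, and then invoke the tree-tensor-network simulation result of den Nest et al.\ for part~(b). The one place where you genuinely diverge is the high-regularity step. The paper never compares cut-ranks directly; it routes the complement duality through clique width, using $\tw(G)\le 2$ for $k\le 2$, then $\cw(G)\le 3\cdot 2^{\tw(G)-1}$, then $\cw(\overline{G})\le 2\,\cw(G)$, and finally the sandwich $\rw\le\cw\le 2^{\rw+1}-1$ to conclude boundedness (treating the complete graph separately via $\cw=2$, and also giving an independent recursive algorithm for it based on Hamming-weight symmetry, which is not needed for the theorem). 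Your argument that the cut matrix of $\overline{G}$ over $\mathbb{F}_2$ is $J+M_A$, so that $\abs{\rho_G(A)-\rho_{\overline{G}}(A)}\le 1$ pointwise and hence $\rw(G)\le\rw(\overline{G})+1\le 3$, is more elementary, avoids clique width entirely, and yields an explicit sharp constant rather than the larger (but still constant) bound the paper's chain of inequalities produces; it is the standard Oum-style duality for rank width and is exactly the right tool here. Your closing caveat about needing the decomposition in hand is also apt --- the paper relies implicitly on being able to find a near-optimal decomposition in polynomial time, which for these explicitly structured graphs (matchings, cycles, and their complements) is immediate.
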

\noindent For all other values of $k$, we show that classical simulations are not efficiently possible: 
\begin{theorem}[The hard regime]
\label{abstract thm:hardness}
For every $3$\,$\leq$\,$k$\,$\leq$\,$n$\,$-$\,$4$, there exist locally rotated  $k$--regular graph states such that
(a) these states cannot be simulated classically in polynomial time, and
(b) the entanglement width scales at least logarithmically.
\end{theorem}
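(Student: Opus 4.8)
\medskip
\noindent\emph{Proof proposal.}
The plan is to establish part~(a) by constructing, for every $3 \le k \le n-4$, a $k$--regular graph $G$ on $n$ vertices together with single-qubit rotations $U_1,\dots,U_n$ so that the standard-basis measurement statistics of $(\bigotimes_i U_i)\ket{G}$ encode a \emph{postselected} universal measurement-based quantum computation; part~(b) will then follow from the graph-theoretic structure of~$G$. I would split the range of~$k$ at roughly $n/2$. For $k\in\{3,4,6\}$ the honeycomb, square, and triangular lattices with periodic boundary conditions are already $k$--regular and are known universal MBQC resources \cite{Raussendorf2003,Nest2006}, so no further work is needed. For the remaining $k\lesssim n/2$ I would take a bounded-degree universal patch $R$ --- a planar honeycomb ``brick-wall'' piece of size $\Theta(n)$ and maximum degree $3$ --- on a vertex set $P$, adjoin a set $Q$ of $\Theta(n)$ helper vertices, join each $u\in P$ to $k-\deg_R(u)$ of them, and add helper--helper edges so that $G$ is exactly $k$--regular; an Erd\H{o}s--Gallai/Gale--Ryser degree-sequence argument shows this is feasible for all such $k$ once $n$ is large (possibly under mild divisibility constraints on $n$, which is harmless since the theorem only asserts existence). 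Measuring each helper in the $\Z$ basis deletes it and leaves $\ket{R}$ up to local $\Z$ corrections absorbed into the $U_i$, after which fixed single-qubit measurements on $P$ drive the universal computation. The case $k=5$ can alternatively use the two glued $4$--regular tori of \cref{fig:hard graph}(b): $\Z$-measuring one torus removes all matching edges and recovers a $4$--regular torus, cut into a square-lattice patch by further $\Z$ measurements as in \cref{fig:hard graph}(a).

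\smallskip
\noindent\emph{High regularity, $n/2 \lesssim k \le n-4$: the duality.} Here $G$ must be dense and cannot contain a sparse lattice as an induced subgraph, so I would instead exploit that graph complementation is a single extra MBQC step: if a vertex $v$ has neighbourhood exactly $S$, then a $\Y$-measurement at $v$ performs a local complementation at $v$ --- which toggles every edge inside $S$ --- followed by deletion of $v$, replacing $G[S]$ by its complement up to local Cliffords. I would therefore build a $k$--regular $G$ on $n$ vertices containing (i)~a vertex $v$ with $N(v)=S$ and $|S|=k$; (ii)~$G[S]=\bar R$ for a bounded-degree universal patch $R$ of size $k=\Theta(n)$; and (iii)~the remaining $n-1-k\ge3$ vertices as padding pumping all degrees up to $k$; once more a degree-sequence argument shows this is realizable for every $k\le n-4$. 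Deleting the padding in the $\Z$ basis and then $\Y$-measuring $v$ yields $\ket{R}$ up to local corrections, and the computation proceeds as before. Read in the other direction this shows the $k$--regular and $(n{-}1{-}k)$--regular families have the same simulation complexity --- the $k\leftrightarrow n-1-k$ duality --- so hardness at low regularity is inherited at high regularity.

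\smallskip
\noindent\emph{Hardness, part~(b), and the main obstacle.} In every case the measured state realizes, after postselecting the byproduct operators to be trivial, a universal MBQC computation with \emph{non-adaptive} single-qubit measurements; since postselected universal MBQC is $\postBQP=\PP$, estimating the resulting output probabilities to constant multiplicative error is $\sP$-hard, and a Stockmeyer-style counting argument \cite{Stockmeyer1983} then precludes an efficient classical sampler unless the polynomial hierarchy collapses --- this gives part~(a). For part~(b), the patch $R$ (its complement $\bar R$ in the dense case) sits inside $G$ as an induced subgraph or, after the $\Z/\Y$ measurements, as a vertex-minor; rank-width --- which equals the entanglement width of a graph state \cite{Nest2006} --- does not increase under vertex deletion or local complementation and changes by at most $1$ under complementation, so $\ew(G)\ge\rw(R)-O(1)$. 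Taking $R$ grid-like, for which $\rw$ of an $\ell\times\ell$ toroidal patch is $\Theta(\ell)$, gives $\ew(G)=\Omega(\sqrt n)$, in particular at least logarithmic, as claimed. The step I expect to be hardest is the high-regularity side: verifying that for \emph{every} $k$ up to $n-4$ the degree constraints really admit a $k$--regular graph of the ``universal vertex $+$ complemented patch $+$ padding'' form, and that the $\Z$- and $\Y$-measurements extract $\ket{R}$ with only local corrections left over --- this is precisely the content of the duality; by contrast the low-regularity constructions and the universality-to-hardness reduction are comparatively routine.
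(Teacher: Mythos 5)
Your low-regularity constructions, the universality-to-hardness pipeline (postselected MBQC gives $\postBQP=\PP$, hence $\sP$-hardness of multiplicative estimation, hence no efficient sampler via Stockmeyer), and the idea of transferring hardness to high regularity through complementation all match the paper in substance. Your route to part~(b) is genuinely different: the paper lower-bounds the clique-width of its hard graphs by $\Omega(\sqrt n)$ and then converts via $\cw(G)\le 2^{\rw(G)+1}-1$, which only yields $\ew(\ket G)=\rw(G)=\Omega(\log n)$ unconditionally; you instead use vertex-minor monotonicity of rank-width together with the assertion that an $\ell\times\ell$ grid has rank-width $\Theta(\ell)$, which would give the stronger unconditional bound $\Omega(\sqrt n)$. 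That assertion is true but is a nontrivial theorem (Jel\'inek) that you must actually cite; the easy direction one gets from clique-width is only $\rw=\Omega(\log \ell)$, which is all the theorem statement needs anyway.

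The genuine gap is the high-regularity architecture, and it is not just a deferred verification --- the degree constraints you impose are infeasible for essentially the whole range $n/2+O(1)<k\le n-5$. Count edges: with $j=n-1-k$ padding vertices, each padding vertex must reach degree $k$ but cannot touch $v$ (you need $N(v)=S$ exactly), so all its edges go to the other $j-1$ padding vertices or into $S$. Each $u\in S$ already has degree $\deg_{\bar R}(u)+1=k-\deg_R(u)$ from $S\cup\{v\}$, so it can absorb only $\deg_R(u)\le 3$ padding edges, giving $S$ a total capacity of $2|E(R)|\le 3k$. Feasibility therefore requires $jk\le j(j-1)+3k$, i.e.\ $j^2-(k+1)j+3k\ge 0$, which holds only for $j\lesssim 3$ or $j\gtrsim k-2$, i.e.\ only for $k\ge n-4$ or $k\le n/2+O(1)$. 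Already at $k=n-5$ ($j=4$) the padding needs total degree $4(n-5)$ but has at most $12+3(n-5)$ available. (For $k=n-4$ the count balances only if $R$ is \emph{exactly} $3$-regular, so your planar brick-wall patch with its $\Theta(\sqrt n)$ degree-$2$ boundary vertices also fails there; a honeycomb torus works.) The paper sidesteps all of this: instead of engineering a single universal vertex adjacent to $\bar R$, it complements the \emph{entire} low-regularity hard graph (two glued $4$--regular tori), which is automatically $(n-1-t)$--regular with no padding to balance, Z-deletes one torus, and then invokes the duality theorem (\cref{duality theorem for prob}): a local complementation at a \emph{corner} vertex $a$ of the complemented grid --- whose only non-neighbours are its two grid-neighbours --- restores the grid up to defects confined to $a$'s row and column, which are then vertex-deleted. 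You would need to replace your gadget with something of this form for the upper half of the range.
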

\noindent 
We also get the following corollary. 
\begin{corollary}
For every $3$\,$\leq$\,$k$\,$\leq$\,$n$\,$-$\,$4$, assuming $\BPP$\,$\subsetneq$\,$\P^{\#\P}$, there exist $k$--regular graph states satisfying Theorem \ref{abstract thm:hardness}(a) such that their entanglement width is superlogarithmic.

Assuming stronger hardness conjectures, the lower bounds on the entanglement width can be sharpened to $\Omega(n^\delta)$ for some constant $\delta$\,$>$\,$0$ (assuming the exponential time hypothesis) and to $\Omega(n^{1/2})$ (assuming the strong exponential time hypothesis.)
\end{corollary}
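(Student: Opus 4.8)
The plan is to read the easiness side of \cref{abstract thm:easiness} in reverse, i.e.\ to use a quantitative tensor-network simulation bound as a contrapositive. The central tool is the standard statement behind \cref{abstract thm:easiness}(b): for \emph{any} simple graph $G$ on $n$ vertices with $\ew(G)=w$ and \emph{any} single-qubit unitaries $U_1,\dots,U_n$, the locally rotated graph state $\bigotimes_i U_i\ket{G}$ can be classically simulated---evaluating output probabilities exactly (hence to any multiplicative precision) and sampling from the output distribution---in time $2^{O(w)}\,\mathrm{poly}(n)$, by contracting the tensor network of $\ket{G}$ along a branch/rank decomposition witnessing the entanglement width \cite{Nest2006,Markov2008}. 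Consequently, a lower bound on the classical simulation time of a family of locally rotated $k$--regular graph states converts directly into a lower bound on their entanglement width: if simulation provably requires time $T(n)$, then $\ew(G)=\Omega(\log T(n))$.

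First I would establish the superlogarithmic bound. Take the family from \cref{abstract thm:hardness}(a), for which estimating output probabilities to constant multiplicative error in the relevant local product basis is \sP-hard under a polynomial-time reduction. If this family had $\ew(G)=O(\log n)$, the simulation above would run in $\mathrm{poly}(n)$ time, placing \sP\ inside \P\ (hence \BPP), so that $\P^{\sP}=\BPP$, contradicting the hypothesis $\BPP\subsetneq\P^{\sP}$. Hence $\ew(G)=\omega(\log n)$, strengthening \cref{abstract thm:hardness}(b) from ``at least logarithmic'' to superlogarithmic under this conjecture.

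Next I would make the reduction fine-grained. The \sP-hardness in \cref{abstract thm:hardness} ultimately encodes a \sP-complete problem (e.g.\ \#3-SAT, or the evaluation of a verification circuit) on an instance of size $m$ into a locally rotated $k$--regular graph state on $n=O(m^a)$ qubits, where $a$ is governed by the footprint of the measurement-based computation simulating the verifier. Combining $2^{O(w)}\mathrm{poly}(n)$ simulation with the counting version of the exponential-time hypothesis---which follows from ETH and forbids $2^{o(m)}$-time algorithms for the \sP-complete problem---yields $2^{O(w)}\mathrm{poly}(n)=2^{\Omega(m)}=2^{\Omega(n^{1/a})}$, hence $\ew(G)=\Omega(n^{1/a})=\Omega(n^\delta)$ with $\delta=1/a>0$. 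Under SETH, a tighter reduction---of essentially quadratic blow-up ($a=2$), as one expects from running a width-$\sqrt{n}$ computation for $\sqrt{n}$ steps on a grid-like resource---together with the near-optimal $2^{(1-o(1))m}$ lower bound for the SETH-hard problem upgrades this to $\ew(G)=\Omega(n^{1/2})$.

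The main obstacle is not the first, purely complexity-theoretic step but the bookkeeping in the last one. The simulation lemma must be stated with the running-time exponent controlled by the \emph{entanglement width} (equivalently rank-width), not treewidth, and must cover both sampling and multiplicative probability estimation through the same decomposition with the local rotations $U_i$ absorbed at the physical legs at no extra cost---routine but worth checking. More delicate is ensuring the hardness construction of \cref{abstract thm:hardness} can be (re-)engineered so that its instance-size overhead matches the claimed exponents; in particular the $\Omega(n^{1/2})$ bound hinges on exhibiting a quasi-quadratic reduction from a SETH-hard problem into the $k$--regular graph-state family, which is the step where the construction must be pushed hardest.
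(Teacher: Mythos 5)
Your proposal is correct and follows essentially the same route as the paper: the superlogarithmic bound is obtained exactly as in the paper's Corollary on entanglement width, by noting that $\ew(\ket G)=O(\log n)$ would let the $\mathrm{poly}(n,2^{\ew(\ket G)})$ tree-tensor-network simulation of Ref.~\cite{Nest2007a} solve the \sP-hard probability-estimation task in polynomial time, contradicting $\BPP\subsetneq\P^{\sP}$. Your fine-grained ETH/SETH bookkeeping also matches the paper's, which likewise derives the $\Omega(n^{1/2})$ bound from the fact that the hard instances reduce to an $\Omega(\sqrt n)\times\Omega(\sqrt n)$ grid encoding a \sP-hard probability on $\Omega(\sqrt n)$ logical qubits, so that the $2^{(1-o(1))m}$ SETH lower bound with $m=\Omega(\sqrt n)$ forces $\ew(\ket G)=\Omega(n^{1/2})$.
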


\begin{figure}[b]
    \centering
    \includegraphics{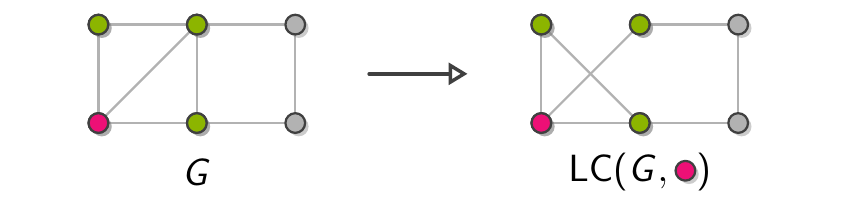}
    \caption{To perform local complementation $LC(G, a)$ of a graph $G$ with respect to vertex $a$ (pink), we take the complement of the subgraph comprising the neighbors of the pink vertex (green). }
    \label{neighbourhood flip}
\end{figure}

Let us note that our hardness results---while stated for the worst case---are in fact also valid on average over the local rotations via worst-to-average case reductions \cite{Bouland2018,Movassagh2019,Krovi2022}. 
Together, our results completely characterize the classical simulability of locally rotated regular graph states as a function of the regularity parameter in terms of both sampling and computing probabilities. \\

\begin{figure*}[]
    \includegraphics{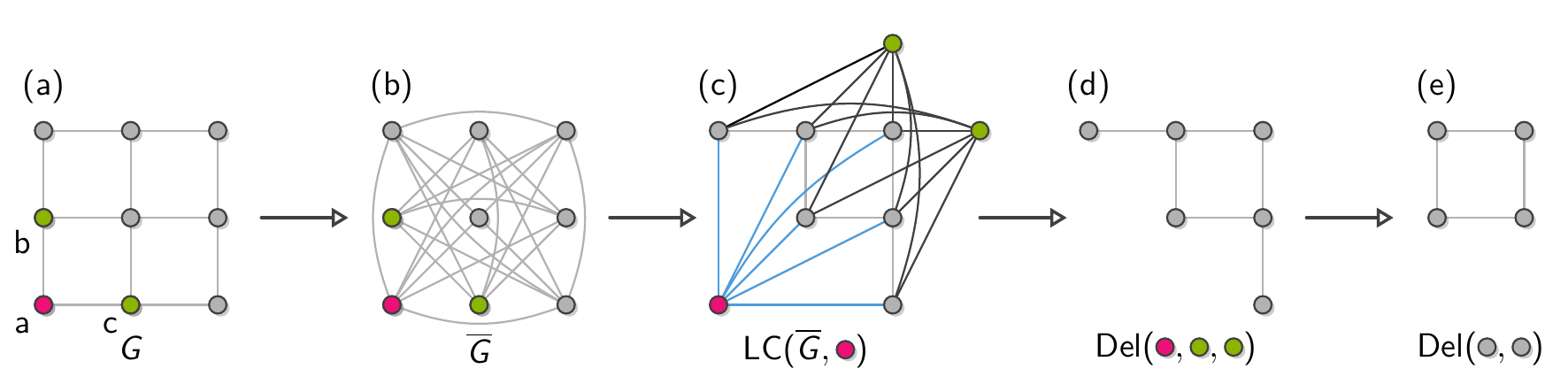}
    \caption{A visual proof that the complement of a grid graph is a resource state for MBQC. (a)~A $3$\,$\times$\,$3$ grid graph $G$. Consider (b)~$\overline{G}$---the complement of $G$. (c)~Apply a local complementation to vertex $a$. (d)~Delete vertices $b$ and $c$. (e)~Delete some of the gray vertices to finally reach a $2$\,$\times$\,$2$ grid graph.
    \label{fig:duality theorem}
    }
\end{figure*}

\noindent \textit{Proof of easiness results.}---In order to prove our easiness results, we utilize connections between entanglement width and classical simulations of graph states. Let us denote the entanglement width of a graph $G$ by $\ew(\ket{G})$; see Refs.~\cite{Nest2006,SM} for the precise definition. 

First, note that for $k$\,$\in$\,$\{1, 2, n$\,$-$\,$3, n$\,$-$\,$2, n$\,$-$\,$1 \}$, $\ew(\ket{G})$ is a constant for every $G$\,$\in$\,$\mathcal{G}_k$, where $\mathcal{G}_k$ is the set of all $k$--regular graphs. 
To see this, we make use of the relations of the entanglement width of a graph state $\ket G$ to width measures of the underlying graph $G$. 
In particular, the entanglement width is equal to the rank width of the underlying graph for graph states, {
and furthermore 
it can be related to the tree width and clique width of $G$  \cite{Nest2006}}. A refresher of these measures and their inter-relations are given in Section $F$ of the Supplemental Material~\cite{SM}.
All of these width measures express how ``tree-like'' the graph is from different perspectives. 
$1$-- and $2$--regular graphs have bounded tree width, which implies that they have bounded rank width and therefore also bounded entanglement width. Additionally, rank width, and hence, entanglement width, satisfy a duality property: if it is bounded for a graph $G$, it is also bounded for the complement $\overline G$ of $G$ \cite{Courcelle2000,Oum2017}. This fact allows us to argue that $(n$\,$-$\,$3)$-- and $(n$\,$-$\,$2)$--regular graphs have bounded entanglement width. 

Qualitatively, graph states with low entanglement width are efficiently simulable via tensor network simulation methods by the technique of Ref.~\cite{Nest2007a}. 
For a graph $G$, the idea is to construct a tree-tensor-network decomposition of a graph state $\ket{G}$. 
This takes time $\text{poly}(n, 2^{\ew(\ket{G})})$. 
Given this decomposition, and using techniques of Refs.~\cite{Markov2008,Shi2006,Nest2007a}, one can compute any output probability under any set of local rotations.
Additionally, one can also sample from the resulting output distributions.  \\

\noindent \textit{Hamming weight symmetry for the complete graph.} For the complete graph---i.e, the $(n$\,$-$\,$1)$--regular graph---we construct a new recursive algorithm that allow us to simulate arbitrary single-qubit product measurements. 

Specifically, our approach relies on an inherent symmetry  of the complete graph: the fact that any output probability of the complete graph on $n$ vertices has a Hamming weight symmetry---it can be written as a linear combination of {$n$\,$+$\,$1$ many terms}, one for each Hamming weight, such that each of them is efficiently computable. Using this fact, we design a recursion tree and show how to traverse it in polynomial time. The rigorous proof is given in Section $B$ of the Supplemental Material~\cite{SM} , along with a description of the recursion tree. 

While it is known that the output probabilities of the complete graph can be computed efficiently \cite{Nest2006,Markham2007,Hein2006}, to the best of our knowledge, our approach is novel and might have applications elsewhere to prove easiness, especially in problems having a Hamming weight symmetry. Some recent works have also used this symmetry to devise classical algorithms for quantum simulation \cite{Bravyi2020,Bringewatt2020}.
\\

\noindent \textit{Proof of hardness for $3$\,$\leq$\,$k$\,$\leq$\,$n/2$.} In order to prove our hardness results, we make use of the fact that certain graph states are resources for MBQC. 
Using Aaronson's result that $\postBQP=\PP$ \cite{Aaronson2005}, the output probabilities of a resource state for MBQC with local rotations are \sP-hard to compute \cite{Raussendorf2002,Schuch2007a,Fujii2017,Bermejo-Vega2018}. 
Then, using Stockmeyer's theorem, it is not possible to efficiently sample from their output distribution unless the polynomial hierarchy collapses \cite{Stockmeyer1983}; see \cite{Hangleiter2022} for an overview of this argument. 
In particular, this is true for the square lattice and the hexagonal lattice \cite{Raussendorf2003}.
{

Furthermore, we exploit the fact that certain single-qubit Clifford operations on a graph state $\ket G$, with classical communication and standard basis measurements, result in \emph{vertex deletion} and \emph{local complementation} of $G$ \cite{Dahlberg2018}. Local complementation flips the neighborhood of a vertex: connected vertices in the neighborhood are disconnected, and any two disconnected vertices are joined by an edge. This operation is illustrated in \cref{neighbourhood flip}.
It is known that if we can transform a parent graph $G$ to a hexagonal or grid graph by vertex deletion and local complementation, then $\ket G$ is a universal resource for MBQC and hence hard to simulate \cite{Nest2006,Fujii2017}. 
}

Our construction starts from the observation that hexagonal and square lattices with closed boundary conditions on the torus are, respectively, $3$-- and $4$--regular graphs. 
These are universal resources for MBQC, since we can reach planar hexagonal and square lattices by vertex deletion: we ``cut'' the torus open, see \cref{fig:hard graph}(a). Consequently, computing the output probabilities of $G$ in an arbitrary local basis is $\sP$-hard for $3$-- and $4$--regular graphs.

For graphs with higher regularity, we need more involved constructions. We reverse-engineer $k$--regular resources by starting from the $4$--regular resource state---the square lattice on a torus---and boost it up to $k$--regularity by adding gadgets, which can be removed by local complementation or vertex deletion. 
Note that, in most cases, the desired regularity scales with the total number of vertices (for example, consider an $n$--vertex, $n/2$--regular graph), and adding each new gadget might change the total number of vertices. So, adding too many gadgets might not always be a good construction strategy.

In light of this, starting from a grid graph on a torus, i.e., an $n$ vertex, $4$--regular graph, we add just a single gadget, namely another grid graph on a torus, see \cref{fig:hard graph}(b). We then judiciously connect the two grid graphs in a way such that every vertex is $k$--regular. It is nontrivial to argue that such a connection pattern even exists. We prove that it does using the Gale-Ryser theorem \cite{Gale1957,Ryser2009,Krause1996}, for every $4$\,$<$\,$k$\,$\leq$\,$n/2$. The Gale-Ryser theorem is constructive. Thus, our constructions prove that there exists an explicit $n$--vertex, $k$--regular graph $G$ such that computing the output probabilities of $G$ in an arbitrary local basis is \sP-hard, for every $4 \leq k \leq n/2$. \\

\noindent \textit{The duality property.}---Finally, we show that the complexity of simulating graphs with low regularity and graphs with high regularity satisfies a duality property.
Specifically, we prove that the complement of an $n$\,$\times$\,$n$ hexagonal graph or grid graph is a resource state for MBQC.
Hence, the corresponding $(n$\,$-$\,$4)$--regular graph state is universal under postselection, and simulating product measurements of it is classically intractable. 

To see this, consider an $n$\,$\times$\,$n$ grid graph $G$, and mark three vertices---a corner vertex of degree $2$, and its two neighbors, see \cref{fig:duality theorem}. 
Denote these vertices by $a$ (the pink vertex), $b$, and $c$ (the green vertices). Now, in the complement graph $\overline{G}$, apply \emph{local complementation} to vertex $a$, that is, we take the complement of the neighborhood of $a$. 
Then delete the vertices $a,b,c$, and subsequently, delete all the vertices in the same row and column as $a$ in $G$. 
We are left with an $(n$\,$-$\,$1)$\,$\times$\,$(n$\,$-$\,$1)$ grid graph, which is a resource state for MBQC. 
The intuition behind this is the following:
since the grid graph has bounded degree, for a vertex in the complement of such a graph, the only neighbors that are not connected are those which were connected in the original graph. 
Hence, local complementation mostly restores the original connections apart from those of $a$ and its neighbors. 
An analogous strategy shows that the complement of an $n$\,$\times$\,$n$ hexagonal lattice is also a resource state for MBQC. \\

\noindent \textit{Proof of hardness for $n/2$\,$<$\,$k$\,$\leq$\,$n$\,$-$\,$4$.}
We now extend our hardness proof to the regime of $n/2$\,$<$\,$k$\,$<$\,$n$\,$-$\,$4$. 
The idea is to take the hard graphs we constructed for $4$\,$\leq$\,$k$\,$\leq$\,$n/2$, comprising two copies of the grid graph on the torus, and then complement those hard graphs. 
If we started with a $k$--regular graph, after complementation, we are left with an $(n$\,$-$\,$k$\,$-$\,$1)$--regular graph. 
We then delete all vertices which were part of the second grid graph in the original graph and then apply local complementation to one of the vertices and vertex deletion in the column and row of that vertex. 

As a consequence, we obtain an explicit duality of simulation complexity between the regimes of high and low regularity. 
In other words, we find that there is an explicit $n$--vertex, $k$--regular graph $G$ such that computing the output probabilities of $G$ in an arbitrary local basis is \sP-hard, for every $n/2$\,$<$\,$k$\,$\leq$\,$n$\,$-$\,$4$.

Finally, we straightforwardly obtain bounds on the entanglement width of regular graphs in the easy regime using width measures from graph theory \cite{Corneil2001,Dabrowski2019,Fellows2009,Arnborg1987,Kaminski2009,Gurski2000,Golumbic1999,Chekuri2014, Diestel2017, Courcelle2000,Oum2017}, specifically tree width, rank width, and clique width, which can be related to the entanglement width. 
\\

\noindent \textit{Outlook.}---We have completely resolved Aaronson's question for regular graph states, going significantly beyond initial results on the interplay between simulability and entanglement in Refs.~\cite{VandenNest2004, Nest2006,Nest2007a}.
An immediate follow-up problem is to characterize the interplay between entanglement and simulation complexity of more restricted, physical families of graphs such as planar or bipartite graphs.
Our gadget constructions do not obviously generalize to these more restricted cases. 
As a result, we need new techniques to prove \sP-hardness. 

More generally, we can ask: 
can Aaronson's question of which systems are classically simulable be resolved in general, or even for slightly more general setups beyond graph states? 
Beyond graph states, the entanglement width is not always related to the classical simulation complexity of the corresponding quantum states. 
Furthermore, some of the qualitative interpretations of entanglement width, which were based on these measures, break down.
It thus remains a fascinating question to identify a metric that tracks the hardness in other settings. 
It is possible that situations, similar to the one we have identified, also exist for other measures such as the stabilizer rank of a state or its negativity, but it remains open if there is a universal single \emph{physical property} that fully determines the complexity of simulating a system.
More likely, simulation complexity is always a function of several different properties. 
\\

\begin{acknowledgments}
\noindent \textit{Author contributions.}---S.G. proved the results and wrote the initial draft of the manuscript. A.D., D.H., A.G. and B.F. contributed equally in helping to develop the setting, helping with the proofs, and finalizing the manuscript.
\\

\noindent \textit{Acknowledgments.}---We thank Joe Fitzsimmons for sharing his hints regarding the recursive algorithm for complete graphs, and Misha Lavrov for sharing his hint regarding constructing the gadget of the hardness proofs. S.~G. thanks Kunal Marwaha for helpful comments about the manuscript.
We are  grateful to the Simons Institute for the Theory of Computing, where parts of this work was conducted while some of the authors were visiting the institute.
A.~D.~acknowledges funding provided by the National Science Foundation RAISE-TAQS 1839204 and Amazon Web Services, AWS Quantum Program.
The Institute for Quantum Information and Matter is an NSF Physics Frontiers Center (NSF Grant PHY-1733907).
B.F. and S.G. acknowledge support from AFOSR (FA9550-21-1-0008). A.V.G.~was supported in part by the DoE ASCR Accelerated Research in Quantum Computing program (award No.~DE-SC0020312), NSF QLCI (award No.~OMA-2120757), DoE QSA, DoE ASCR Quantum Testbed Pathfinder program (award No.~DE-SC0019040), NSF PFCQC program, AFOSR, AFOSR MURI, ARO MURI, and DARPA SAVaNT ADVENT.   
This material is based upon work partially
supported by the National Science Foundation under Grant CCF-2044923 (CAREER) and by the U.S. Department of Energy, Office of Science, National Quantum Information Science Research Centers (Q-NEXT) as well as by DOE QuantISED grant DE-SC0020360.  This research was also supported in part by the National Science Foundation under Grant No. NSF PHY-1748958.
D.H.\ acknowledges financial support from the US Department of Defense through a QuICS Hartree Fellowship. 
\vspace{7pt}
\end{acknowledgments}

%

\cleardoublepage

\onecolumngrid
\cleardoublepage
\setcounter{page}{1}
\setcounter{equation}{0}
\setcounter{footnote}{0}
\setcounter{figure}{0}
\thispagestyle{empty}
\begin{center}
\textbf{\large Supplemental Material for ``Sharp complexity phase transitions generated by entanglement''}\\
\vspace{2ex}
Soumik Ghosh, Abhinav Deshpande, Dominik Hangleiter, Alexey V. Gorshkov, and Bill Fefferman
\vspace{2ex}
\end{center}

\twocolumngrid
\renewcommand\thesection{S\arabic{section}}
\renewcommand\thesection{S\arabic{section}}
\renewcommand\thefigure{S\arabic{figure}}

\tableofcontents
\section{Setting}

In this section, we define our setup and define some conventions that we use in this work. For a graph $G$, denote the corresponding graph state as $\ket{G}$. We consider graph states on $k$--regular graphs for $k \in [n-1] \equiv \{1,2, \ldots, n-1\}$. 

\begin{definition}[$k$--regular graph]
A $k$-\emph{regular graph} is a graph in which the degree of every vertex, that is, the number of adjacent vertices, is exactly $k$.
\end{definition} 

We then consider measurements of the qubits, or in other words, the vertices of $G$ in an arbitrary single-qubit basis. 
Such a measurement is equivalent to the application of a product of arbitrary single qubit gates followed by a standard basis measurement. 

For each qubit $i \in [n] \cong V$, we parameterize the single-qubit unitary $U_i$ as
\begin{equation}
    U_i(\theta_i, \phi_i) = \begin{pmatrix}
\cos\frac{\theta_i}{2} & -\sin \frac{\theta_i}{2} \\
e^{i \phi_i} \sin \frac{\theta_i}{2} & e^{i\phi_i} \cos \frac{\theta_i}{2}
\end{pmatrix}.
\end{equation}
When the arguments of $ U_i(\theta_i, \phi_i)$ are clear from the context, we drop them and just use $U_i$. Let
\begin{equation}
\label{local rotations}
    \mathsf{U} = U_1 \otimes U_2 \otimes \cdots \otimes U_n.
\end{equation}
For fixed $G$ and $\mathsf{U}$, the probability of getting any outcome $x \in \{0, 1\}^n$ is given by
\begin{equation}
    p_x (G, \mathsf{U}) = \bigg|\langle x| \underset{i=1}{\overset{n}{\otimes}} U_i \ket{G}\bigg|^2.
\end{equation}
A closely related quantity, the \emph{probability amplitude}, is defined as
\begin{equation}
    q_x (G, \mathsf{U}) = \langle x| \underset{i=1}{\overset{n}{\otimes}} U_i \ket{G}.
\end{equation}
When the context is clear from usage, we drop either $G$ and $\mathsf{U}$ or both from our notation. 
Additionally, let us denote by $\mathcal{D}({G, \mathsf{U}})$ the following probability distribution:
\begin{equation}
    \underset{X \sim \mathcal{D}({G, U^{(l)}})}{\mathsf{Pr}}[X = x] = p_x (G, \mathsf{U}).
\end{equation}
This probability distribution is over $x$, for a fixed $G$ and $\mathsf{U}$.

Equivalently, we can view our setting in the standard circuit picture. 
In this picture, the family of quantum circuits we are going to study contain a Clifford part, which is used to construct the graph state $\ket{G}$. 
In this part, there is a first layer of Hadamard gates, followed by a sequence of controlled-$\mathsf{Z}$ gates applied on the edges of $G$.
This is followed by the layer of single-qubit gates and a standard basis measurement, as illustrated in Figure $1$ of the main text.

Throughout this work, we let the symbols $\X, \Y$, and $\Z$ denote single qubit Pauli-$\X$, Pauli-$\Y$, and Pauli-$\Z$ gates respectively. Additionally, we use  capital letters, like $U, V$ et cetera, to denote quantum gates. 

\section{Techniques}
In this section, we talk about manipulating graph states and also talk about some useful complexity theoretic results that we use in this work. First, we start with basic concepts in graph theory and how they relate to graph states. Then, we discuss how we measure entanglement in the rest of the work. Finally, we introduce some advanced concepts in graph theory and use them to find upper and lower bounds on our entanglement measure.

\subsection{Some basic concepts in graph theory}

Let us begin by briefly rehashing some well-known concepts from graph theory. 

Consider a graph $G = (V, E)$, where $V$ is the set of vertices and $E$ is the set of edges. 
All the graphs, unless otherwise stated, are simple and undirected. 
Sometimes, to avoid ambiguity, we use $V_G$ and $E_G$ to denote that the vertex and edge sets of $G$. 
We state a series of standard definitions from graph theory below. We assume some familiarity with other related notions from graph theory, which can be found in \cite{Diestel2017}. 

\begin{definition}[Adjacency matrix]
The \emph{adjacency matrix} of a graph $G$ is a $|V| \times |V|$ matrix $A$ such that
\begin{equation}
\begin{aligned}
    A_{i, j} &= 1~~~~~~\text{if}~(u, v) \in E, \\
    &= 0~~~~~~\text{otherwise.}
\end{aligned}
\end{equation}
\end{definition}
\begin{definition}[Neighborhood]
The \emph{neighborhood} of a vertex $v$ in the graph $G$ is defined as
\begin{equation}
    N_v(G) = \{u : (u, v) \in E\}.
\end{equation}
\end{definition}
\noindent For two graphs $G$ and $H$ such that $H$ is a subgraph of $G$, the operation $K = G \setminus H$ is defined as
\begin{equation}
    K = (V_G, ~E_G \setminus E_H).
\end{equation}

\begin{definition}[Complete graph]
A \emph{complete graph} is a graph in which every pair of distinct vertices is connected by an edge. 
\end{definition}

\begin{definition}[Complement of a graph]
\label{complement}
The \emph{complement} of a graph $G$ with $n$ vertices, denoted by $\overline{G}$, is defined as
\begin{equation}
    \overline{G} = G_{\text{complete}} \setminus G, 
\end{equation}
where $G_{\text{complete}}$ is the complete graph on $n$ vertices.
\end{definition}
\begin{definition}[Subgraph]
A \emph{subgraph} of a graph $G = (V, E)$ is another graph whose vertex and edge sets are subsets of $V$ and $E$ respectively.
\end{definition}
\begin{definition}[Tree]
A \emph{tree} is a connected, acyclic graph.
\end{definition}
\begin{definition}[Binary tree]
A \emph{binary tree} is a tree where each vertex is connected to at most three other vertices.
\end{definition}

\begin{definition}[Bipartite graph]
A \emph{bipartite graph} is a graph whose vertices can be divided into two disjoint sets $U$ and $V$
such that every edge connects a vertex in $U$ to one in $V$.
\end{definition}

\begin{definition}[Complete bipartite graph]
A \emph{complete bipartite graph} is a bipartite graph whose vertices can be divided into two disjoint sets $U$ and $V$ such that 
\begin{enumerate}[(a)]
    \item every edge connects a vertex in $U$ to one in $V$, and 
    \item every vertex of $U$ is connected to every vertex of $V$.
\end{enumerate}
\end{definition}
A complete bipartite graph is denoted by $K_{p \times q}$, where $p$ and $q$ are the cardinalities of the sets $U$ and $V$, respectively.

\subsection{Operations on graph states}
\label{operations}
We frequently make use of a graph-theoretic operation called \emph{vertex deletion}. We define it as follows.

\begin{definition}[Vertex deletion]
For a graph $G = (V, E)$, the operation \emph{vertex deletion}, when applied to the vertex $v$ of $G$, produces a new graph $H$ such that
\begin{equation}
    H = (V \setminus v, E \setminus S),
\end{equation}
where $S$ is the set of all the edges incident to $v$.
\end{definition}
Applying vertex deletion to a vertex $v$ of $G$ is equivalent to measuring the corresponding qubit of $\ket{G}$ in the $\mathsf{Z}$ basis. Indeed, let $P_v^{(\mathsf{Z}, \pm)}$ be the two Pauli projectors onto the $\mathsf{Z}$ basis when we measure vertex $v$. Let $N(v)$ be the neighborhood of $v$ and let us delete the vertex $v$ to get the graph $H$. Then,
\begin{equation}
\label{gates}
\begin{aligned}
    P_v^{(\mathsf{Z}, +)} \ket{G} &= \frac{1}{\sqrt{2}} \ket{0} \ket{H}, \\
    P_v^{(\mathsf{Z}, -)} \ket{G} &= \frac{1}{\sqrt{2}} \ket{1} \prod_{u \in N_v(G)} Z_u \ket{H}.
\end{aligned}
\end{equation}

Note that the gates $Z_u$ in equation \eqref{gates} are single-qubit gates. Hence, by an appropriate choice of a last layer of local rotations, we can implement a vertex deletion. 

Before defining the next graph operation, let us define the symmetric difference operator between two sets.
\begin{definition}[Symmetric difference]
The symmetric difference operator between two sets $A$ and $B$, denoted by $\Delta$, is defined as
\begin{equation}
    A ~\Delta ~B = (B \setminus A) \cup (A \setminus B).
\end{equation}
\end{definition}
\begin{definition}[Local complementation]
For a graph $G = (V, E)$, a \emph{local complementation} $\tau_v$ on the vertex $v$ flips the neighborhood of $v$: two vertices which were previously connected (in the neighborhood of $v$) get disconnected and two vertices which were previously disconnected get connected. The neighborhood of a vertex $u$ in the graph $\tau_v(G)$ is given by
\begin{equation}
\begin{aligned}
    N_u(\tau_v(G)) &= N_u(G)~ \Delta ~(N_v \setminus {u})~~~~~~\text{if}~(u, v) \in E, \\
    &= N_u(G)~~~~~~\text{otherwise.}
\end{aligned}
\end{equation}
\end{definition}
It is known that applying a local complementation to a vertex $v$ in $G$ is equivalent to applying a sequence of single qubit Clifford unitaries
\begin{equation}
\label{local complementation}
    U = \exp\left(-i \frac{\pi}{4} X_v\right) \prod_{u \in N_v} \exp\left(i \frac{\pi}{4} Z_u\right)
\end{equation}
to the graph state $\ket{G}$. That is,
\begin{equation}
    \ket{\tau_v(G)} = U_v \ket{G}.
\end{equation}
Local complementation is illustrated in Figure $4$ of the main text.

\subsection{Some useful results from complexity theory}
\label{complexity theory results}
Before stating our technical lemmas, we state a few results from complexity theory for convenience. Variants of these statements and proofs have appeared in \cite{Aaronson2013a,FeffermanWilliamJason2014,Bouland2018,Hein2006,Raussendorf2002}, so we only state the lemmas without proofs. 

\begin{lemma}
\label{hardness grid graph}
    Consider an $n$--vertex graph $R_n$ such that the corresponding graph state $\ket{R_n}$ is a resource state for measurement based quantum computation. Consider local rotations $\mathsf{U}$, as defined in equation \eqref{local rotations}. Then, computing $p_x(R, \mathsf{U})$ is $\# \P$-hard in the worst case over $\mathsf{U}$, for any $x \in \{0, 1\}^n$, up to constant multiplicative error.
\end{lemma}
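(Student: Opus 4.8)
\noindent\emph{Proof proposal.} The plan is to run the standard postselection reduction: I would show that a constant‑multiplicative‑error estimate of $p_x(R_n,\mathsf{U})$ suffices to decide an arbitrary language in $\PP$ in polynomial time, and since $\P^{\PP}=\P^{\#\P}$ this makes the estimation problem $\#\P$‑hard under polynomial‑time Turing reductions. The two ingredients are (i) Aaronson's theorem $\postBQP=\PP$ \cite{Aaronson2005}, so that every $L\in\PP$ is decided by a polynomial‑size quantum circuit with an input register, one postselection qubit $p$, and one output qubit $o$, whose postselected acceptance probability $\Pr[o=1\mid p=0]$ is $\ge 2/3$ on inputs in $L$ and $\le 1/3$ otherwise (and, by amplification, can be pushed arbitrarily close to $1$ versus $0$); and (ii) the hypothesis that $\ket{R_n}$ is a universal resource for MBQC, so that this circuit can be run by adaptive single‑qubit measurements on $\ket{R_n}$.

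First I would remove the adaptivity: replacing the Pauli byproduct corrections by postselection on the ``no‑correction'' outcome of every pattern measurement turns the measurement pattern into a fixed, non‑adaptive sequence of single‑qubit measurements, at the price of conditioning on a fixed outcome string on all qubits other than $o$ (encoding the pattern measurements, the input preparation, and $p=0$). Since a non‑adaptive single‑qubit measurement in the basis $\{U_i\ket 0,U_i\ket 1\}$ is a standard‑basis measurement preceded by a single‑qubit rotation, collecting these rotations into $\mathsf{U}$ as in \eqref{local rotations} the computation becomes: prepare $\mathsf{U}\ket{R_n}$ and measure in the standard basis. For the given target $x$, I would choose the bases so that the fixed outcome on the non‑output qubits matches the corresponding bits of $x$, and let $x'$ be $x$ with the output bit flipped; note that $p_{x'}(R_n,\mathsf{U})=p_x(R_n,\mathsf{U}')$, where $\mathsf{U}'$ is $\mathsf{U}$ composed with a bit flip on the output qubit, so both quantities are obtainable from the estimator for target $x$. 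By construction
\begin{equation}
\frac{p_x(R_n,\mathsf{U})}{p_x(R_n,\mathsf{U})+p_{x'}(R_n,\mathsf{U})}=\Pr[o=1\mid p=0],
\end{equation}
which is bounded away from $1/2$ on both sides.

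Given an estimator returning $\tilde p\in[(1-\epsilon)p_y,(1+\epsilon)p_y]$ for a fixed constant $\epsilon<1$, I would query it at $(\mathsf{U},x)$ and at $(\mathsf{U}',x)$, form the same ratio from the two outputs, and compare it to $1/2$; a short calculation shows that after amplifying the $\postBQP$ gap this correctly decides whether the input lies in $L$. It is exactly the multiplicative (rather than additive) nature of the error that makes this work, since $p_x$ here is only exponentially large in its inverse.

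The step I expect to be the main obstacle is the reduction ``MBQC with feedforward $\Rightarrow$ non‑adaptive postselected standard‑basis computation'': one has to check carefully that postselecting each pattern measurement on its Pauli‑trivial outcome removes all byproduct operators (commuting the residual Paulis out to $o$ and absorbing them into the choice of output basis), that the postselection event has strictly positive but at most exponentially small probability so that multiplicative precision is meaningful, and that the resulting single‑qubit bases are of the admissible parameterized form in \eqref{local rotations}. Since the claim is worst‑case over $\mathsf{U}$, we may freely use the $\mathsf{U}$ produced by the reduction, and the ``for any $x$'' clause is handled by the basis relabeling above; the lemma then follows. Variants of exactly this chain of reasoning appear in \cite{Aaronson2013a,Raussendorf2002,Bouland2018}.
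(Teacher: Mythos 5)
Your proposal is correct and follows essentially the same route as the paper, which simply cites the standard postselection argument ($\postBQP=\PP$ plus MBQC universality of $\ket{R_n}$, with adaptivity removed by postselecting on trivial byproduct outcomes, and the constant-multiplicative-error estimator used to decide a $\PP$-complete language via the ratio $p_x/(p_x+p_{x'})$) from Refs.~\cite{Raussendorf2002,Raussendorf2003,Fujii2017,Bermejo-Vega2018}. The only difference is that you spell out the details the paper delegates to those references; the substance is identical.
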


The proof follows from \cite{Raussendorf2002,Raussendorf2003,Fujii2017,Bermejo-Vega2018}. For example, considering graphs with $n$ vertices, this fact holds for an $\sqrt{n} \times \sqrt{n}$ grid graph. The idea is that resource states can do universal quantum computing under post-selection. By standard arguments sketched in these papers, that implies computing probabilities is $\# \P$-hard.

\begin{lemma}
\label{resource states}
Let $R$ be an $n$--vertex graph such that $\ket{R_n}$ is a resource state for measurement based quantum computation. Let $T$ be a graph such that $R$ can be reached from $T$ by vertex deletion and local complementation. Then, computing $p_x(T, \mathsf{U})$ is $\# \P$-hard in the worst case over $\mathsf{U}$, for every $x \in \{0, 1\}^n$, up to constant multiplicative error.
\end{lemma}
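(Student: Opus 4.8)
\noindent The plan is to give a polynomial-time reduction from the problem of computing output probabilities of $R$---which is $\#\P$-hard by Lemma~\ref{hardness grid graph} since $\ket{R_n}$ is a resource state---to the problem of computing output probabilities of $T$. The reduction rests on the two facts recalled above: a vertex deletion acts on a graph state as a standard-basis ($\Z$) measurement (Eq.~\eqref{gates}), and a local complementation acts as a product of single-qubit Clifford gates (Eq.~\eqref{local complementation}). The crucial point is that if we postselect every $\Z$ measurement on the ``$+$'' outcome, then by Eq.~\eqref{gates} the resulting state is \emph{exactly} the graph state of the deleted graph tensored with $\ket{0}$ on the removed qubit, with no Pauli byproduct operators.

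First I would fix a sequence of vertex deletions and local complementations witnessing that $R$ is reachable from $T$, and let $D \subseteq V_T$ denote the set of deleted vertices, so that $V_R = V_T \setminus D$. Walking through the sequence, one checks that once a vertex has been deleted it is never again the centre of, nor a neighbour of the centre of, a subsequent local complementation; hence no single-qubit gate ever acts on a qubit after that qubit has been measured out. Since all operations involved are single-qubit, they commute across sites, and after postselecting every deletion on ``$+$'' the whole sequence collapses into
\begin{equation}
  \ket{R} \otimes \Bigl(\bigotimes_{j \in D}\ket{0}_j\Bigr)
  \;=\; 2^{|D|/2}\,\Bigl(\bigotimes_{j \in V_R} M_j\Bigr)\Bigl(\bigotimes_{j \in D}\proj{0}_j\, V_j\Bigr)\ket{T},
\end{equation}
where each $M_j$ (for $j \in V_R$) and $V_j$ (for $j \in D$) is a single-qubit unitary determined by the fixed sequence, and the prefactor $2^{|D|/2}$ collects the normalizations from Eq.~\eqref{gates}.

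Next I would absorb these single-qubit factors into the measurement basis. Given any local rotation $\mathsf{U}' = \bigotimes_{j \in V_R} U_j'$ and any string $x \in \{0,1\}^{V_R}$ for the $R$ experiment, define a local rotation $\mathsf{V} = \bigotimes_{j \in V_T}\tilde V_j$ on the $T$ experiment by $\tilde V_j = V_j$ for $j \in D$, $\tilde V_j = U_j' M_j$ for $j \in V_R$ with $x_j = 0$, and $\tilde V_j = \X\,U_j' M_j$ for $j \in V_R$ with $x_j = 1$. Contracting the previous display with $\bra{x}$ on $V_R$ and $\bra{0}$ on each qubit of $D$, and using $\bra{0}\proj{0} = \bra{0}$ together with $\bra{0}\X = \bra{1}$, gives $q_x(R,\mathsf{U}') = 2^{|D|/2}\,q_{0^n}(T,\mathsf{V})$ and hence
\begin{equation}
  p_x(R,\mathsf{U}') \;=\; 2^{|D|}\,p_{0^n}(T,\mathsf{V}),
\end{equation}
with $n = |V_T|$. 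Since the prefactor is exact, a constant-multiplicative-error estimate of $p_{0^n}(T,\mathsf{V})$ yields one of $p_x(R,\mathsf{U}')$ with the same multiplicative constant, and $\mathsf{V}$ is computable in polynomial time from $(R, x, \mathsf{U}')$. As computing $p_x(R,\mathsf{U}')$ to constant multiplicative error is $\#\P$-hard in the worst case over $\mathsf{U}'$, so is computing $p_{0^n}(T,\cdot)$ in the worst case over $\mathsf{V}$; and the identity $p_y(T,\mathsf{U}) = p_{0^n}\!\bigl(T,(\bigotimes_{j:\,y_j=1}\X_j)\,\mathsf{U}\bigr)$ extends the hardness to every target string $y \in \{0,1\}^n$.

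I expect the main obstacle to be the bookkeeping in the first two steps: verifying rigorously that the interleaved Clifford unitaries and postselected projectors really do factor into the clean product form above---in particular that deleted qubits are untouched by later local complementations, and that ``$+$''-postselection removes all byproduct operators---and confirming that the chosen $T \to R$ sequence has length polynomial in $n$ so that $\mathsf{V}$ is polynomial-time computable. Both hold transparently for every explicit transformation used in this paper.
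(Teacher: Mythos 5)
Your proof is correct and is exactly the argument the paper intends: the paper's own justification of this lemma is a one-line appeal (citing Fujii--Morimae and Dahlberg--Wehner) to the fact that vertex deletion and local complementation are single-qubit operations, so that $\ket{T}$ is a locally rotated resource state, and your reduction simply makes that folding-into-the-last-layer argument explicit, including the exact $2^{|D|}$ normalization and the $\X$-conjugation step that moves from $0^n$ to arbitrary outcome strings. The only caveat you flag---that the transformation sequence must be polynomial-length and efficiently computable for the reduction to be uniform---is indeed satisfied by every explicit construction the lemma is applied to in the paper.
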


The proof follows from \cite{Fujii2017,Dahlberg2018,Dahlberg2020}. The intuition is that both vertex deletion and local complementation correspond to single-qubit operations, as we saw in \cref{operations}. Hence, $R$ is equivalent to a locally rotated grid graph, which is a known resource state.

\begin{lemma}[\cite{Stockmeyer1983}]
\label{lemma: sampling}
Let $C$ be a quantum circuit and let 
\begin{equation}
    p_x = |\langle x| C |0^n \rangle|^2,
\end{equation}
for $x \in \{0, 1\}^n$. Consider a distribution $\mathcal{D}$ given by
\begin{equation}
    \underset{X \sim \mathcal{D}}{\mathsf{Pr}}[X = x] = p_x.
\end{equation}
Note that this is a distribution over $x$ for a fixed $C$. Then, if there is a polynomial time classical algorithm to sample from $\mathcal{D}$, then there is a $\BPP^{\NP}$ algorithm to estimate $p_x$, upto constant multiplicative error, for a random choice of $x \in \{0, 1\}^n$. 

Consequently, if computing every $p_x$ is $\# \P$-hard upto constant multiplicative precision, no polynomial time classical algorithm exists to sample from $\mathcal{D}$, assuming the $\PH$ does not collapse to $\BPP^{\NP}$.
\end{lemma}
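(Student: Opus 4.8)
The plan is to treat this as a direct instance of Stockmeyer's approximate-counting theorem: I would (i) rewrite $p_x$ as a $\sP$-quantity normalised by a power of two, (ii) apply Stockmeyer to approximate it in $\BPP^{\NP}$, and (iii) combine with the assumed $\sP$-hardness and Toda's theorem to derive the collapse.

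For (i): let $A$ be the hypothesised polynomial-time classical sampler. On the fixed circuit $C$ it consumes $m=\mathrm{poly}(n)$ uniformly random bits $r\in\{0,1\}^m$ and outputs $A(r)\in\{0,1\}^n$ with $\Pr_r[A(r)=x]=p_x$, so that $p_x = 2^{-m}\cdot\#\{r\in\{0,1\}^m : A(r)=x\}$, and the cardinality on the right is the value of a $\sP$ function because the predicate ``$A(r)=x$'' is decidable in deterministic polynomial time given $r$, $x$, and the description of $C$. For (ii): Stockmeyer's theorem provides, for any $\sP$ function $f$ and any polynomial $q$, a procedure in the function analogue $\mathrm{FBPP}^{\NP}$ of $\BPP^{\NP}$ that on input $x$ outputs, with probability at least $2/3$, a value in $[(1-1/q(n))f(x),(1+1/q(n))f(x)]$; dividing by $2^m$ yields a $(1\pm 1/q(n))$-multiplicative — hence in particular a constant-multiplicative — estimate of $p_x$ in $\BPP^{\NP}$. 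In the exact-sampling case this in fact works for \emph{every} $x$; the ``random $x$'' phrasing is what is needed in the approximate-sampling variant mentioned in the main text, where $A$ only reproduces $\mathcal{D}$ up to small total-variation distance $\delta$ and one additionally invokes Markov's inequality (plus a median-of-estimates amplification and an anticoncentration hypothesis) to conclude that $A$'s output probability equals $(1\pm o(1))p_x$ for all but an $o(1)$ fraction of $x\in\{0,1\}^n$.

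For (iii): suppose both that a polynomial-time classical sampler for $\mathcal{D}$ exists and that computing every $p_x$ to constant multiplicative error is $\sP$-hard (the latter is exactly what Lemmas~\ref{hardness grid graph} and \ref{resource states} establish for resource-state circuits, via Aaronson's $\postBQP=\PP$). Then step (ii) solves every $\sP$-hard problem in $\BPP^{\NP}$, i.e.\ $\P^{\sP}\subseteq\BPP^{\NP}$, and therefore $\PH\subseteq\P^{\sP}\subseteq\BPP^{\NP}$ by Toda's theorem; since $\BPP^{\NP}$ sits inside a fixed level of the hierarchy (e.g.\ $\Sigma_3^{p}$, by a relativised Sipser--G\'{a}cs--Lautemann argument), the polynomial hierarchy collapses to $\BPP^{\NP}$, contradicting the hypothesis, so no such sampler can exist. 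The reduction from sampling to counting accepting random strings in step (i) is immediate and the collapse bookkeeping in step (iii) is routine, so the only genuinely nontrivial ingredient — and the main thing to get right — is Stockmeyer's theorem and, subordinate to it, matching the $1+1/\mathrm{poly}$ precision it delivers against the fixed constant-factor precision for which the output probabilities are known to be $\sP$-hard; in the approximate-sampling version the further delicate point is the anticoncentration assumption needed to convert an ``average-$x$'' estimate into a worst-case-flavoured hardness statement.
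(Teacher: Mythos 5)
Your proposal is correct and follows exactly the standard Stockmeyer argument that the paper itself invokes (it gives no independent proof, deferring to Stockmeyer's counting theorem and the exposition in Bouland et al.): express the sampler's output probability as a $\sP$ count over random seeds, approximate it to $1\pm 1/\mathrm{poly}$ multiplicative error in $\BPP^{\NP}$, and combine with the assumed $\sP$-hardness and Toda's theorem to collapse $\PH$ to $\BPP^{\NP}\subseteq\Sigma_3^{p}$. Your remarks on the exact-versus-approximate sampling distinction and on matching Stockmeyer's precision to the constant-multiplicative hardness are also consistent with how the paper treats these points.
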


The proof follows from Stockmeyer's counting theorem \cite{Stockmeyer1983} and can also be found in \cite{Bouland2018}. \cref{lemma: sampling} states that assuming the $\PH$ does not collapse to $\BPP^{\NP}$, one can rule out the existence of classical exact samplers which sample from the output distribution of $C$. Through additional appropriate conjectures about the additive error hardness of computing the output probabilities of $C$, one can also extend the result of \cref{lemma: sampling} to rule out classical approximate samplers, up to an appropriate error in total variation distance, using standard techniques which are also outlined in \cite{Bouland2018,Bouland2022,Kondo2022}. Qualitatively, the proof says that the two popular characterizations of simulation of quantum circuits---sampling and estimation---are interlinked. The presence of a sampler in $\BPP$ implies the presence of an estimator in $\BPP^{\NP}$.

\subsection{Measuring entanglement entropy}
Let $|\psi\rangle$ be an $n$-qubit pure state and let $(\A, \B)$ be a partition of the qubits and let the corresponding Hilbert spaces be $\mathcal{H}_\A$ and $\mathcal{H}_\B$. Consider a Schmidt decomposition of $|\psi\rangle$ as follows:
\begin{equation}
    |\psi\rangle = \sum_{i=1}^{\mathsf{min}\left(2^{|\A|}, 2^{|\B|}\right)} \sqrt{s_i} \ket{u_i} \ket{v_i},
\end{equation}
where $\bigg\{ \ket{u_i} :i \in \{1, 2, \ldots, 2^{\mathcal{H}_|\A|}\} \bigg\}$ and $\bigg\{ \ket{v_i} :i \in \{1, 2, \ldots, 2^{\mathcal{H}_|\B|}\} \bigg\}$ are two sets of orthonormal basis vectors of the Hilbert spaces $\mathcal{H}_\A$ and $\mathcal{H}_\B$, respectively. Here are $s_i$ are nonnegative real numbers. The Schmidt rank of a quantum state is given by the number of non-zero $s_i$-s in the Schmidt decomposition. 

Then the von Neumann bipartite entanglement entropy of $\ket{\psi}$, with respect to the bipartition $(\A, \B)$ is given by
\begin{equation}
    S(\ket{\psi}_{\A, \B})= \sum_{i=1}^{\mathsf{min}\left(2^{|\A|}, ~2^{|\B|}\right)} -s_i \log s_i.
\end{equation}
There are other measures of entanglement, like the logarithm of the Schmidt rank of a quantum state, with respect to a bipartition. Schmidt rank width is a natural generalization of the Schmidt rank of a quantum state. 

\begin{definition}[Schmidt rank width]
Consider an $n$-qubit state $\ket{\psi}$. Consider trees $T$ with exactly $n$ leaves where the maximum degree of each vertex is $3$. Any edge $e$ in the tree splits the leaves of the tree into two sets $\A_e$ and $\B_e$, depending on the two connected components of $T - e$, where $T - e$ is a graph with edge $e$ removed. Then, the Schmidt rank width of $\ket{\psi}$ is given by
\begin{equation}
    \srw 
    (\ket{\psi}) = \underset{T}{\mathsf{min}}~\underset{e \in T}{\mathsf{max}}~ \log_2~ \mathsf{r}_{\A_e, \B_e},
\end{equation}
where $r_{\A_e, \B_e}$ is the Schmidt rank with respect to the bipartition $\mathsf{r}_{\A_e, \B_e}$. 
\end{definition}

Qualitatively, it measures how large a linear combination we need, in the worst case, to write down the quantum state $\ket{\psi}$ for the best ``tree-like'' decomposition of the state. The more ``succinct'' and more ``tree-like'' the state is, the less is the resource overhead in simulating the state by tree tensor networks \cite{Markov2008, Shi2006, Nest2007a}. 
Therefore, Schmidt-rank width is a measure of the worst-case simulation cost, provided we can figure out an optimal tree decomposition. 

\subsubsection{Entanglement width}
In this paper, we measure entanglement entropy in terms of \emph{entanglement width}, defined in \cite{Nest2006}. 
\begin{definition}[Entanglement width]
Consider an $n$-qubit state $\ket{\psi}$. Consider trees $T$ with exactly $n$ leaves where the maximum degree of each vertex is $3$. Any edge $e$ in the tree splits $n$ qubits into two sets, $\A_e$ and $\B_e$, depending on the two connected components of $T - e$, where $T - e$ is a graph with edge $e$ removed. Then the entanglement width of $\ket{\psi}$ is given by
\begin{equation}
\label{ewidth_def}
    \ew(\ket{\psi}) = \underset{T}{\mathsf{min}}~\underset{e \in T}{\mathsf{max}}~ S\left(\ket{\psi}_{\A_e, \B_e}\right).
\end{equation}
\end{definition}
The qualitative interpretation of this measure is the same as that of the Schmidt rank width: it measures how entangled the state is across ``tree-like'' bipartitions. 

\subsection{Relation to classical simulations}
Entanglement width is important because the runtime of tensor network simulations of a graph state depends on its entanglement width. To formalize this, let us state the following lemma.
\begin{lemma}[\cite{Nest2007a}]
\label{entanglement width and simulations}
Let $\ket{G}$ be an $n$ qubit graph state. Then, using tree-tensor networks, $p_x (G, \mathsf{U})$ can be exactly computed in $\text{poly}\left(n, 2^{\ew(\ket{G})}\right)$ time, for any choice of $\mathsf{U}$.

Additionally, using tree-tensor networks, the output distribution $\mathcal{D}(G, \mathsf{U})$ can be sampled from in  $\text{poly}\left(n, 2^{\ew(\ket{G})}\right)$ time, for any choice of $\mathsf{U}$.
\end{lemma}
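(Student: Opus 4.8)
The plan is to follow the tree-tensor-network (TTN) strategy of Ref.~\cite{Nest2007a}, organized in four steps: (i) reduce the optimization in \eqref{ewidth_def} to a purely graph-theoretic quantity; (ii) build an explicit TTN for $\ket{G}$ of small bond dimension; (iii) contract it to obtain $p_x(G,\mathsf{U})$; and (iv) iterate a marginal computation on the doubled network to sample from $\mathcal{D}(G,\mathsf{U})$. For step (i), recall that a graph state is a stabilizer state, so across any bipartition $(\mathcal{A},\mathcal{B})$ the reduced state is maximally mixed on a subspace of dimension $2^{\mathrm{rank}_{\mathbb{F}_2}\Gamma_{\mathcal{A}\mathcal{B}}}$, where $\Gamma_{\mathcal{A}\mathcal{B}}$ is the off-diagonal block of the adjacency matrix of $G$~\cite{Hein2006}. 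Hence the Schmidt spectrum across every cut is flat and $S(\ket{G}_{\mathcal{A},\mathcal{B}}) = \log_2 \mathsf{r}_{\mathcal{A},\mathcal{B}}$ is an integer; consequently $\ew(\ket{G}) = \srw(\ket{G})$, which equals the rank width $\rw(G)$. This identification turns \eqref{ewidth_def} into a tractable, graph-theoretic optimization.

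For step (ii), I would invoke the fixed-parameter / approximation algorithms for rank width \cite{Oum2017,Nest2007a} to compute, in time $\text{poly}(n,2^{\ew(\ket{G})})$, a tree $T$ with $n$ leaves and internal degree at most $3$ such that every edge cut of $T$ has $\mathbb{F}_2$-rank $O(\ew(\ket{G}))$. Cutting $\ket{G}$ along $T$ yields the TTN: each leaf holds one physical qubit, each internal node holds a tensor of size at most $(2^{O(\ew(\ket{G}))})^{3}$, and the Schmidt bases along the nested cuts serve as the virtual legs. Because $\ket{G}$ is a stabilizer state, these Schmidt decompositions are computable from the adjacency matrix within the same $\text{poly}(n,2^{\ew(\ket{G})})$ budget, so the TTN is produced explicitly; appending the single-qubit gates $U_i$ to the physical legs changes no bond dimension.

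For step (iii), to evaluate $p_x(G,\mathsf{U})$ I would project the $i$-th physical leg onto $\bra{x_i}U_i$ and contract the resulting tree of tensors from the leaves toward an arbitrary root. Each elementary contraction involves $O(1)$ tensors of size $\text{poly}(2^{\ew(\ket{G})})$ and there are $O(n)$ of them, so this runs in $\text{poly}(n,2^{\ew(\ket{G})})$ time and returns the amplitude $q_x(G,\mathsf{U})$, whose squared modulus is $p_x(G,\mathsf{U})$. For step (iv) I would form the doubled TTN representing $\mathsf{U}\ket{G}\bra{G}\mathsf{U}^{\dagger}$, with bond dimensions at most $2^{2\,\ew(\ket{G})}$; for $j=1,\dots,n$, tracing out legs $j{+}1,\dots,n$ and sandwiching legs $1,\dots,j$ with the bits already sampled leaves a tree network that contracts in $\text{poly}(n,2^{\ew(\ket{G})})$ time and yields the conditional probability of $x_j$, from which $x_j$ is drawn. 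After $n$ rounds the chain rule gives an exact sample from $\mathcal{D}(G,\mathsf{U})$, still within $\text{poly}(n,2^{\ew(\ket{G})})$.

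The main obstacle is step (ii): one must not merely know that a width-$O(\ew(\ket{G}))$ tree \emph{exists} but actually \emph{construct} one---and assemble the corresponding tensors---in time polynomial in $n$ and $2^{\ew(\ket{G})}$ rather than in $n^{f(\ew(\ket{G}))}$. This is precisely where the stabilizer structure of $\ket{G}$ and the fixed-parameter rank-width machinery are indispensable; once a good tree decomposition and its TTN are available, steps (iii) and (iv) are standard tree-tensor-network contractions.
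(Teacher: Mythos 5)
The paper does not prove this lemma at all---it is imported verbatim from Ref.~\cite{Nest2007a} (with the remark that an efficient construction of the optimal tree decomposition is implicit), so there is no in-paper argument to compare against. Your four-step reconstruction---identifying $\ew(\ket{G})$ with the rank width via the flat Schmidt spectrum of stabilizer states, constructing a near-optimal rank decomposition with the fixed-parameter machinery of \cite{Oum2017}, contracting the resulting tree tensor network for $p_x(G,\mathsf{U})$, and sampling via conditional marginals on the doubled network---is a faithful and correct rendering of the cited reference's approach, and you correctly flag the constructive decomposition step as the only nontrivial ingredient.
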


An immediate consequence of Lemma \ref{entanglement width and simulations} is that, if the entanglement width of a graph state is upper bounded by $\mathcal{O}(\log n)$, classical simulations are efficient. Also note that implicit in Lemma \ref{entanglement width and simulations} is the fact that we can efficiently find the optimal tree decomposition for $\ket{G}$ in polynomial time, when the entanglement width is logarithmically bounded.

\subsection{Width-measures in graph theory}
\label{appendix: width measures}
Here, we define the notions of tree width, clique width, and rank width. We had merely referenced these measures in the main text: we look at them in detail below.
\subsubsection{Tree width}
First, we look at a measure called ``tree width.'' The definitions are taken from \cite{Diestel2017}. Intuitively, tree width measures how ``similar'' a graph is to a tree. The tree width of any tree is $1$. 
\begin{definition}[Tree decomposition]
A \emph{tree decomposition} of a graph $G = (V, E)$ is a tree $T$ such that the following properties hold.
\begin{enumerate}[(a)]
    \item Each vertex $i$ of $T$ is labelled by a subset $B_i \subset V$. Each such vertex is called a ``bag''.
    \item The two vertices corresponding to every edge in $E$ are both in at least one $B_i$.
    \item For every vertex $u \in V$, the subtree of $T$ consisting of all the ``bags'' containing $u$ is connected.
\end{enumerate}
\end{definition}
\begin{definition}
The \emph{width} of a tree decomposition $T$ is given by $\underset{i}{\mathsf{max}} |B_i| -1$.
\end{definition}

\begin{definition}[Tree width]
The \emph{tree width} of a graph $G$, denoted by $\tw(G)$, is the minimum number $t$ such that there exists a tree decomposition with width at most $t$.
\end{definition}

Many tasks that are $\NP$-hard in general are efficiently solvable in graphs with bounded tree width \cite{Chekuri2014}. Graph states with logarithmically bounded tree width can be efficiently simulated under arbitrary local rotations \cite{Markov2008} if we know the corresponding tree decomposition.

Although the tree width is a very useful concept in these regards, it can be unbounded in ``dense'' graphs, i.e.\ graphs that have lots of edges. However, these graphs can otherwise have a lot of symmetry and many problems can still be easy to solve in these graphs. For example, the complete graph on $n$ vertices has a tree width of $n-1$, but a lot of problems are easy when restricted to just the complete graph. Furthermore, deciding whether the tree width of a graph is at most $m$ for a given integer $m$ is $\NP$-complete \cite{Arnborg1987}.

With that motivation in mind, some generalizations of tree width become necessary.

\subsubsection{Clique width}
 The definition is taken from \cite{Courcelle2000}.
\begin{definition}[Clique width]
The \emph{clique width} of a graph $G$, denoted by $\cw(G)$, is defined as the minimum number of labels needed to construct $G$ with the following operations:
\begin{enumerate}[(a)]
\item Creating a new vertex $v$ that is labeled by an integer $i$.
\item If there are two graphs $G$ and $H$ which are already constructed, taking a disjoint union of these two graphs to create a new graph.
\item Creating a new edge between all vertices of label $i$ and all vertices of label $j$.
\item Changing the label of a vertex.
\end{enumerate}
\end{definition}
Note that the clique width is a generalization of tree width, in the sense that graphs with bounded tree width also have bounded clique width. However, the clique width can remain bounded even when the tree width blows up. For example, the clique width of the complete graph is $2$. In that sense, it is a ``more useful'' metric than the tree width. However, deciding whether the clique width of a graph is at most $m$ for a given integer $m$ is $\NP$-complete \cite{Fellows2009}. So, practically, using this measure might be difficult.

We now define a generalization of clique width that does not have this problem.

\subsubsection{Rank width}
 The definitions are taken from \cite{Oum2017}.
\begin{definition}[Cut rank]
Let $M$ be the $|V| \times |V|$ adjacency matrix of the graph $G$. The \emph{cut rank} of $A \subseteq V$ is the rank of the submatrix of $M$ with row labels corresponding to $A$ and column labels corresponding to $V \setminus A$.
\end{definition}
\begin{definition}[Rank decomposition]
A \emph{rank decomposition} of $G = (V,E)$ is a pair $(T, L)$ where $T$ is a binary tree and $L$ is a bijection from $V$ to the leaves of the tree.
\end{definition}
For a particular rank decomposition $(T, L)$ of a graph $G$, any edge $e$ in the tree $T$ splits $V$ into two parts, $\A_e$ and $\B_e$, corresponding to the two connected components of $T - e$, where $T - e$ is the tree $T$ with edge $e$ removed. The \emph{width} of an edge $e$ is the cut-rank of $\A_e$ (which is equivalent to the cut-rank of $\B_e$). 
\begin{definition}
The \emph{width of the rank decomposition} $(T, L)$ is the maximum width of an edge in $T$. 
\end{definition}
\begin{definition}[Rank width]
The \emph{rank width} of a graph $G$, denoted by $\rw(G)$, is the minimum width of a rank decomposition of $G$.
\end{definition}

Rank width generalizes clique width, in the sense that graphs with bounded clique width also have bounded rank width. Additionally, there is a polynomial time algorithm to decide whether the rank width of a graph is at most $k$, for a given integer $k$ \cite{Courcelle2000}.

\subsubsection{Inter-relations between width measures}
Here, we state a few inter-relations between width measures. Proofs can be found in \cite{Courcelle2000, Gurski2000, Oum2017}.

First, we upper bound clique width in terms of tree width.
\begin{lemma}[\cite{Corneil2001}]
\label{clique width}
For a graph $G$,
\begin{equation}
    \cw(G) \leq 3\cdot2^{\tw(G) - 1}.
\end{equation}
\end{lemma}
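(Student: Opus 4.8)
The plan is to build a clique-width expression for $G$ by following a tree decomposition of optimal width $t = \tw(G)$ from the leaves up to a chosen root, maintaining a labelling that records exactly the information needed to add edges later, and then bounding the number of labels by a careful bookkeeping argument. First I would fix a tree decomposition $(T,(B_i)_i)$ of $G$ with $|B_i|\le t+1$ for every node $i$ and, using the standard normalization, assume it is \emph{nice}: every node is a leaf, an introduce node ($B_i = B_j\cup\{v\}$ for its unique child $j$, with $v\notin B_j$), a forget node ($B_i = B_j\setminus\{v\}$), or a join node (two children $j,j'$ with $B_i = B_j = B_{j'}$). Root $T$ arbitrarily. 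For a vertex $v$, the nodes whose bag contains $v$ form a connected subtree; let $\mathrm{top}(v)$ be its node closest to the root, and create the vertex $v$ in the expression exactly once, when processing $\mathrm{top}(v)$.

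Processing the nodes of $T$ bottom-up yields, at each node $i$, an expression for the subgraph of $G$ induced on the set $W_i$ of vertices created so far. The key structural fact, which is immediate from the axioms of a tree decomposition, is that every edge of $G$ with both endpoints in $W_i$ has already been added, so that the only pending edges are those joining a vertex of $W_i$ to a vertex of $B_i\setminus W_i$ (a bag vertex not yet created). The labels are used to track precisely this: for each already-created vertex $w$ I keep a label encoding which slot of the at most $t+1$ slots of the current bag $B_i$ the vertex $w$ occupies (if $w\in B_i$), together with the set of slots of $B_i$ to which $w$ still needs to be joined (relevant both for $w\in B_i$ and for buried $w\notin B_i$). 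Updating this labelling through the four node types is routine: at an introduce node I relabel the newly created vertex and fire the edge-creation operations dictated by its neighbours' slots in the new bag; at a forget node the forgotten slot is cleared; at a join node I take the disjoint union of the two child expressions --- legitimate precisely because each vertex was created only once, so the two sides share no vertex --- followed by relabelling to reconcile the two slot assignments. This produces a valid clique-width expression; the final task is to bound its label count.

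The main obstacle is exactly this counting step: a naive bound is of the order $2^{t+1}+t+1$, and obtaining the clean constant $3\cdot 2^{t-1}$ requires being economical about which bag vertices genuinely need to be distinguished --- exploiting that the pending-neighbour set of a buried vertex lies inside a single bag minus that vertex, hence has at most $t$ elements, and that various a priori possible labels can never co-occur --- and organizing the slot-identity information and the pending-edge information so that they share one label space of size $3\cdot 2^{t-1}$. A secondary subtlety is checking that the single-creation convention interacts correctly with the join nodes of the nice tree decomposition, where the same vertex appears in both children's bags; once the counting is in place one concludes $\cw(G)\le 3\cdot 2^{\tw(G)-1}$.
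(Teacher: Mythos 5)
The paper does not prove this lemma at all: it is imported verbatim from \textcite{Corneil2001} (``Proofs can be found in\dots''), so there is no internal argument to compare yours against. On its own terms, your proposal correctly reconstructs the high-level strategy of that reference --- root a nice tree decomposition of width $t=\tw(G)$, build a clique-width expression bottom-up, create each vertex once at its topmost bag, and let labels record both a vertex's slot in the current bag and the set of bag slots to which it still owes edges. That framework does deliver \emph{some} bound of the form $\cw(G)\le f(\tw(G))$, which is in fact all this paper ever uses (the lemma is only applied with $\tw(G)\in\{1,2\}$ to conclude constant clique width).

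As a proof of the inequality actually stated, however, there is a genuine gap, and you have located it yourself: the labelling you describe, counted directly, needs on the order of $(t+1)\cdot 2^{t}$ labels (a slot identity among $t+1$ possibilities paired with a pending-adjacency subset of the other $t$ slots, plus labels for buried vertices whose pending sets are subsets of a bag), and the entire reduction from that to $3\cdot 2^{t-1}$ is deferred to an unspecified ``economical'' reorganization of the label space. That reduction is not routine bookkeeping --- it is the substantive content of Corneil and Rotics' theorem, requiring an analysis of which slot/pending-set combinations can genuinely co-occur during the introduce, forget, and join steps so that labels can be shared. Note that the bound is tight already at $t=1$, where it gives $\cw\le 3$, the exact clique width of trees, so there is no slack to absorb a looser count. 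Without that argument your construction proves a weaker inequality than the one claimed; if you want a self-contained proof of the lemma as stated, the counting step is the part that must be written out.
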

Next, we lower bound clique width in terms of the tree width.
\begin{lemma}[\cite{Gurski2000}]
\label{tree width and clique width}
Let $G$ be an $n$--vertex graph such that it does not have the complete bipartite graph $K_{t \times t}$ as a subgraph, for some value of $t$. Then,
\begin{equation}
    \tw(G) \leq 3\cdot \cw(G)\cdot(t-1) - 1.
\end{equation}
\end{lemma}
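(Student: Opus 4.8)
The plan is to derive the bound by turning an optimal clique-width expression of $G$ into a tree decomposition of small width. Fix a $k$-expression for $G$ with $k = \cw(G)$, using exactly one vertex-creation operation per vertex, and let $\mathcal{T}$ be its syntax tree: the leaves are vertex creations and the internal nodes are disjoint unions (binary), relabelings (unary), or edge-creation operations $\eta_{a,b}$ (unary). For a node $x$ of $\mathcal{T}$, let $G_x = (V_x, E_x)$ be the graph built by the subexpression rooted at $x$. Since the clique-width operations only ever add vertices and edges, $G_x$ is a \emph{subgraph} of $G$; hence $G_x$ is itself free of $K_{t \times t}$. At $x$, the vertex set $V_x$ is partitioned into at most $k$ label classes $V_x^1,\dots,V_x^k$. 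The single structural fact that drives everything is the following: \textbf{every edge-creation node $\eta_{a,b}$ joins two label classes, one of which has at most $t-1$ vertices.} Indeed $\eta_{a,b}$ inserts into $G_x$ all edges between the disjoint sets $V_x^a$ and $V_x^b$, producing a complete bipartite subgraph on $|V_x^a|$ and $|V_x^b|$ vertices; since $G_x$ contains no $K_{t \times t}$ we must have $\min(|V_x^a|, |V_x^b|) \le t-1$.

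From here I would construct a tree decomposition $(\mathcal{T}', \{B_y\}_y)$ whose underlying tree $\mathcal{T}'$ is $\mathcal{T}$ with some edges subdivided, so that bag contents can be changed one vertex at a time. The bag $B_x$ at (the node corresponding to) $x$ would consist, for each label class $V_x^i$, of: all of $V_x^i$ when $|V_x^i| \le t-1$; and otherwise a bounded set of "interface" vertices of $V_x^i$ together with the ($\le t-1$)-sized small-side partner sets of the joins that are incident to, or lie above, $x$ and will touch class $i$. The structural fact forces each per-class contribution to have size $O(t-1)$, so each bag would hold at most $k$ such groups; at a binary (disjoint-union) node one must simultaneously retain the interfaces of the two children while forming that of the parent, which is what produces the factor $3$. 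A careful accounting should then give $|B_x| \le 3k(t-1)$, i.e.\ width at most $3k(t-1) - 1$.

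It remains to verify the three tree-decomposition axioms. Covering of vertices is immediate, since each vertex lies in the bag at its own creation leaf. Connectedness reduces to checking that each vertex, and each "small-side" set that we carry around, is handed along a contiguous path of $\mathcal{T}'$, which is routine once the bag rule is pinned down. The main obstacle is the edge axiom: an edge $uv$ created at some node $\eta_{a,b}$ with $u \in V_x^a$, $|V_x^a| \le t-1$, and $v \in V_x^b$ must end up together in a common bag even when $V_x^b$ is far too large to store. The way to handle this is to propagate the small class $V_x^a$ along the path of $\mathcal{T}'$ from $x$ down to $v$'s creation leaf (which exists because $v \in V_x$), so that near that leaf some bag holds both $v$ and all of $V_x^a \ni u$; symmetrically, when $V_x^b$ is the small side, record it as part of the partner data of the ancestors where $u$'s class is formed.

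The technical heart of the argument — and the step I expect to be hardest — is making all of these propagation paths mutually consistent: one must show that through any fixed node of $\mathcal{T}'$ only $O(k)$ such small sets are ever routed at once, so that the $3k(t-1)$ bag bound is not violated, and that the union of nodes holding any given vertex (or carried set) is connected. This is exactly the place where the hypothesis that $G$ has no $K_{t \times t}$ subgraph is used, via the structural fact, to guarantee that every set we propagate has size at most $t-1$; without it the same construction would only give a bound in terms of the sizes of the joined label classes, which can be $\Theta(n)$. Once this bookkeeping is discharged, $(\mathcal{T}', \{B_y\})$ is a tree decomposition of $G$ of width at most $3k(t-1)-1 = 3\cdot\cw(G)\cdot(t-1)-1$, which is the claimed inequality.
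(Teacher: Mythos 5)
The paper does not actually prove this lemma; it is quoted from Gurski and Wanke \cite{Gurski2000}, so your attempt has to be judged against that cited argument. Your opening observation is exactly the key lemma of that proof: in a $k$-expression each $\eta_{a,b}$ creates a complete bipartite subgraph between the two label classes, every intermediate graph $G_x$ is a subgraph of $G$ because clique-width operations never delete anything, and hence $K_{t\times t}$-freeness forces $\min(\abs{V_x^a},\abs{V_x^b})\le t-1$. Up to that point you are reconstructing the right proof, and the overall shape (bags indexed by syntax-tree nodes, factor $3$ from the binary union nodes) is also the right one.

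What follows, however, is a plan rather than a proof, and the step you yourself flag as the ``technical heart'' conceals a difficulty that your stated structural fact does not resolve. A single label class $i$ can be the large side of many successive edge operations $\eta_{i,j_1},\eta_{i,j_2},\dots,\eta_{i,j_m}$ at different ancestors, each contributing a distinct small side $S_1,\dots,S_m$ of size up to $t-1$. Your edge-axiom scheme routes each $S_r$ down to the creation leaves of the vertices of class $i$, so a priori some bag must contain $S_1\cup\dots\cup S_m$, which your per-operation bound only limits to $m(t-1)$, and $m$ can be $\Theta(n)$. Thus ``a careful accounting should then give $\abs{B_x}\le 3k(t-1)$'' is precisely the claim that still needs a proof. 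The missing idea is a second application of $K_{t\times t}$-freeness: once a class contains at least $t$ vertices, every small side joined to it thereafter lies in the common neighborhood of those $t$ fixed vertices, so the union of all small sides accumulated after the class becomes large has total size at most $t-1$ (otherwise $G$ would contain $K_{t\times t}$). With that observation the per-class partner data is genuinely bounded by $t-1$ and the bookkeeping you defer can be carried out; without it, the construction as described does not yield the claimed width bound. The connectedness axiom for the propagated sets also remains to be verified once the bag rule is fixed, though that part is routine.
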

Now, we relate rank width and clique width.
\begin{lemma}[\cite{Oum2017}]
\label{rank width}
For a graph $G$,
\begin{equation}
  \rw(G) \leq \cw(G) \leq 2^{\rw(G)+1 }-1.
  \end{equation}
\end{lemma}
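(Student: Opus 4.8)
Both inequalities are obtained by translating directly between the two kinds of decompositions, and I would prove them separately.

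For the left inequality $\rw(G)\le\cw(G)$, I would start from an optimal clique-width expression for $G$ with $k=\cw(G)$ labels and extract a rank decomposition of width at most $k$. The parse tree of the expression has the vertex-creation operations as leaves, binary disjoint-union nodes, and unary relabel/edge-creation nodes; contracting the unary nodes (and the degenerate root) yields a subcubic tree $T$ whose leaves are in bijection with $V$, i.e.\ the skeleton $(T,L)$ of a rank decomposition. For an edge $e$ of $T$ inducing the bipartition $(\A_e,\B_e)$, the set $\A_e$ is exactly the vertex set of the subexpression hanging below $e$. The crucial observation is that every edge-creation operation performed \emph{after} that subexpression treats all vertices of $\A_e$ carrying a common label identically, so in the final graph $G$ any two such vertices have the same neighbourhood inside $\B_e$. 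Hence the submatrix of the adjacency matrix with rows indexed by $\A_e$ and columns by $\B_e$ has at most $k$ distinct rows, so its $\mathbb{F}_2$-rank --- which is the width of $e$ --- is at most $k$. Maximising over $e$ gives $\rw(G)\le k$.

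For the right inequality $\cw(G)\le 2^{\rw(G)+1}-1$, I would go the other way: take an optimal rank decomposition $(T,L)$ of width $k=\rw(G)$, root $T$ at an arbitrary edge, and build a clique-width expression by processing $T$ bottom-up. At a node $t$ let $\A_t\subseteq V$ be the set of leaves below $t$; the rows of the submatrix $M[\A_t,\,V\setminus \A_t]$ span an $\mathbb{F}_2$-subspace of dimension at most $k$, so fixing a basis assigns to each $v\in\A_t$ a coordinate vector in $\mathbb{F}_2^{\,k}$, giving at most $2^k$ ``types''. The label of $v$ records this type. When two subtrees at children $t_1,t_2$ of $t$ are combined, one first adds the missing edges between $\A_{t_1}$ and $\A_{t_2}$: the adjacency of $u\in\A_{t_1}$ with $v\in\A_{t_2}$ depends only on the two types, because $u$'s row over $V\setminus\A_{t_1}$ restricts to its row over $\A_{t_2}$, so the required edge set is a union of complete-bipartite joins between type classes, each implementable by one edge-creation operation; one then relabels every vertex of $\A_t$ by its type in the cut at $t$, which is again a function of the data already carried (since $\A_t\subseteq\A_s$ for an ancestor $s$ forces the row of $v$ over $V\setminus\A_s$ to be a restriction of its row over $V\setminus\A_t$). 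At the root $V\setminus\A_{\mathrm{root}}=\emptyset$, all types collapse, and the expression has produced exactly $G$.

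The routine parts --- contracting unary nodes, checking that a rank-decomposition skeleton is genuinely subcubic, and realising a complete-bipartite join by a single operation --- are standard. The step I expect to require the most care is the label bookkeeping in the second half: one must verify that a vertex's type at node $t$ really determines its type at every ancestor cut and its adjacencies to every later-merged block, and that the label count is $2^{k+1}-1$ rather than $2^{k}$. The extra factor of roughly two comes from the fact that at a merge one must momentarily distinguish the block coming from $t_1$ from the block coming from $t_2$ (their types live in the same label set $\{1,\dots,2^k\}$) before relabelling by the type at $t$; the ``$-1$'' is the small optimisation that the all-zero type requires no joins and can be shared between the two incoming blocks. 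By contrast, the first inequality is essentially immediate once the ``same label $\Rightarrow$ same outside-neighbourhood'' invariant is noticed.
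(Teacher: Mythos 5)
Your proof is correct, but note that the paper does not actually prove this lemma: it is imported as a black-box citation (to Oum's survey, ultimately to Oum--Seymour's ``Approximating clique-width and branch-width''), and what you have written is essentially a faithful reconstruction of that literature proof rather than an alternative to anything in the paper. Both halves check out: for $\rw(G)\le\cw(G)$, the invariant that two vertices sharing a label in a subexpression have identical neighbourhoods outside it is exactly right, and it gives at most $\cw(G)$ distinct rows in $M[\A_e,\B_e]$, hence cut-rank at most $\cw(G)$ at every edge of the contracted parse tree. For $\cw(G)\le 2^{\rw(G)+1}-1$, your type bookkeeping is sound: since $\A_{t_2}\subseteq V\setminus\A_{t_1}$, the adjacency of $u\in\A_{t_1}$ to $v\in\A_{t_2}$ is determined by the pair of types (and consistently so from either side), and since $V\setminus\A_t\subseteq V\setminus\A_{t_1}$, the type at $t$ is a function of the type at $t_1$, so the relabelling is well defined; the count $2(2^k-1)+1=2^{k+1}-1$ with the shared all-zero class is the standard accounting and is legitimate because zero-type vertices never participate in any later join and remain zero-type at every ancestor. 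The only places where a referee would ask for more detail are the degenerate cases in contracting the parse tree to a subcubic tree (e.g.\ very small graphs) and the explicit statement that a rank-$r$ matrix over $\mathbb{F}_2$ has at most $2^r$ distinct rows, both of which are routine.
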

Now, we state a relation between the clique width of a graph and its complement.
\begin{lemma}[\cite{Courcelle2000}]
\label{complement}
For a graph $G$,
\begin{equation}
    \frac{1}{2}  \cdot \cw(G) \leq \cw(\overline{G}) \leq 2 \cdot \cw(G),
\end{equation}
where $\overline{G}$ is the complement of $G$.
\end{lemma}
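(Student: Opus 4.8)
The two displayed inequalities are equivalent: applying the right-hand bound to $\overline{G}$ and using $\overline{\overline{G}}=G$ yields $\cw(G)\le 2\,\cw(\overline{G})$, which is the left-hand bound. So the plan is to prove only $\cw(\overline{G})\le 2\,\cw(G)$. The naive route through rank width is too lossy: one does get $\rw(\overline{G})\le\rw(G)+1$ (over $\mathbb{F}_2$ the cut-rank submatrices of $\overline{G}$ are those of $G$ plus the all-ones matrix, so their ranks grow by at most one, using the same rank decomposition), but combining this with the estimate $\rw(G)\le\cw(G)\le 2^{\rw(G)+1}-1$ only gives the exponential bound $\cw(\overline{G})\le 2^{\cw(G)+2}-1$. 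To obtain the factor $2$ one must work with the clique-width expression itself, and the cleanest way I know is to pass through NLC-width.

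Recall that an NLC-$k$-expression builds a $[k]$-labeled graph from three operations: introducing a single labeled vertex; relabeling all vertices by a map $R:[k]\to[k]$; and, given two already-built labeled graphs $H_1,H_2$ and a relation $S\subseteq[k]^2$, forming their disjoint union together with all edges $\{u,v\}$ such that $u\in V(H_1)$, $v\in V(H_2)$, and $(\mathrm{lab}(u),\mathrm{lab}(v))\in S$; write $\mathsf{nlcw}(G)$ for the least such $k$. The key observation I would use is that complementation is transparent in this model: given an NLC-$k$-expression $t$ for $G$, replace every union relation $S$ by $[k]^2\setminus S$ and leave the vertex introductions and relabelings untouched. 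A one-line induction on the parse tree then shows the modified expression builds $\overline{G}$ with the same $k$ labels --- within each side the edge set is complemented by the induction hypothesis (a union operation adds no edges inside $H_i$, so the induced subgraphs agree), and a cross pair $u\in V(H_1),v\in V(H_2)$ becomes an edge exactly when $(\mathrm{lab}(u),\mathrm{lab}(v))\notin S$. Hence $\mathsf{nlcw}(\overline{G})=\mathsf{nlcw}(G)$, and it remains to chain $\cw(\overline{G})\le 2\,\mathsf{nlcw}(\overline{G})=2\,\mathsf{nlcw}(G)\le 2\,\cw(G)$ via the standard sandwich $\mathsf{nlcw}(G)\le\cw(G)\le 2\,\mathsf{nlcw}(G)$.

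The real work --- and the step I expect to be the main obstacle if one wants a self-contained proof --- is establishing that sandwich. The upper half $\cw(G)\le 2\,\mathsf{nlcw}(G)$ is the easy direction: maintain the invariant that each NLC subexpression is simulated by a clique-width expression using labels in one copy of $[k]$ that projects to the NLC labeling, and simulate a union-with-relation $S$ by relabeling the second side into a disjoint second copy of $[k]$ (using the $2k$ available labels), taking the clique-width disjoint union, applying $\eta_{i,j'}$ for each $(i,j)\in S$ between the two copies, and then relabeling the second copy back. The lower half $\mathsf{nlcw}(G)\le\cw(G)$ is slightly more delicate because clique width has a stand-alone edge-creation operation $\eta_{i,j}$ with no direct NLC analogue; the standard fix keeps each vertex's ``origin side'' encoded in its label so that such edge creations can be deferred into the union operation. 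This same difficulty is exactly why I would \emph{not} try to double labels directly on a $k$-expression for $G$: an $\eta_{i,j}$ appearing there would have to turn into an edge \emph{deletion} in the expression for $\overline{G}$, which clique width cannot express --- routing through NLC-width, whose union already bundles edge addition with the union, is what sidesteps this. For the write-up I would cite the NLC-width/clique-width relations as known rather than re-derive them, and the only genuinely new content is the transparent-complementation observation above.
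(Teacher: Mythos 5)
The paper does not prove this lemma; it is stated as background and cited to the literature (``Proofs can be found in \cite{Courcelle2000, Gurski2000, Oum2017}''), so there is no in-paper proof to compare against. Your argument is correct and is essentially the standard proof from those references: the two displayed inequalities are indeed equivalent under $G\mapsto\overline{G}$, NLC-width is invariant under complementation (replace each union relation $S$ by $[k]^2\setminus S$, with the induction going through exactly as you describe), and the known sandwich $\mathsf{nlcw}(G)\le\cw(G)\le 2\,\mathsf{nlcw}(G)$ yields the factor of $2$; your remarks on why the direct clique-width and rank-width routes fail are also accurate.
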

Finally, we state one more lemma about how operations like vertex deletion do not increase the clique width. This helps us in independently analyzing the clique width of some of our hard graphs.
\begin{lemma}[\cite{Courcelle2000}]
\label{vertex deletion}
Let $G$ be a graph and $H$ be obtained from $G$ by a sequence of vertex deletions. Then,
\begin{equation}
    \cw(H) \leq \cw(G). 
\end{equation}
\end{lemma}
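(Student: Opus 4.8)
The plan is to prove the single-deletion case, $\cw(G \setminus v) \le \cw(G)$ for an arbitrary vertex $v$, and then iterate it along the sequence of deletions that produces $H$. First I would fix an optimal construction of $G$ using $k = \cw(G)$ labels via the operations (a)--(d) from the definition of clique width, and view it as a rooted parse tree: the leaves are the vertex-creation steps (a), so that each vertex of $G$ is created at exactly one leaf, and the internal nodes are disjoint unions (b), edge additions (c), and relabelings (d). The key claim, proved by structural induction on this tree, is that for every subexpression $s$ --- building some labeled graph $G_s$ --- and every vertex $w$ of $G_s$, there is a construction of $G_s \setminus w$ (with the induced vertex-labeling) that still uses at most $k$ labels.

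For the inductive step, the relabeling and edge-addition cases are the easy ones: if the outermost operation of $s$ is a relabeling or an edge addition applied to a subexpression $s_1$, and $w$ is a vertex of $G_{s_1}$, I would invoke the inductive hypothesis on $s_1$ and then reapply the same operation. This is legitimate because deleting $w$ commutes with both operations: relabeling all label-$i$ vertices, or adding all edges between label-$i$ and label-$j$ vertices, has exactly the same effect on $V(G_s) \setminus \{w\}$ whether or not $w$ has already been removed, since deletion only removes $w$ together with its incident edges. If $s$ is a disjoint union of $s_1$ and $s_2$ with $w$ in $G_{s_1}$, then either $G_{s_1}$ is just the isolated vertex $w$, in which case $s_2$ already builds $G_s \setminus w$, or else I apply the inductive hypothesis to $s_1$ and take the disjoint union of the resulting construction with $s_2$. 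In every case the label set is unchanged, so at most $k$ labels are used; taking $s$ to be the whole expression and $w = v$ gives $\cw(G \setminus v) \le k$, and iterating over all vertices removed to form $H$ gives $\cw(H) \le \cw(G)$.

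I do not anticipate a genuine obstacle: the argument is a purely syntactic surgery on clique-width expressions, using nothing beyond the defining operations (a)--(d). The only points that need a little care are the bookkeeping that each vertex is created at a unique leaf --- so that at each step ``the subexpression containing $w$'' is well defined --- the explicit check that edge addition and relabeling commute with single-vertex deletion, and the degenerate case in which a branch of a disjoint union shrinks to the empty graph, which is handled simply by discarding that branch rather than recursing into it (equivalently, by adopting the convention that the empty graph has an empty construction).
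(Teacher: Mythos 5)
Your argument is correct. Note, however, that the paper does not actually prove this statement: it is quoted as a known fact with a citation to the clique-width literature (Courcelle et al.), so there is no in-paper proof to compare against. What you have written is essentially the standard argument for the stronger, well-known fact that clique width is monotone under taking \emph{induced} subgraphs: fix an optimal $k$-expression, and show by structural induction on its parse tree that deleting one vertex can be absorbed by local surgery --- dropping the leaf that creates the vertex, discarding a branch of a disjoint union if it becomes empty, and pushing the deletion through edge-additions $\eta_{i,j}$ and relabelings $\rho_{i\to j}$, both of which commute with removing a single vertex because their action on the remaining vertices depends only on those vertices' labels. The points you flag as needing care (each vertex created at a unique leaf, the induced labeling being preserved by the inductive hypothesis so that later $\eta$ and $\rho$ operations still act correctly, and the empty-graph convention) are exactly the right ones, and none of them causes a problem. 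Iterating over the sequence of deletions then gives $\cw(H)\leq\cw(G)$ as claimed. This is a complete and correct proof of the lemma as stated, fully consistent with the way the paper uses it (e.g.\ to lower-bound the clique width of the gadget graphs that reduce to grid graphs by vertex deletion).
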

\subsection{Relation to entanglement width}
Here, we relate the width measures we saw from graph theory to the entanglement measures we saw in the study of quantum states. We relate these two using graph states. 

In general, for any state $\ket{\psi}$, since
\begin{equation}
    S(\ket{\psi}_\mathsf{A, B}) \leq \log_2 \left(\mathsf{r}_{\A, \B}\right)
\end{equation}
for any bipartition $(\A, \B)$, we have
\begin{equation}
   \ew(\ket{\psi}) \leq \srw
   (\ket{\psi}).
\end{equation}

However, the situation greatly simplifies for a graph state. We now state a lemma from \cite{Nest2007a} to show this.

\begin{lemma}
\label{graph states entanglement width}
For any graph state $\ket{G}$,
\begin{equation}
   \ew(\ket{G}) = \srw
   (\ket{G}) = \rw(G).
\end{equation}
\end{lemma}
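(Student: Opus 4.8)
\emph{Proof proposal.} The plan is to reduce the identity to a statement about a single bipartition and then lift it through the common $\min_T\max_e$ structure of all three quantities. Concretely, I would first show that for \emph{every} partition $(\A,\B)$ of the $n$ qubits/vertices one has
\begin{equation}
\label{eq:per-cut}
  S\!\left(\ket{G}_{\A,\B}\right)\;=\;\log_2 \mathsf{r}_{\A,\B}\;=\;\operatorname{cutrk}_G(\A),
\end{equation}
where $\operatorname{cutrk}_G(\A)$ is the cut rank of $\A$ in $G$, i.e.\ the $\mathbb{F}_2$-rank of the $\A\times\B$ off-diagonal block of the adjacency matrix. Granting \eqref{eq:per-cut}, the lemma is immediate: $\ew(\ket G)$, $\srw(\ket G)$ and $\rw(G)$ are each of the form $\min_T\max_{e\in T}(\cdot)$, where $T$ ranges over subcubic trees whose leaves are in bijection with $V$ and the optimized quantity is, respectively, $S(\ket G_{\A_e,\B_e})$, $\log_2\mathsf r_{\A_e,\B_e}$, and $\operatorname{cutrk}_G(\A_e)$; by \eqref{eq:per-cut} these agree edge by edge, so the three optimizations have the same value. (A degree-$2$ internal vertex permitted in a rank decomposition can be suppressed without changing any edge width, so the underlying tree families really coincide.)

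\textbf{Step 1: entropy equals log Schmidt rank.} Since $\ket G$ is a stabilizer state, the reduced density matrix $\operatorname{tr}_\B\proj{G}$ is proportional to the projector onto the subspace stabilized by the subgroup of the stabilizer group acting trivially on $\B$; in particular it is maximally mixed on its support. Hence all nonzero Schmidt coefficients are equal, and $S(\ket G_{\A,\B})=\log_2\mathsf r_{\A,\B}$. This already yields $\ew(\ket G)=\srw(\ket G)$ for graph states.

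\textbf{Step 2: Schmidt rank equals $2^{\operatorname{cutrk}}$.} Writing $\langle x|G\rangle=2^{-n/2}(-1)^{q_G(x)}$ with $q_G(x)=\sum_{(i,j)\in E}x_i x_j$, split $x=(x_\A,x_\B)$ and let $M$ be the $\A\times\B$ block of the adjacency matrix, so that $q_G(x)=q_{G[\A]}(x_\A)+q_{G[\B]}(x_\B)+x_\A^{\top}Mx_\B$. The amplitude matrix $Q=[\langle x_\A x_\B|G\rangle]$ then factors as $Q=2^{-n/2}\,D_\A\,H\,D_\B$, where $D_\A,D_\B$ are diagonal $\pm1$ matrices (collecting the intra-block quadratic phases) and $H_{x_\A,x_\B}=(-1)^{x_\A^{\top}Mx_\B}$. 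Diagonal invertible factors preserve rank, so $\mathsf r_{\A,\B}=\operatorname{rank}(H)$. Finally, invertible $\mathbb{F}_2$ row/column operations on $M$ correspond to relabeling and permuting the binary strings $x_\A,x_\B$, hence to permuting rows/columns of $H$, and so preserve $\operatorname{rank}(H)$; bringing $M$ to the canonical form with $r=\operatorname{rank}_{\mathbb{F}_2}(M)$ leading ones and zeros elsewhere makes $H$ a tensor product of $r$ copies of $\begin{pmatrix}1&1\\1&-1\end{pmatrix}$ with an all-ones block, so $\operatorname{rank}(H)=2^{r}$. Since $r=\operatorname{cutrk}_G(\A)$, this gives $\log_2\mathsf r_{\A,\B}=\operatorname{cutrk}_G(\A)$, completing \eqref{eq:per-cut}.

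The main obstacle is the linear-algebra bookkeeping in Step 2: verifying that the within-$\A$ and within-$\B$ quadratic contributions genuinely collapse into left/right diagonal $\pm1$ factors, and proving $\operatorname{rank}(H)=2^{\operatorname{rank}_{\mathbb{F}_2}(M)}$ cleanly (the $\mathbb{F}_2$-canonical-form argument above is the most transparent route). A minor additional point is to confirm that the decomposition-tree conventions used in the definitions of $\ew$, $\srw$, and $\rw$ really are the same family; this is routine once degree-$2$ vertices are suppressed.
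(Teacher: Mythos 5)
Your proposal is correct. Note, however, that the paper does not prove this lemma at all: it is stated as an imported result and attributed to Ref.~[Nest2007a] (den Nest, D\"ur, Vidal, Briegel), so there is no in-paper argument to compare against line by line. Your Step~1 (flat Schmidt spectrum for stabilizer states, hence $S = \log_2 \mathsf{r}$) and your Step~2 (factoring the amplitude matrix as $Q = 2^{-n/2} D_{\mathsf{A}} H D_{\mathsf{B}}$ and showing $\operatorname{rank}(H) = 2^{\operatorname{rank}_{\mathbb{F}_2}(M)}$ via the $\mathbb{F}_2$ canonical form of the off-diagonal adjacency block) are both sound, and the reduction of the three $\min_T \max_e$ optimizations to the per-cut identity is exactly the right way to finish, given that the paper's ``binary tree'' in the rank-decomposition definition coincides with the subcubic leaf-labelled trees used for $\ew$ and $\srw$. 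The route taken in the cited literature is slightly different in Step~2: there one works in the stabilizer formalism, writing the Schmidt rank across $(\mathsf{A},\mathsf{B})$ as $2^{|\mathsf{A}|}/|S_{\mathsf{A}}|$ with $S_{\mathsf{A}}$ the subgroup of stabilizers supported on $\mathsf{A}$, and identifies this exponent with the $\mathbb{F}_2$-rank of the adjacency submatrix; your explicit amplitude-matrix factorization buys a more elementary, self-contained computation that avoids the stabilizer-subgroup counting, at the cost of the linear-algebra bookkeeping you already flag. One small point worth making explicit in a final write-up is that the cut rank in the rank-width definition is taken over $\mathbb{F}_2$ (the paper's definition leaves the field implicit), since the identity $\log_2 \mathsf{r}_{\mathsf{A},\mathsf{B}} = \operatorname{cutrk}_G(\mathsf{A})$ is false for the rank over $\mathbb{Q}$.
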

From Lemma \ref{graph states entanglement width}, entanglement width has exactly the same physical interpretation as the Schmidt rank width for a graph state.

Additionally, by exploiting the connections between rank width and entanglement width, we can argue that entanglement width satisfies the following upper and lower bounds, which can be found in \cite{Oum2017}.
\begin{lemma}
For a graph $G$ and the corresponding graph state $\ket{G}$,
\begin{equation}
   \ew(\ket{G}) \leq \cw(G) \leq 2^{\ew(\ket{G})+1} - 1.
\end{equation}
\end{lemma}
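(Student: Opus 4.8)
The plan is to obtain this statement as an immediate corollary of two facts already established: the identity $\ew(\ket{G}) = \rw(G)$ valid for all graph states (\cref{graph states entanglement width}), and the sandwich relation $\rw(G) \le \cw(G) \le 2^{\rw(G)+1}-1$ between the rank width and the clique width of an arbitrary graph (\cref{rank width}). So the argument is essentially a substitution, and there is no genuine analytic obstacle to overcome.

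Concretely, I would first apply \cref{graph states entanglement width} to recast the two desired inequalities in purely graph-theoretic terms, replacing every occurrence of $\ew(\ket{G})$ by $\rw(G)$; they then read exactly $\rw(G) \le \cw(G)$ and $\cw(G) \le 2^{\rw(G)+1}-1$. These are precisely the two halves of \cref{rank width}, so the lemma follows by back-substitution: the lower bound is $\ew(\ket{G}) = \rw(G) \le \cw(G)$, and the upper bound is $\cw(G) \le 2^{\rw(G)+1}-1 = 2^{\ew(\ket{G})+1}-1$.

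All of the real work is carried by the two cited lemmas, whose proofs appear in \cite{Nest2007a,Oum2017,Courcelle2000} and need not be reproduced here. Were one to aim for a self-contained argument, the only nontrivial ingredients would be (i) the equality $\ew(\ket{G}) = \rw(G)$, which holds because for a graph state the Schmidt rank across any bipartition $(A,B)$ equals $2$ raised to the cut rank of $A$ in $G$, so that minimizing the worst-case bipartite von Neumann entropy over subcubic trees coincides with minimizing the worst-case cut rank; and (ii) Oum's bound $\cw(G) \le 2^{\rw(G)+1}-1$, whose proof converts an optimal rank decomposition into a clique-width construction by labeling each vertex according to a fixed basis of the row space it induces across each cut of the tree, so that at most $2^{\rw(G)+1}-1$ labels suffice. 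If anything in this lemma deserves the label ``main step,'' it is invoking the first of these identifications, since it is the only place where the passage from the quantum object $\ket{G}$ to the combinatorial object $G$ actually takes place; everything after that is bookkeeping.
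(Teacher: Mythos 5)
Your proposal is correct and matches the paper's own (implicit) argument exactly: the paper obtains this bound by combining Lemma~\ref{graph states entanglement width} ($\ew(\ket{G}) = \rw(G)$) with Lemma~\ref{rank width} ($\rw(G) \leq \cw(G) \leq 2^{\rw(G)+1}-1$), which is precisely your substitution. The additional sketches of the underlying proofs of those two lemmas are accurate but not needed, since the paper likewise treats them as black boxes cited to \cite{Nest2007a} and \cite{Oum2017}.
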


\section{Results}
In this section, first we state our easiness and hardness results. Then, we give a formal proof for each of them, one after another. We also give a formal statement and proof of the duality theorem. A description of the proofs was already provided in the main text. Here, we provide more rigorous technical details, for the sake of completeness. 

Finally, we use our hardness results to sharpen the bound on the entanglement width of certain graphs. 

Let $\mathcal{G}_k$ be the set of all $k$--regular graphs. Consider the following task. \\

\begin{task}{RegularGraph$\big[n, k, x, \mathsf{U}\big]$}
Input & A description of an $n$--vertex, $k$--regular graph $G \in \mathcal{G}_k$, $x \in \{0, 1\}^n$, and a description of the last layer of local rotations $\mathsf{U}$.\\
Output & An inverse polynomial multiplicative error estimate of $p_x(G, \mathsf{U})$.\\
\end{task}

\begin{theorem}[Easy cases]
\label{easy case}
For $k \in \{1, 2, n-3, n-2, n-1 \}$, $\ew(\ket{G})$ is a constant for any $G \in \mathcal{G}_k$ and \textsc{RegularGraph$\big[n, k, x, \mathsf{U}\big]$} is solvable in classical polynomial time.
\end{theorem}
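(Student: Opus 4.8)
The plan is to treat the two assertions of \cref{easy case} separately---first that $\ew(\ket G)$ is a constant for every $G\in\mathcal G_k$ with $k\in\{1,2,n-3,n-2,n-1\}$, and then that \textsc{RegularGraph}$[n,k,x,\mathsf U]$ admits a polynomial-time classical algorithm for these $k$---and to observe that the second assertion follows almost immediately from the first together with \cref{entanglement width and simulations}.

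For the entanglement-width bound I would run through the chain of graph width measures assembled in \cref{appendix: width measures}. For $k\in\{1,2\}$: a $1$--regular graph on $n$ vertices is a perfect matching and a $2$--regular graph is a vertex-disjoint union of cycles, so in both cases the tree width satisfies $\tw(G)\le 2$. Feeding this into $\cw(G)\le 3\cdot 2^{\tw(G)-1}$, then $\rw(G)\le\cw(G)$, and finally $\ew(\ket G)=\rw(G)$ for graph states (\cref{graph states entanglement width}), yields $\ew(\ket G)\le 6$, a constant independent of $n$. For $k=n-1$ the graph is the complete graph $K_n$, which has clique width $2$, hence $\ew(\ket{K_n})=\rw(K_n)\le 2$. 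For the remaining values $k\in\{n-3,n-2\}$ I would invoke complementation duality: the complement $\overline G$ of a $k$--regular graph on $n$ vertices is $(n-1-k)$--regular, so it is $1$-- or $2$--regular and by the previous case has $\cw(\overline G)\le 6$; the bound $\cw(G)\le 2\,\cw(\overline G)$ then gives $\cw(G)\le 12$, whence $\ew(\ket G)=\rw(G)\le\cw(G)\le 12$. This covers all five values of $k$ with an explicit constant.

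For the algorithmic claim, a constant entanglement width is in particular $O(\log n)$, so \cref{entanglement width and simulations} directly supplies a $\text{poly}(n,2^{\ew(\ket G)})=\text{poly}(n)$-time algorithm that computes $p_x(G,\mathsf U)$ exactly for every $x\in\{0,1\}^n$ and every product rotation $\mathsf U$; an exact value trivially satisfies the inverse-polynomial multiplicative-error requirement in the definition of \textsc{RegularGraph}, and the same lemma also yields an efficient sampler for $\mathcal D(G,\mathsf U)$. For $k=n-1$ one may alternatively give a self-contained recursive algorithm that exploits the Hamming-weight symmetry of $\ket{K_n}$: the amplitude $q_x(K_n,\mathsf U)$ expands into $n+1$ contributions, one per Hamming weight, each individually efficiently computable, and a suitably organized recursion tree over these contributions can be traversed in polynomial time.

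The assembly of the width-measure inequalities is routine; the two points that need care are (i) checking that \cref{entanglement width and simulations} supplies not only the $2^{\ew}$-time tree-tensor-network contraction but also an efficient procedure to \emph{find} the (near-)optimal tree decomposition when $\ew$ is logarithmically bounded---this is what makes the algorithm genuinely constructive rather than merely existential---and (ii), in the complete-graph branch, proving that the Hamming-weight recursion tree has only polynomially many nodes that must actually be visited and that each node is processed in polynomial time. The latter runtime analysis is the main obstacle in that branch, since a naive expansion of the recursion would be exponential; it is also the step where the novel contribution of the paper (as opposed to a black-box appeal to known width-measure machinery) resides.
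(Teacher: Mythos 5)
Your proposal is correct and follows essentially the same route as the paper: bound the tree width of $1$-- and $2$--regular graphs by a constant, propagate through the clique-width/rank-width inequalities and the complementation duality to cover $k\in\{n-3,n-2,n-1\}$, identify $\ew(\ket G)=\rw(G)$ via \cref{graph states entanglement width}, and then invoke \cref{entanglement width and simulations} for the polynomial-time algorithm, with the Hamming-weight recursion for $K_n$ as an independent alternative. Your version is if anything slightly cleaner in tracking explicit constants and in correctly stating that the complement of a $k$--regular graph is $(n-1-k)$--regular.
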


\begin{theorem}[Hard cases]
\label{hard case}
For $3 \leq k \leq n-4$, \textsc{RegularGraph$\big[n, k, x, \mathsf{U}\big]$} is $\# \P$-hard, for any $x \in \{0, 1\}^n$, assuming the $\PH$ does not collapse to $\BPP^{\NP}$.
\end{theorem}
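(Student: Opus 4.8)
\emph{Proof strategy.} The plan is to realize, for every admissible $n$ and every $3\le k\le n-4$, an explicit $n$-vertex $k$-regular graph $G$ such that $\ket G$ can be taken to a known MBQC resource state (a planar square or hexagonal lattice) by a sequence of vertex deletions and local complementations, and then invoke \cref{resource states} to conclude that estimating $p_x(G,\mathsf U)$ to constant multiplicative error is $\#\P$-hard for every $x$. Since the inverse-polynomial multiplicative estimate demanded by \textsc{RegularGraph} is a strictly tighter approximation than a constant one, an algorithm for \textsc{RegularGraph} would in particular solve this $\#\P$-hard problem; feeding the constant-error hardness into Stockmeyer's theorem (\cref{lemma: sampling}) then additionally rules out a polynomial-time classical sampler unless $\PH$ collapses to $\BPP^{\NP}$. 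The constructions fall into three regimes: $k\in\{3,4\}$; $5\le k\le n/2$; and $n/2<k\le n-4$.

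\emph{Base cases and $5\le k\le n/2$.} A hexagonal lattice with toroidal (closed) boundary conditions is $3$-regular, and a square-grid lattice on a torus is $4$-regular; in either case deleting one row and one column of vertices ``cuts the torus open'' and yields the corresponding planar lattice, which is a resource state by \cref{hardness grid graph}, so \cref{resource states} settles $k\in\{3,4\}$. For $5\le k\le n/2$, take $n$ even and start from two disjoint copies $G_1,G_2$ of a grid-on-torus, each on $n/2$ vertices, so that every vertex has degree $4$; then add a bipartite graph $B$ between $V(G_1)$ and $V(G_2)$ in which every vertex has degree exactly $k-4$. The existence of such a connection pattern---together with an explicit instance of it---follows from the Gale--Ryser theorem \cite{Gale1957,Ryser2009,Krause1996} applied to the constant degree sequence $(k-4,\dots,k-4)$ on each side, the inequality $k-4\le n/2$ being precisely the realizability condition. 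The resulting $G$ is $k$-regular, and deleting every vertex of $G_2$ erases all edges of $B$, leaving $G_1$, which we then cut open as above; \cref{resource states} now applies.

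\emph{The dual regime $n/2<k\le n-4$.} Here put $k'=n-1-k$, so that $3\le k'\le n/2-2$, and let $G'$ be the hard $k'$-regular graph produced above; then $\overline{G'}$ is $k$-regular. If $k'\in\{3,4\}$, $G'$ is a single lattice-on-torus on $n$ vertices and $\overline{G'}$ is its complement, to which I apply the chain of \cref{fig:duality theorem}: a local complementation $\tau_a$ at a corner-adjacent vertex $a$, deletion of $a$ together with its two lattice-neighbors, and deletion of every vertex in the row and column of $a$. Because the lattice has bounded degree, in its complement the only non-adjacencies are the original lattice edges, so $\tau_a$ restores all of the original lattice connections except those within the closed neighborhood of $a$; deleting the affected row and column removes the residual discrepancies and simultaneously cuts the torus open, leaving a planar grid one row and column smaller (or, in the hexagonal case, a planar hexagonal patch), which is a resource state. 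If instead $5\le k'\le n/2-2$, then $G'=G_1\cup G_2\cup B$ as above, and I first delete every vertex of $G_2$ in $\overline{G'}$; since $G'$ has no edges inside $V(G_1)$ other than the grid-torus edges, the survivor is exactly the complement of a grid-on-torus on $n/2$ vertices, to which I again apply the duality chain. In all sub-cases \cref{resource states} yields $\#\P$-hardness of estimating $p_x(\overline{G'},\mathsf U)$ to constant multiplicative error.

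\emph{Assembly and the main obstacle.} The three regimes together cover every $k$ with $3\le k\le n-4$, and the Stockmeyer step of \cref{lemma: sampling} completes the argument. The principal difficulty is the dual regime: local complementation acts globally---it toggles every edge inside a neighborhood---so one must verify by a careful incidence bookkeeping, leaning crucially on the bounded degree of the lattice, that the prescribed deletions after $\tau_a$ leave \emph{exactly} a grid (respectively hexagonal) lattice rather than merely a graph close to one. A minor secondary issue is the restriction to $n$ even with $n/2$ of the form (width)$\times$(height); the remaining residue classes of $n$ are handled by padding the construction with a constant-size disjoint $k$-regular gadget, under which output probabilities factorize and hardness is preserved---and in any event hardness for an infinite family of $(n,k)$ already suffices for the complexity-theoretic conclusion.
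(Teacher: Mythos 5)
Your proposal is correct and follows essentially the same route as the paper: torus lattices for $k\in\{3,4\}$, the Gale--Ryser two-tori gadget for $4<k\le n/2$, and graph complementation plus the local-complementation/vertex-deletion duality chain for the high-regularity regime, all fed through \cref{resource states} and Stockmeyer's theorem. The only additions beyond the paper's argument are your explicit remarks on the divisibility of $n$ and on verifying the duality chain exactly, both of which are consistent with (and slightly more careful than) the paper's treatment.
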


We delegate the proof of these two theorems to \cref{section: easy cases} and \cref{section: hard cases}, respectively. Two corollaries are evident.

\begin{corollary}
\label{second corollary}
For $3 \leq k \leq n-4$, there exists an explicit, efficient construction of a family of $n$--vertex, $k$--regular graphs $\mathcal{F}$ such that
\begin{equation}
\label{ew1}
    \ew(\ket{G}) = \omega(\log n),
\end{equation}
for any $G \in \mathcal{F}$, assuming that the $\PH$ does not collapse to $\BPP^{\NP}$.
\end{corollary}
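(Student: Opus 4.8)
The plan is to obtain this as the contrapositive of the tensor-network simulation bound (Lemma~\ref{entanglement width and simulations}) applied to the explicit hard family underlying Theorem~\ref{hard case}. Concretely, I would let $\mathcal{F}$ be exactly that family: the two-torus gadget graphs constructed for $3\le k\le n/2$ and their complements for $n/2<k\le n-4$, one graph $G_{n,k}$ per admissible pair $(n,k)$. By Theorem~\ref{hard case}, estimating $p_x(G_{n,k},\mathsf{U})$ to inverse-polynomial multiplicative error, worst case over $\mathsf{U}$, is $\#\P$-hard.

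Next I would argue by contradiction: suppose $\ew(\ket{G})=O(\log n)$ on $\mathcal{F}$, say $\ew(\ket{G_{n,k}})\le c\log_2 n$ for a constant $c$ and all large $n$. Then $2^{\ew(\ket{G_{n,k}})}\le n^{c}$, so Lemma~\ref{entanglement width and simulations} --- together with the fact recorded there that a width-$O(\log n)$ rank/tree decomposition can be found in polynomial time --- computes $p_x(G_{n,k},\mathsf{U})$ \emph{exactly} in $\mathrm{poly}(n,2^{\ew})=\mathrm{poly}(n)$ time, for every $x$ and every $\mathsf{U}$. Hence \textsc{RegularGraph} on $\mathcal{F}$ is solvable in deterministic polynomial time, \emph{a fortiori} to multiplicative error, putting a $\#\P$-hard problem in $\P$: then $\#\P\subseteq\P$, $\P^{\#\P}=\P$, and by Toda's theorem $\PH\subseteq\P^{\#\P}=\P\subseteq\BPP^{\NP}$, contradicting the hypothesis that $\PH$ does not collapse to $\BPP^{\NP}$. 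I would conclude that $\ew(\ket{G})$ is not $O(\log n)$ on $\mathcal{F}$, i.e.\ it is superlogarithmic --- the asserted $\omega(\log n)$ bound. (Lemma~\ref{lemma: sampling} is not needed here: the argument runs directly off hardness of computing probabilities; invoking it in addition would also rule out an efficient sampler.)

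The simulate-or-collapse dichotomy is the immediate consequence of \cref{hard case,entanglement width and simulations}; the step I would actually have to be careful about is the precise asymptotic statement. The contradiction above literally shows that $\ew(\ket{G_{n,k}})$ exceeds every fixed multiple of $\log_2 n$ for all sufficiently large admissible $n$, and this is the $\omega(\log n)$ claim precisely when $\mathcal{F}$ is presented with essentially one graph per vertex count; making that rigorous uses that the hardness reduction of Theorem~\ref{hard case} is available at every large admissible $n$ with $k\in[3,n-4]$ --- the gadget construction pads a fixed universal resource up to $n$ vertices while preserving $k$--regularity, and the Gale--Ryser step is constructive, so $\mathcal{F}$ is genuinely explicit and polynomial-time computable. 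I expect this uniformity check, along with the degree bookkeeping $3\le k\le n-4$, to be the only real work; everything else is the standard ``efficient simulation would collapse $\PH$'' argument.
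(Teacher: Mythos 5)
Your proposal is correct and takes essentially the same route as the paper, whose entire proof reads ``Follows from Lemma~\ref{entanglement width and simulations} and Theorem~\ref{hard case}''---i.e.\ precisely the contrapositive you spell out: logarithmic entanglement width would yield a polynomial-time tensor-network algorithm for a $\#\P$-hard instance of \textsc{RegularGraph}, collapsing $\PH$ to $\BPP^{\NP}$. The uniformity and explicitness bookkeeping you flag, and the Toda-style chain of collapses, are details the paper leaves implicit but are filled in correctly.
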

\begin{proof}
Follows from Lemma \ref{entanglement width and simulations} and Theorem \ref{hard case}. 
\end{proof}
Note that, in our hardness results, we prove something more fine-grained than what is required for \cref{hard case}. We construct $k$--regular graphs for which it is $\# \P$-hard to compute the output probabilities to inverse polynomial multiplicative precision, under local rotations, and then invoke Lemma \ref{lemma: sampling} to prove Theorem \ref{hard case}.

Consequently, we talk about how we can have weaker complexity theoretic conjectures imply the same bound as that of equation \eqref{ew1} and how we can have sharper bounds by tightening the conjecture, in \cref{conditional}.

We also prove a duality between regimes of low regularity and regimes of high regularity, which serves as a convenient tool for proving our hardness results. The duality, as is later discussed, follows from \cref{duality theorem for prob}.

\begin{theorem}[Duality theorem]
\label{duality theorem for prob}
The complement of an $n$--vertex hexagonal lattice, or an $n \times n$ grid graph $G$ is a resource state for measurement based quantum computing.
\end{theorem}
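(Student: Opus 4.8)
The plan is to transform $\overline G$ into a known universal resource by a short sequence of local complementations and vertex deletions, and then invoke \cref{resource states}: since local complementation is a product of single-qubit Cliffords and vertex deletion is a $\mathsf Z$-basis measurement (whose byproducts are absorbed into the final layer of local rotations), reaching a resource state from $\overline G$ in this way makes $\overline G$ itself a resource state, with $\#\P$-hardness then following from \cref{hardness grid graph}. The entire argument rests on one elementary observation about complements: complementation commutes with taking induced subgraphs, i.e.\ for any $S\subseteq V$ one has $\overline G[S]=\overline{G[S]}$; this is where the bounded degree of the grid and hexagonal lattices is essential, since it forces the non-edges of $G$ inside a large vertex set to coincide with the edges of $\overline G$ there.

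First I would record the effect of a single local complementation $\tau_v$ on $\overline G$. From the definition of $\tau_v$, the only edges that change are those with both endpoints in the neighborhood $N_v(\cdot)$, and in the complement $N_v(\overline G)=V\setminus(\{v\}\cup N_v(G))$. Writing $S:=N_v(\overline G)$, the map $\tau_v$ therefore toggles exactly the edges lying inside $S$ and leaves all edges incident to $\{v\}\cup N_v(G)$ untouched. Because the subgraph of $\overline G$ induced on $S$ is $\overline{G[S]}$, after $\tau_v$ the subgraph induced on $S$ becomes $G[S]$. Now deleting the $1+\deg_G(v)$ vertices of $\{v\}\cup N_v(G)$ leaves precisely $G[S]$, i.e.\ the original graph with one vertex and its neighborhood removed.

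Second, specialize to $G$ the $n\times n$ grid and $v=a$ a corner vertex, so $\deg_G(a)=2$ and $N_a(G)=\{b,c\}$ with $b,c$ the two neighbors of $a$ as in \cref{fig:duality theorem}. By the previous step, applying $\tau_a$ to $\overline G$ and then deleting $a,b,c$ yields $G[V\setminus\{a,b,c\}]$, which consists of the $(n-1)\times(n-1)$ subgrid on the rows and columns not through $a$ together with two pendant paths running along the corresponding boundary row and column. Deleting the remaining vertices of the row and column of $a$ removes these two paths and leaves exactly the $(n-1)\times(n-1)$ grid, a universal resource by \cref{hardness grid graph}; hence $\overline G$ is a resource state for MBQC. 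The hexagonal case is entirely analogous: take $a$ a degree-$2$ boundary vertex of the honeycomb, apply $\tau_a$ in $\overline G$, delete $a$ and its two neighbors, and then delete a bounded number of further boundary vertices to recover a smaller honeycomb lattice.

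The main obstacle is the combinatorial bookkeeping of the two cleanup claims: that $\tau_a$ genuinely restores \emph{all} the original grid (resp.\ honeycomb) edges among $V\setminus(\{a\}\cup N_a(G))$ — which is the content of $\overline G[S]=\overline{G[S]}$ combined with the precise description of which edges $\tau_a$ toggles — and that the final round of vertex deletions leaves an honest $(n-1)\times(n-1)$ grid (resp.\ a smaller honeycomb) rather than one with leftover pendant vertices or missing boundary edges. For the hexagonal lattice the boundary geometry makes this last step somewhat more delicate and should be verified against the explicit picture.
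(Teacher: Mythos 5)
Your proposal is correct and follows essentially the same route as the paper's proof: apply local complementation at a degree-$2$ corner vertex $a$ of $\overline G$, delete $a$ and its two grid-neighbors $b,c$, and then delete the remaining vertices of $a$'s row and column to recover an $(n-1)\times(n-1)$ grid. In fact, your explicit justification---that $\tau_a$ toggles precisely the edges inside $N_a(\overline G)=V\setminus\{a,b,c\}$ and that $\overline G[S]=\overline{G[S]}$, so the induced subgraph there is restored to $G[S]$---makes rigorous the step the paper only describes as intuition, so no gap remains.
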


We prove this for the grid graph, and observe that a similar result holds for the hexagonal lattice. We convert the complement of an $n \times n$ grid graph to an $(n-1) \times (n-1)$ grid graph, by a sequence of vertex deletions and local complementation. To prove hardness, we reduce some of our hard graphs to the complement of a grid graph.

\subsection{Proof of \cref{easy case} (The easy regime)}
\label{section: easy cases}
Let $\mathcal{G}_k$ be the set of all $k$--regular graphs. Note that $\cw(G) = 2$ when $G$ is the complete graph \cite{Kaminski2009}. Additionally, note that
\begin{equation}
    \tw(G) = 1
\end{equation}
for every $G \in \mathcal{G}_1$. This is because an $n$ vertex $1$--regular graph is just $n/2$ disconnected lines, each of unit length---so, each can be thought of as a tree with two nodes and one edge and would have a treewidth of $1$---and treewidth does not change by a disjoint union of graphs with same treewidth.

Also note that, for $k=2$,
\begin{equation}
    \tw(G) = 2
\end{equation}
for every $G \in \mathcal{G}_2$. This is because every $2$--regular graph is a disjoint union of a number of series-parallel graphs\footnote{\label{footnote1} For a technical definition of series-parallel graphs, see \cite{Eppstein1992}.}, each of which has a tree width of $2$. Series-parallel graphs draw their inspiration from series-parallel electrical circuits and consist of connected vertices either in series, or in parallel, or a combination of both.

Hence, by  Lemma \ref{rank width} and Lemma \ref{clique width}, the clique width and rank width of every graph in $\mathcal{G}_1$ and $\mathcal{G}_2$ is bounded by a constant. 

By Lemma \ref{complement}, the complement of a bounded clique width graph also has bounded clique width. Since every $(n-2)$-- and $(n-3)$--regular graph is a complement of some $1$--regular or $2$--regular graph, the clique width and the rank width of every $(n-1)$--regular and $(n-2)$--regular graph is also bounded by a constant. By Lemma \ref{graph states entanglement width}, the entanglement width of any graph state $\ket{G}$ is exactly the rank width of the corresponding graph $G$. Hence, the entanglement width of every $1$--, $2$--, $(n-1)$--, $(n-2)$--, and $(n-3)$--regular graph is bounded.

\subsection{The complete graph revisited}
\label{appendix: complete graph}

We had discussed the recursive method for computing the output probabilities of the complete graph in the main text. In this section, we elucidate the technical details behind the recursive method.

We have just established the that the entanglement width of the complete graph is bounded, and hence, by Lemma \ref{lemma: sampling}, efficient sampling and probability estimation is possible under any local rotation. Here, we give a different proof of the fact that output probabilities of the complete graph can be computed in polynomial time, under any local rotation. Our proof leverages inherent symmetry properties of the complete graph. 

The easiness of computing output probabilities of the complete graph has been established before, by observing that a complete graph has bounded clique width \cite{Nest2006}, by observing that a complete graph is reducible to a star graph under local Clifford rotations and the fact that the star graph has bounded tree width \cite{Markham2007}, or by observing that the complete graph can be reduced by local Clifford rotations \cite{Hein2006} to the $\mathsf{GHZ}$ state which can be efficiently classically simulated. But, to the best of our knowledge, our approach has not been taken before and may be utilized elsewhere where there is Hamming weight symmetry. 

The motivation behind this new approach is that our result rely on symmetry properties of the complete graph: this proof can serve as a refresher of those properties. The one specific property we repeatedly use is the fact that the output probabilities of a complete graph has Hamming weight symmetry---it can be written as a linear combination of polynomially many terms, one for each Hamming weight, such that each of them is efficiently computable. In this easiness proof, we can actually go slightly beyond the setting of Figure \ref{fig:setup} by allowing the use of $e^{i \theta Z_i Z_j}$ gates and not just controlled-$\Z$ gates. In more technical terms, define the following state $\ket{G}$ on a complete graph:
\begin{equation}
    \ket{G} = \frac{1}{\sqrt{2^{n}}} \sum_{z \in \{0, 1\}^{n}} \beta^{\left(\underset{i, j \in [n], i < j}{\sum} z_i z_j \right) } \ket{z},
\end{equation}
where $\beta = e^{-i\theta}$ is a unimodular complex number that depends on $\theta$. Let us use the notation $[n]$ to denote the set $\{1, 2, \ldots, n\}$ and let $|z|$ be the Hamming weight of $z$. 

We could equivalently write $\ket{G}$ as
\begin{equation}
\label{eq1}
    \ket{G} = \sum_{y \in 0}^n c_y \sum_{z \in H_y} \ket{\tilde{z}},
\end{equation}
where 
\begin{equation}
\label{eq3}
\ket{\tilde{z}} = \frac{1}{\sqrt{2^n}} \ket{z},
\end{equation}
$H_y$ is the set of all strings with Hamming weight $y$ and
\begin{equation}
\label{eq2}
    c_y = \beta^{\left(\underset{|z| = y; i, j \in [n]; i < j}{\sum} z_i z_j \right) } = \beta^{y(y-1)/2},
\end{equation}
which, by symmetry, is the same for every $z \in H_y$. This is what we mean by Hamming weight symmetry. 

Now, let us formally state the theorem and prove it.

\begin{theorem}
\label{complete graph}
For $k=n-1$, 
    $p_x(G, \mathsf{U})$ can be computed in polynomial time for any choice of $\mathsf{U}$ and any $x \in \{0, 1\}^n$.
\end{theorem}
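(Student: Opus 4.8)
The plan is to turn the Hamming-weight symmetry recorded in \eqref{eq1}--\eqref{eq2} into an algorithm by observing that, once $q_x(G,\mathsf U)$ is expanded in the standard basis, the exponentially large sum over $z\in\{0,1\}^n$ collapses to a sum of only $n+1$ terms, each of which is a coefficient of an explicitly computable degree-$n$ univariate polynomial.

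First I would reduce everything to single-qubit data. For each qubit $i$ set $a_i=\langle x_i|U_i|0\rangle$ and $b_i=\langle x_i|U_i|1\rangle$; these are read off directly from $x_i$ and the parameters $(\theta_i,\phi_i)$. Since $\langle x|\mathsf U|z\rangle=\prod_{i:\,z_i=0}a_i\prod_{i:\,z_i=1}b_i$, substituting \eqref{eq1} and \eqref{eq3} into $q_x(G,\mathsf U)=\langle x|\mathsf U|G\rangle$ gives
\[
q_x(G,\mathsf U)=\frac{1}{\sqrt{2^{n}}}\sum_{y=0}^{n}c_y\sum_{z\in H_y}\ \prod_{i:\,z_i=0}a_i\ \prod_{i:\,z_i=1}b_i,
\]
with $c_y=\beta^{y(y-1)/2}$ as in \eqref{eq2}.

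Next I would recognize the inner sums as polynomial coefficients. Setting $P(t)=\prod_{i=1}^{n}(a_i+b_i t)=\sum_{y=0}^{n}e_y t^{y}$, the coefficient $e_y$ is precisely $\sum_{z\in H_y}\prod_{i:\,z_i=0}a_i\prod_{i:\,z_i=1}b_i$ (identify a string $z\in H_y$ with the $y$-subset $\{i:z_i=1\}$), so $q_x(G,\mathsf U)=\tfrac{1}{\sqrt{2^{n}}}\sum_{y=0}^{n}c_y e_y$ and $p_x(G,\mathsf U)=|q_x(G,\mathsf U)|^{2}$. The coefficients $e_0,\dots,e_n$ are computed by multiplying the $n$ linear factors together one at a time --- equivalently, by traversing the natural binary recursion tree whose leaves carry the factors $a_i+b_i t$ and whose internal nodes return the product of their two children's polynomials --- which costs $O(n^{2})$ arithmetic operations on numbers of polynomially bounded bit length, after which $\sum_y c_y e_y$ takes $O(n)$ further operations. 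This proves the theorem; since the argument uses only that the phase $\beta^{\sum_{i<j}z_iz_j}$ depends on $z$ through its Hamming weight, it applies verbatim to the $e^{i\theta Z_iZ_j}$-generalized state of \eqref{eq1} and recovers the $\mathsf{CZ}$ graph state at $\beta=-1$.

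The only substantive step is the middle one: realizing that the inner sum $e_y$ over $\binom{n}{y}$ configurations is a symmetric function obtainable from a single pass of polynomial multiplication rather than by brute force. I do not expect a genuine obstacle beyond this; the remaining points --- bounding the bit complexity of the intermediate coefficients, and noting that inputs specified to polynomial precision suffice to output the required inverse-polynomial estimate --- are routine.
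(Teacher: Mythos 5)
Your proposal is correct and is essentially the paper's own argument: both exploit the Hamming-weight symmetry $c_y=\beta^{y(y-1)/2}$ to collapse the amplitude to $\tfrac{1}{\sqrt{2^n}}\sum_{y=0}^n c_y e_y$ and then compute the weight-resolved partial overlaps by an $O(n^2)$ recursion over (qubit index, Hamming weight) --- your step of multiplying the linear factors $a_i+b_i t$ one at a time is exactly the paper's recursion $\mathsf{Z}[m,y]$ in the guise of a generating polynomial. The only cosmetic difference is that you fold general $x$ into the coefficients $a_i=\langle x_i|U_i|0\rangle$, $b_i=\langle x_i|U_i|1\rangle$ directly, whereas the paper first reduces $p_x(\mathsf U)$ to $p_{0^n}(\mathsf V)$ by absorbing $\mathsf X^{x_i}$ into the local rotations.
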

\begin{proof}
We first reduce computing $p_x(\mathsf{U})$, for a string $x \in \{0, 1\}^n$ to computing $p_{0^{n}}(\mathsf{V})$, for a particular choice of $\mathsf{V}$. To see this reduction, note that for a string $x \in \{0, 1\}^n$, we can write $p_x(U_l)$ as
\begin{equation}
    p_x(\mathsf{U}) =|\langle 0^n| \mathsf{V} |G\rangle|^2,
\end{equation}
where
\begin{equation}
    \mathsf{V} = U_1\mathsf{X}^{x_1} \otimes U_{2}\mathsf{X}^{x_{2}}\otimes \cdots \otimes U_n \mathsf{X}^{x_n}.
\end{equation}
Then, observing that
\begin{equation}
    p_x(\mathsf{U}) = p_{0^n}(\mathsf{V}),
\end{equation}
we compute $p_x(\mathsf{U})$ using our ability to compute $p_{0^n}(\mathsf{V})$, for any $\mathsf{V}$.

Now, consider computing the following quantity:
\begin{equation}
    \mathsf{Z}[n] = \underset{i=1}{\overset{n}{\otimes}} \left(\bra{0} \text{cos}\left(\frac{\theta_i}{2}\right)e^{-i \phi_i} + \bra{1}\text{sin}\left(\frac{\theta_i}{2}\right)\right)\ket{G}.
\end{equation}
Observe that
\begin{equation}
    p_{0^n}(\mathsf{U}) = \big|\mathsf{Z}[n]\big|^{2}.
\end{equation}
So, if we can compute $\mathsf{Z}[n]$, we can compute $p_{0^n}(\mathsf{U})$. In the next steps, we will show how to compute $\mathsf{Z}[n]$. Let
\begin{equation}
    \ket{\psi_{y, n}} = c_y \sum_{z \in H_y} \ket{\tilde{z}}.
\end{equation}
Note that
\begin{equation}
    \ket{G} = \sum_{y \in 0}^n \ket{\psi_{y, n}}.
\end{equation}
Additionally, define
\begin{widetext}
\begin{equation}
    \mathsf{Z}[n, y] 
    = \underset{i=1}{\overset{n-1}{\otimes}} \left(\bra{0} \text{cos}\left(\frac{\theta_i}{2}\right)e^{-i \phi_i} + \bra{1}\text{sin}\left(\frac{\theta_i}{2}\right)\right) \left(\bra{0} \text{cos}\left(\frac{\theta_n}{2}\right)e^{-i \phi_n} + \bra{1}\text{sin}\left(\frac{\theta_n}{2}\right) \right) \ket{\psi_{y, n}}.
\end{equation}
Note that
\begin{equation}
\begin{aligned}
    &\mathsf{Z}[n, y]  \\ &= \text{cos}\left(\frac{\theta_n}{2}\right)e^{-i \phi_n} \underset{i=1}{\overset{n-1}{\otimes}} \left(\bra{0} \text{cos}\left(\frac{\theta_i}{2}\right)e^{-i \phi_i} + \bra{1}\text{sin}\left(\frac{\theta_i}{2}\right)\right)\ket{\psi_{y, {n-1}}} \\&+ \text{sin}\left(\frac{\theta_n}{2}\right) \frac{c_y}{c_{y-1}} \underset{i=1}{\overset{n-1}{\otimes}} \left(\bra{0} \text{cos}\left(\frac{\theta_i}{2}\right)e^{-i \phi_i} + \bra{1}\text{sin}\left(\frac{\theta_i}{2}\right)\right)\ket{\psi_{y-1, {n-1}}}
    \\&= \text{cos}\left(\frac{\theta_n}{2}\right)e^{-i \phi_n} \mathsf{Z}[n-1, y] + \text{sin}\left(\frac{\theta_n}{2}\right) \frac{c_y}{c_{y-1}} \mathsf{Z}[n-1, y-1].
    \end{aligned}
\end{equation}
\end{widetext}
\setcounter{figure}{0}
\renewcommand{\thefigure}{S\arabic{figure}}
\renewcommand{\theHfigure}{S\arabic{figure}}
\begin{figure*}
    \centering
    \includegraphics[width=16cm]{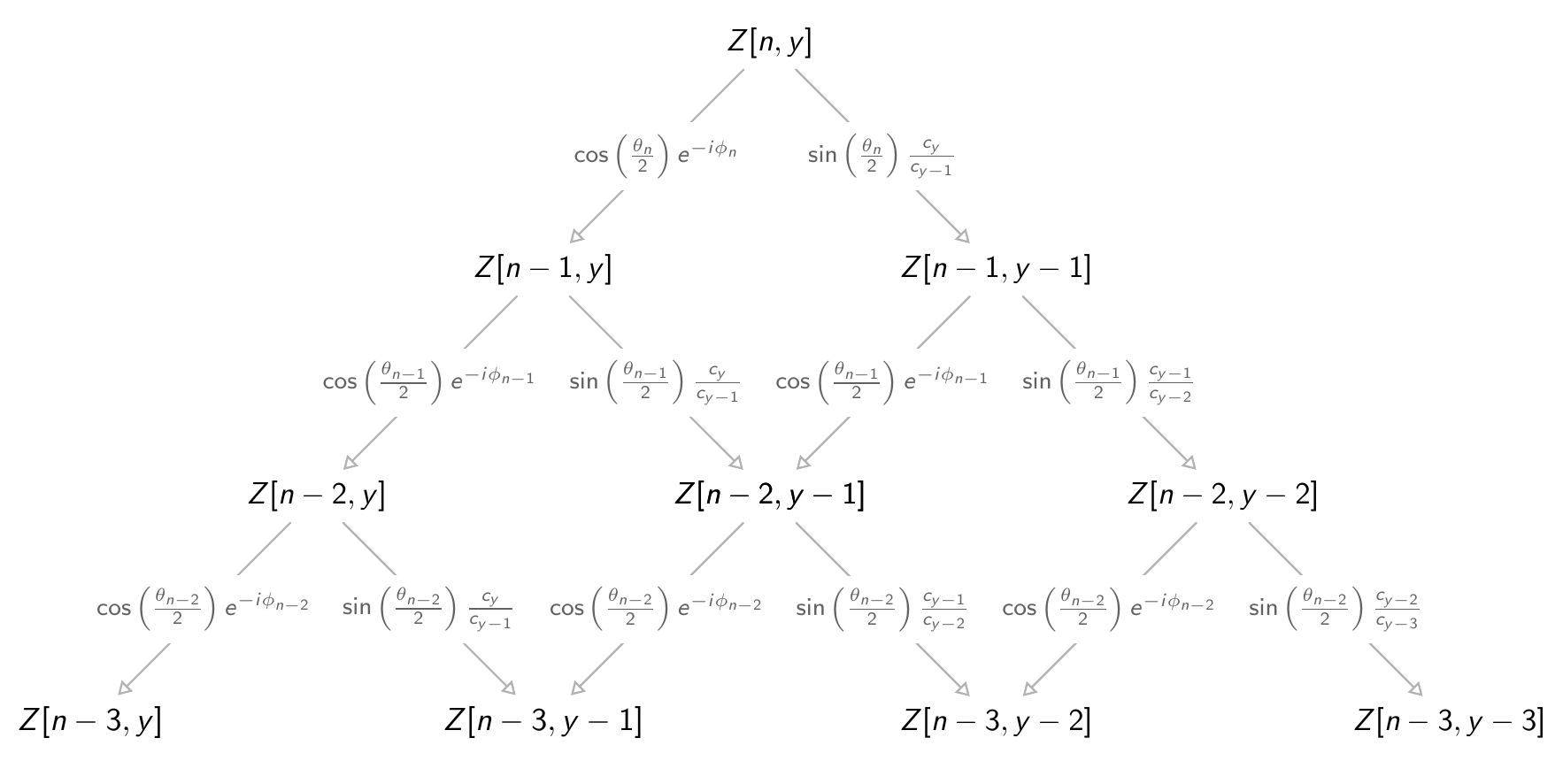}
    \caption{An illustration of how the recursion tree is constructed in the proof of \cref{complete graph}, up to the first three levels.
    \label{appendix: completegraph100}
    }
\end{figure*}

Now, design a recursion tree, a part of which is shown in \cref{appendix: completegraph100}. We start with $\mathsf{Z}[n, y]$ and by multiplying each path with the corresponding path-weights, slowly recurse down the tree. The number of nodes never blows up because there is a lot of overlap during the recursion: for example, in \cref{appendix: completegraph100}, both $\mathsf{Z}[n-1, y]$ and $\mathsf{Z}[n-1, y-1]$ have the same daughter node $\mathsf{Z}[n-2, y-1]$, albeit with different path-weights.

The value of $\mathsf{Z}[n,y]$ is the total value we get in the following way. 
\begin{itemize}
    \item First, we multiply the path weights for each path.
    \item Then, we add up the paths.
\end{itemize}
The number of paths is
\begin{equation}
    2 + 4 + 6 + 8 + \cdots n= \mathcal{O}(n^2).
\end{equation}

\noindent We set
\begin{equation}
\begin{aligned}
    &\mathsf{Z}[n, y] = 1, ~~n < y, \\
    &\mathsf{Z}[0, 0] = 1, \\
    &\mathsf{Z}[0, -1] = 1, \\
    &\mathsf{Z}[1, 0] = \text{cos}\left(\frac{\theta_1}{2}\right)e^{-i\phi_1}.
    \end{aligned}
\end{equation}
to deal with the base cases. Now, do this for every $y$ and add the values to get $\mathsf{Z}[n]$. Since the tree has polynomially many paths that are polynomially large, it can be traversed in polynomial time by standard algorithms.
\end{proof}

\subsection{Proof of \cref{duality theorem for prob} (Duality theorem)}
\label{proofduality}

Let $G$ be an $n \times n$ grid graph. Consider the vertices $a$, $b$, and $c$ as shown in Figure $3$ of the main text ($a$ is the red vertex, and $b$ and $c$ are the green vertices.)

Let $\overline{G}$ be the complement of the grid graph. Follow the following sequence of steps.
\begin{itemize}
    \item Apply a local complementation on $a$ (red vertex).
    \item Vertex delete $a$, $b$, and $c$. 
\end{itemize}

What we get back is a resource state for MBQC because we can get an $(n-1)\times (n-1)$ grid graph from there, just by vertex deletion.

\subsection{Proof of \cref{hard case} (The hard regime)}
\label{section: hard cases}
For every $3 \leq k \leq n/2$, we construct a $k$--regular parent graph such that we can reach the hexagonal graph or the grid graph from that parent graph just by vertex deletion. If we can do that, then by Lemma \ref{hardness grid graph} and Lemma \ref{resource states}, computing the probabilities of the parent graph would be $\# \P$-hard, under inverse polynomial multiplicative precision. Then, the existence of a classical sampler indicates the collapse of the $\PH$ to $\BPP^{\NP}$, by Lemma \ref{lemma: sampling}.

\subsubsection{Technical constructions}
One idea would be to start with the grid-graph or the hexagonal lattice, and just ``reverse engineer'' a construction of the $k$--regular graph, by adding appropriate gadgets to every vertex. This indeed works for small values of $k$, as we see in a demonstration below. 

Consider a family of $n$--vertex, $k$--regular graphs $\mathcal{F}$ and consider the following task.
\begin{task}{$\mathcal{F}$-RegularGraph$\big[n, k, x, \mathsf{U}\big]$}
Input & A description of an $n$--vertex, $k$--regular graph $G \in \mathcal{F}$, $x \in \{0, 1\}^n$, and a description of the last layer of local rotations $\mathsf{U}$.\\
Output & An inverse polynomial multiplicative error estimate of $p_x(G, \mathsf{U})$.\\
\end{task}

\begin{proposition}
\label{hexagonal lattice}
There is an explicit $\mathcal{F}$ such that \textsc{$\mathcal{F}$-RegularGraph$\big[n, 3, x, \mathsf{U} \big]$} is $\# \P$-hard.
\end{proposition}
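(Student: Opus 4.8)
The plan is to take $\mathcal{F}$ to be the family of hexagonal lattices with periodic boundary conditions---that is, honeycomb lattices drawn on a torus. Fixing an $\ell\times\ell$ fundamental domain of the infinite honeycomb tiling and identifying opposite sides yields, for every sufficiently large $\ell$, a simple connected $3$--regular graph $T_\ell$ on $n = 2\ell^2$ vertices (two vertices per unit cell); the constant lower bound on $\ell$ serves only to exclude the small cases in which the identifications create parallel edges or loops. Thus $\mathcal{F} = \{T_\ell\}$ is an explicit, efficiently constructible family of $3$--regular graphs, as required by the statement.

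The heart of the argument is the combinatorial observation, sketched in the main text, that a planar honeycomb patch is recovered from $T_\ell$ by \emph{vertex deletion alone}: a torus becomes a disk once two non-contractible cycles are removed, and in $T_\ell$ such a pair of cycles is supported on $O(\ell)$ vertices. Deleting exactly those vertices severs the toroidal identifications and leaves a planar honeycomb lattice of linear dimensions $(\ell - O(1))\times(\ell - O(1)) = \Theta(\sqrt n)\times\Theta(\sqrt n)$, possibly with some boundary vertices of degree $2$, which does not affect its status as a resource state. This is the step that needs the most care: one must pick the two cutting cycles so that they are vertex-disjoint, so that deleting their vertices produces a single connected honeycomb region rather than a fragmented or ragged one, and so that the surviving patch grows polynomially in $n$ in \emph{both} directions. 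I would handle the first two points by taking the two cycles to be a ``horizontal'' and a ``vertical'' shortest non-contractible cycle, whose removal opens the torus cleanly; the third is immediate here, since only an $O(1/\ell)$ fraction of the vertices is removed.

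With this in hand the conclusion follows from the machinery already developed. The planar honeycomb lattice of polynomial size is a universal resource state for measurement-based quantum computation~\cite{Raussendorf2003,Nest2006}, so applying \cref{resource states} with $R$ the recovered honeycomb patch and $T = T_\ell$---using only its vertex-deletion special case---shows that computing $p_x(T_\ell, \mathsf{U})$ to inverse polynomial multiplicative precision is $\#\P$--hard in the worst case over $\mathsf{U}$, for every $x \in \{0,1\}^n$. Since each $T_\ell \in \mathcal{F}$ is $3$--regular, this is exactly the assertion that \textsc{$\mathcal{F}$-RegularGraph$\big[n, 3, x, \mathsf{U}\big]$} is $\#\P$--hard. (Running the identical argument with the square lattice on a torus, which is $4$--regular, disposes of $k = 4$ as well, and provides the base case for the gadget constructions used below to reach larger values of $k$.)
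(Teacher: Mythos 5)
Your proposal is correct and follows essentially the same route as the paper's proof: take the hexagonal lattice with periodic boundary conditions on a torus as the explicit $3$--regular family, recover a planar $\Omega(\sqrt{n})\times\Omega(\sqrt{n})$ honeycomb patch by vertex deletions that ``cut open'' the torus, and conclude via \cref{hardness grid graph} and \cref{resource states} since the planar honeycomb lattice is a universal resource for MBQC. Your treatment of the cutting step (choosing vertex-disjoint non-contractible cycles so the surviving patch is a single connected region of polynomial extent in both directions) is more careful than the paper's one-line version of the same idea, but it is the same argument.
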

\begin{proof}
Consider an $n$ vertex hexagonal lattice. Then, we add periodic boundary conditions, which is equivalent to putting this on (or, in more colloquial terms, ``wrapping it around") a torus. This makes it a $3$--regular graph.

Note that one could recover a hexagonal lattice of $\Omega(\sqrt{n}) \times \Omega(\sqrt{n})$ vertices just by a sequence of vertex deletions to ``cut open'' the torus. The proof then follows from the observation that an $n$ vertex hexagonal lattice is a resource state for MBQC.
\end{proof}

\begin{proposition}
\label{gridgraph1}
There is an explicit $\mathcal{F}$ such that \textsc{$\mathcal{F}$-RegularGraph$\big[n, 4, x, \mathsf{U} \big]$} is $\# \P$-hard.
\end{proposition}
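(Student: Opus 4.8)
The plan is to mirror the $3$-regular construction of \cref{hexagonal lattice}, replacing the hexagonal lattice by the square lattice. First I would take the two-dimensional square lattice on $n$ vertices---for concreteness, assume $n$ is a perfect square and take the $\sqrt n \times \sqrt n$ grid; otherwise pad $n$ up to the nearest perfect square, which affects the vertex count only by a lower-order term---and impose periodic boundary conditions in both directions. Geometrically this ``wraps'' the grid onto a torus: every vertex acquires exactly two horizontal and two vertical neighbors, so the resulting graph $G$ is $4$-regular. This specifies the family $\mathcal F$.

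Next I would show that $\ket G$ is a resource state for MBQC. The key point is that a \emph{planar} grid graph can be recovered from the torus purely by \emph{vertex deletion}: deleting the vertices of one ``meridian'' column and one ``parallel'' row cuts the torus open along both handles and leaves a planar square lattice on $\Omega(\sqrt n) \times \Omega(\sqrt n)$ vertices; see \cref{fig:hard graph}(a). Since the planar square lattice is a universal resource for measurement-based quantum computation \cite{Raussendorf2003,Nest2006}, \cref{resource states} applies with the planar grid as $R$ and the torus $G$ as $T$: computing $p_x(G, \mathsf U)$ to inverse-polynomial multiplicative precision is $\# \P$-hard in the worst case over $\mathsf U$, for every $x \in \{0,1\}^n$. (Equivalently one may invoke \cref{hardness grid graph} for the planar grid together with the reduction of \cref{resource states}.)

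Finally, \cref{lemma: sampling} upgrades probability-estimation hardness to hardness of the sampling task: no polynomial-time classical sampler for $\mathcal D(G, \mathsf U)$ exists unless $\PH$ collapses to $\BPP^{\NP}$. This establishes that \textsc{$\mathcal{F}$-RegularGraph$\big[n, 4, x, \mathsf{U} \big]$} is $\# \P$-hard, as claimed.

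The construction is essentially immediate, so there is no deep obstacle; the only thing requiring a little care is bookkeeping. One must verify that after the two rounds of vertex deletion the surviving planar grid still has side length $\Omega(\sqrt n)$ (so that it encodes a scalable, genuinely hard MBQC instance), and that the wrap-around edges at the seams are honest edges rather than multi-edges or self-loops, which holds as long as each side length is at least $3$. Note that, unlike the higher-regularity constructions to follow, no local complementation or gadgetry is needed here---only the vertex-deletion special case of \cref{resource states}---so this case serves as the easy base for the subsequent constructions at larger $k$.
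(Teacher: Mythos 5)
Your proof is correct and follows essentially the same route as the paper's: wrap the $\sqrt n \times \sqrt n$ grid onto a torus to obtain a $4$-regular graph, cut it open by vertex deletion to recover a planar $\Omega(\sqrt n)\times\Omega(\sqrt n)$ grid, and invoke its MBQC universality together with \cref{resource states} and \cref{lemma: sampling}. The extra bookkeeping you note (perfect-square padding, avoiding multi-edges at the seams) is a harmless refinement the paper leaves implicit.
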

\begin{proof}
The proof is the same as that of Proposition \ref{hexagonal lattice}, except we start with an $n$--vertex grid graph, instead of the hexagonal lattice. The proof is also illustrated in Figure $2$(a) of the main text.
\end{proof}

\begin{proposition}
\label{Secondhardgraph}
There is an explicit $\mathcal{F}$ such that \textsc{$\mathcal{F}$-RegularGraph$\big[n, k, x, \mathsf{U} \big]$} is $\# \P$-hard, for $k=n-5$ and $k=n-4$.
\end{proposition}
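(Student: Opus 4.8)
The plan is to derive these two cases from the regular hard graphs already built at the \emph{low} end of the regularity spectrum, by complementing them and then invoking the duality theorem (\cref{duality theorem for prob}). For $k=n-4$, let $G$ be the hexagonal lattice wrapped on a torus --- the $n$-vertex, $3$-regular graph of \cref{hexagonal lattice} --- and for $k=n-5$ let $G$ be the grid graph wrapped on a torus --- the $n$-vertex, $4$-regular graph of \cref{gridgraph1}. Since complementation sends a $j$-regular $n$-vertex graph to an $(n-1-j)$-regular one, $\overline{G}$ is $(n-4)$-regular in the first case and $(n-5)$-regular in the second; take $\mathcal{F}$ to be the family of these complements, one for each admissible $n$. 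By \cref{hardness grid graph,resource states} it then suffices to exhibit, for each $\overline{G}$, a sequence of vertex deletions and local complementations transforming $\overline{G}$ into a known universal resource state for MBQC.

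First I would recall, from the proofs of \cref{hexagonal lattice,gridgraph1}, that the torus can be ``cut open'': there is a vertex subset $W\subseteq V(G)$ such that the induced subgraph $G[W]$ (the graph on $W$ keeping exactly those edges of $G$ with both endpoints in $W$) is \emph{exactly} a planar hexagonal (resp.\ grid) lattice $H$ of $\Omega(\sqrt n)\times\Omega(\sqrt n)$ vertices --- exactly, because every wrap-around edge of $G$ is incident to some vertex outside $W$ and so is removed along with it. The key elementary fact is that passing to an induced subgraph commutes with complementation: for any $W$, $(\overline{G})[W]=\overline{G[W]}$, since a pair $u,v\in W$ is a non-edge of $G$ iff it is a non-edge of $G[W]$. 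Consequently, starting from $\overline{G}$ and deleting the vertices of $V(G)\setminus W$ --- a legal sequence of vertex deletions by \cref{operations} --- produces exactly $\overline{H}$, the complement of a planar hexagonal (resp.\ grid) lattice.

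By \cref{duality theorem for prob}, $\overline{H}$ is a resource state for MBQC; in fact its proof gives an explicit reduction of $\overline{H}$ to an $(m-1)\times(m-1)$ planar grid via a single local complementation followed by vertex deletions, where $m=\Omega(\sqrt n)$, so this grid is still a universal resource. Composing the two reductions, $\overline{G}$ reaches a universal resource state through vertex deletions and local complementations, so \cref{resource states} yields that computing $p_x(\overline{G},\mathsf{U})$ to constant multiplicative error --- and hence to inverse-polynomial multiplicative error, which is merely a more precise estimate of the same quantity --- is $\#\P$-hard for every $x\in\{0,1\}^n$. This is the statement of the proposition; feeding it into \cref{lemma: sampling} additionally rules out efficient classical exact samplers for $\mathcal{D}(\overline{G},\mathsf{U})$ under the standard assumption that $\PH$ does not collapse.

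The step needing the most care is the bookkeeping that makes the two reductions compose without leftover edges: one must verify that the vertices deleted to cut the torus leave $\overline{H}$ \emph{on the nose} rather than $\overline{H}$ plus stray edges --- which is exactly where the identity $(\overline{G})[W]=\overline{G[W]}$ does the work --- and that the planar lattice surviving the duality theorem's local complementation is still of side $\Omega(\sqrt n)$, hence a genuine universal resource and not a bounded-size gadget. Everything else (the regularity counts, and the fact that vertex deletion and local complementation are realized by single-qubit Clifford operations with classical communication) is routine given \cref{operations} and \cref{hardness grid graph,resource states}. The remaining high-regularity range $n/2<k<n-5$ is \emph{not} covered by this simple argument and instead requires complementing the two-copy torus constructions, which is treated separately.
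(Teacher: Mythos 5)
Your proposal is correct and follows essentially the same route as the paper: complement the $3$--regular hexagonal torus (for $k=n-4$) and the $4$--regular grid torus (for $k=n-5$), cut open the boundary by vertex deletions to reach the complement of a planar lattice of side $\Omega(\sqrt n)$, and invoke the duality theorem together with \cref{hardness grid graph,resource states}. The only addition beyond the paper's argument is your explicit verification that induced subgraphs commute with complementation, which the paper leaves implicit.
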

\begin{proof}
$\mathcal{F}$ is the complement of the graph families constructed in \cref{hexagonal lattice} (for $k=n-4$) and \cref{gridgraph1} (for $k=n-5$). In other words, the hard graphs are the complements of the hexagonal lattice or the grid graph, under closed boundary conditions. Using vertex deletion, one could "cut open" the boundary to reach the complement of a hexagonal lattice or a grid graph respectively, of side length $\Omega(\sqrt{n}) \times \Omega(\sqrt{n})$, which are resource states in MBQC, as proven in \cref{duality theorem for prob}. 
\end{proof}

For other hardness results, we need more involved constructions, with gadgets. Particularly, as explained in the main text, we need to invoke the Gale-Ryser theorem. The technical details of that theorem are provided below.

\begin{proposition}
\label{grid graph}
There is an explicit $\mathcal{F}$ such that \textsc{$\mathcal{F}$-RegularGraph$\big[2m^2, k, x, \mathsf{U} \big]$} is $\# \P$-hard, for any $4 < k \leq m^2$.
\end{proposition}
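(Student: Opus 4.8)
The plan is to realize the desired $k$-regular graph on $n = 2m^2$ vertices as two vertex-disjoint copies of an $m \times m$ toroidal grid, glued together by a bipartite graph of the right uniform degree, and then to show that the second copy together with all the gluing edges can be stripped away by $\mathsf{Z}$-basis measurements (vertex deletions), leaving a planar grid graph, which is a known resource state for MBQC. Concretely, let $H_1$ and $H_2$ be two vertex-disjoint copies of the $m \times m$ grid graph with periodic boundary conditions; for $m \geq 3$ each is a simple $4$-regular graph on $m^2$ vertices. I take $G$ to be the disjoint union $H_1 \cup H_2$ together with a bipartite graph $B$ between $V(H_1)$ and $V(H_2)$ in which every vertex has degree exactly $k-4$. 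Then every vertex of $G$ has degree $4 + (k-4) = k$ and $|V(G)| = 2m^2 = n$.

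The existence of $B$ is exactly where the Gale--Ryser theorem enters. Since $4 < k \leq m^2$, we have $1 \leq k-4 \leq m^2 - 4 < m^2$, so the constant degree sequences $(k-4,\dots,k-4)$ on each side of a bipartition of size $m^2$ satisfy the Gale--Ryser feasibility conditions, and hence a simple bipartite graph $B$ with these degrees exists. The Gale--Ryser theorem is constructive (equivalently, $B$ may be taken to be a union of $k-4$ distinct perfect matchings between the two sides, e.g.\ $k-4$ cyclic shifts), so the resulting family $\mathcal{F}$ is explicit and efficiently constructible. Now, given $G \in \mathcal{F}$, I delete the $m^2$ vertices of $H_2$ one at a time: by the discussion in \cref{operations} and Eq.~\eqref{gates}, each such deletion is implemented by a $\mathsf{Z}$-basis measurement of the corresponding qubit of $\ket{G}$ and produces at most $\mathsf{Z}$ corrections on the neighbors of the deleted vertex, which are single-qubit Cliffords absorbable into the final layer $\mathsf{U}$. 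This removes every edge of $B$ and of $H_2$, so the resulting graph is exactly $H_1$, the $m \times m$ toroidal grid. Exactly as in \cref{gridgraph1}, I then cut the torus open by further vertex deletions---removing one ``horizontal'' cycle and one ``vertical'' cycle of $m$ vertices each (sharing one vertex)---to obtain an $(m-1) \times (m-1)$ planar grid graph, which is a resource state for MBQC. Since $G$ is thus reducible to a known resource state by vertex deletions alone, \cref{hardness grid graph} together with \cref{resource states} implies that \textsc{$\mathcal{F}$-RegularGraph$\big[2m^2, k, x, \mathsf{U}\big]$} is $\#\P$-hard in the worst case over $\mathsf{U}$, for every $x \in \{0,1\}^{2m^2}$.

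The only genuinely nontrivial point is the existence of the gluing graph $B$: one must ensure that the two $4$-regular toroidal blocks can be boosted to the uniform degree $k$ by a \emph{bipartite} graph between them---so that deleting one block cleanly restores a planar-grid-bearing graph---and that this is feasible across the full range $4 < k \leq m^2$. This is precisely the content supplied by Gale--Ryser; once $B$ is fixed, the remainder of the argument is routine, since vertex deletion corresponds to local operations that leave the $\#\P$-hardness of estimating output probabilities invariant. A small amount of bookkeeping remains: checking the degree arithmetic ($k-4 \geq 1$ forces $k > 4$, while $k - 4 \leq m^2$ is implied by $k \leq m^2$), verifying that $m \geq 3$ so the toroidal grid is simple and genuinely $4$-regular, and noting that the recovered $(m-1)\times(m-1)$ grid is of growing size and hence a bona fide universal MBQC resource. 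I do not expect any of these to present difficulties.
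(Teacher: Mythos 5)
Your proof is correct and follows essentially the same route as the paper: two toroidal $m\times m$ grids joined by a $(k-4)$-regular bipartite gluing graph whose existence is certified by Gale--Ryser, followed by deletion of the second torus and cutting the first one open to recover a planar grid resource state. Your added observation that the gluing graph can be taken explicitly as a union of $k-4$ cyclic-shift perfect matchings is a nice simplification, but it does not change the structure of the argument.
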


\begin{proof}
 Start from two $m \times m$ grid graphs. Then, consider closed boundary conditions which is equivalent to putting them on a torus. Thereafter, have edges between the two tori to make every vertex $k$ regular. This is illustrated in Figure $2$(b) of the main text. 
 
 To argue that a valid connection pattern exists, we make use of the Gale-Ryser theorem, the details of which are given below. This suffices for our proof, because one could always delete one torus to reach an object that is a resource state for measurement based quantum computing, by \cref{gridgraph1}. 
 
We need to show that there is a way to connect the vertices such that the final graph is $k$--regular. To boost the regularity of each vertex of the two tori, on either side, to $k$, there are
$m^2$ vertices that we need to add $k-4$ extra edges to.

Since all the edges are ``across'' the two different grid graphs, and no edge is added ``within'' any grid, we can think of our situation as trying to construct a bipartite graph, with $m^2$ vertices on either side, such that the degrees on either side, given by $\mathsf{A}$ and $\mathsf{B}$, are
\begin{equation}
\label{Set}
    \mathsf{A} = \mathsf{B} = \bigg(\underbrace{k-4, \ldots, k-4}_{m^2 ~\text{times}}\bigg).
\end{equation}
Now, we check the conditions of the Gale-Ryser theorem to prove that such degree sequences indeed correspond to a valid bipartite graph. Note that the proof of the Gale-Ryser theorem is constructive. 

Although we do not explicitly construct the graph here, we note that it can be easily done by following the steps of \cite{Ryser2009}.
\begin{center}
\textbf{First condition}
\end{center}
\noindent The first condition of the Gale-Ryser theorem is the following:

\begin{equation}
    \sum_{i=1}^{m^2} a_i = \sum_{i=1}^{m^2} b_i,
\end{equation}
 where $a_i$ is the $i^{\text{th}}$ element of sequence $\mathsf{A}$ and $b_i$ is the $i^{\text{th}}$ element of sequence $\mathsf{B}$.
 
 Note that for our case, $a_i = b_i$ for every $i$. Hence, this statement is trivially true, as can be seen from \eqref{Set}. Now, let us state the second condition.
\begin{center}

\textbf{Second condition}
\end{center}

\noindent For any $1 \leq p \leq m^2$, we need
\begin{equation}
    \sum_{i=1}^p a_i \leq \sum_{i=1}^{m^2} \text{min}(b_i, p).
\end{equation}
\noindent We break the analysis of this condition into three cases, depending on the value of $p$.

\begin{center}
\textbf{First case}
\end{center}

\noindent Let us take
\begin{equation}
    1 \leq p \leq k - 4.
\end{equation}
Then, the RHS becomes $p m^2$. The LHS is upper bounded by $p (k -4)$. Hence, the RHS is larger for any $k \leq m^2$ and the second condition is satisfied.

\begin{center}
\textbf{Second case}
\end{center}

\noindent Let us take 
\begin{equation}
k - 3 \leq p.
\end{equation}
Then, the RHS is
\begin{equation}
    (k-4)m^2.
\end{equation}
The LHS is
\begin{equation}
    (k-4)p.
\end{equation}
Hence, the RHS is always greater than or equal to the LHS.
To conclude, we have identified a way to connect two tori, each with $m^2$--vertices, together such that the resultant graph is $k$-regular, for any $4 \leq k \leq m^2$.
\end{proof}

\begin{remark}
Let $n = 2m^2$. Then, by \cref{gridgraph1}, \textsc{$\mathcal{F}$-RegularGraph$\big[n, k, x, \mathsf{U} \big]$} is $\# \P$-hard, for any $4 < k \leq n/2$. 
\end{remark}

\begin{corollary}
\textsc{RegularGraph$\big[n, k, x, \mathsf{U} \big]$} is $\# \P$-hard, for any $4 < k \leq n/2$. 
\end{corollary}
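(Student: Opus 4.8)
The plan is to obtain this corollary essentially for free from \cref{grid graph} (together with the remark immediately following it), via the elementary observation that \textsc{$\mathcal{F}$-RegularGraph} is precisely the restriction of \textsc{RegularGraph} to those instances whose graph $G$ belongs to the subfamily $\mathcal{F}\subseteq\mathcal{G}_k$. Hence any polynomial-time algorithm — or, in the Turing reductions underlying $\#\P$-hardness, any oracle — solving \textsc{RegularGraph}$[n,k,x,\mathsf{U}]$ in particular answers every query of the form \textsc{$\mathcal{F}$-RegularGraph}$[n,k,x,\mathsf{U}]$. Therefore the $\#\P$-hardness established in \cref{grid graph} for $4<k\le n/2$ (at sizes $n=2m^2$) transfers with no new work: the family $\mathcal{F}$ constructed there already witnesses hardness of the general task. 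Combined with \cref{lemma: sampling}, this also rules out an efficient classical exact sampler for $\mathcal{D}(G,\mathsf{U})$ on these graphs unless the polynomial hierarchy collapses to $\BPP^{\NP}$.

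The only point worth a remark is the range of $n$. \cref{grid graph} produces hard instances only when $n=2m^2$, which already suffices to call the problem family $\#\P$-hard, since hardness is witnessed along an infinite sequence of input lengths. If one wants hard instances at every $n$ in the regime $k=O(\sqrt{n})$, one can pad: take the hard graph $G_1\in\mathcal{F}$ on the largest admissible $2m^2\le n$, pick any $k$--regular graph $G_2$ on the remaining $n-2m^2$ vertices (which exists whenever $n-2m^2\ge k+1$ and $(n-2m^2)k$ is even — for instance a disjoint union of copies of $K_{k+1}$), and set $G=G_1\sqcup G_2$. Since $\ket{G}=\ket{G_1}\otimes\ket{G_2}$, output probabilities factorize as $p_x(G,\mathsf{U})=p_{x_1}(G_1,\mathsf{U}_1)\,p_{x_2}(G_2,\mathsf{U}_2)$; choosing $\mathsf{U}_2=\id$ and $x_2=0^{\,n-2m^2}$ makes the second factor equal to the known nonzero number $2^{-(n-2m^2)}$, so a multiplicative-error estimate of $p_x(G,\mathsf{U})$ immediately yields one of $p_{x_1}(G_1,\mathsf{U}_1)$, while $(G_1,x_1,\mathsf{U}_1)$ ranges over all of \textsc{$\mathcal{F}$-RegularGraph}. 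The easiness of the $G_2$-factor can alternatively be supplied by \cref{complete graph}.

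I expect the only genuine subtlety — the one thing that could be called an obstacle — is this arithmetic of matching a padding block to $n$ when $k$ is a constant fraction of $n$: then there is simply not enough room for an independent $k$--regular padding block, so one must either stay at sizes $n=2m^2$ or fold the surplus vertices directly into the Gale--Ryser construction of \cref{grid graph}, letting the two tori carry slightly different dimensions and rebalancing the bipartite degree sequences, which is a routine but fiddly modification rather than a conceptual difficulty. Everything else is bookkeeping.
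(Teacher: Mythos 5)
Your proposal is correct and takes essentially the same approach as the paper: the corollary is obtained directly from \cref{grid graph} and the remark following it, via the observation that \textsc{$\mathcal{F}$-RegularGraph} is a restriction of \textsc{RegularGraph}, so hardness of the subfamily transfers to the general task with no further work. Your additional padding construction for input lengths $n \neq 2m^2$ addresses a point the paper leaves implicit (hardness along the infinite sequence $n=2m^2$ already suffices) and is a sound, if optional, refinement.
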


Now, we can use the duality theorem to argue the hardness of $n$--vertex, $k$--regular graphs with $n/2 + 1 \leq k \leq n-4$. Every hard graph for these families is a complement of the hard graphs we just explicitly constructed.

\begin{proposition}
\label{other hard graph}
There is an explicit, efficiently constructible $\mathcal{F}$ such that \textsc{$\mathcal{F}$-RegularGraph$\big[2m^2, k, x, \mathsf{U} \big]$} is $\# \P$-hard, for any $m^2 + 1 \leq k < 2m^2 - 5$..
\end{proposition}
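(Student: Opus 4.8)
The plan is to take $\mathcal{F}$ to be the family of \emph{complements} of the hard graphs constructed in \cref{grid graph}, and to reduce each such complement to a measurement‑based quantum computing (MBQC) resource state using only vertex deletion and local complementation. Once that reduction is in hand, $\#\P$‑hardness of \textsc{$\mathcal{F}$-RegularGraph} follows from Lemma~\ref{hardness grid graph} and Lemma~\ref{resource states}, and the impossibility of an efficient classical sampler (assuming the $\PH$ does not collapse to $\BPP^{\NP}$) follows from Lemma~\ref{lemma: sampling}.

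The first step is parameter bookkeeping. Fix $n = 2m^2$ and a target regularity $k$ with $m^2+1 \le k < 2m^2-5$, equivalently $m^2+1 \le k \le 2m^2-6$ for integer $k$. Set $k' = n-1-k = 2m^2-1-k$; as $k$ ranges over this interval, $k'$ ranges over $[5,\, m^2-2]$, which lies inside the range $(4, m^2]$ covered by \cref{grid graph}. Hence \cref{grid graph} supplies an explicit, efficiently constructible $k'$‑regular graph $G$ on $n$ vertices, consisting of two $m\times m$ grid graphs with closed boundary conditions (two $4$‑regular tori $T_1, T_2$) joined by a bipartite connector that adds $k'-4$ edges to every vertex, all of them between $T_1$ and $T_2$. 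Let $\mathcal{F}$ be the set of the corresponding graphs $\overline{G}$; since complementation is polynomial, $\mathcal{F}$ is efficiently constructible, and $\overline{G}$ is $(n-1-k')$‑regular, i.e.\ $k$‑regular, as required.

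The core step is the reduction of $\overline{G}$ to a resource state. First, delete from $\overline{G}$ the $m^2$ vertices that belong to $T_2$ in $G$. Because the only edges of $G$ among the vertices of $T_1$ are the torus edges (the connector edges all go across the two sides), the induced subgraph $\overline{G}[V(T_1)]$ is precisely $\overline{T_1}$, the complement of an $m\times m$ grid graph on a torus. Next, delete from $\overline{T_1}$ the $2m-1$ vertices that form one full row together with one full column of $T_1$. Deleting a vertex from a complement graph commutes with complementation ($\overline{H}\setminus v = \overline{H\setminus v}$), and deleting a row and a column is exactly what ``cuts the torus open'' into a planar grid, so the result is the complement of an $(m-1)\times(m-1)$ planar grid graph. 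Now invoke the duality theorem \cref{duality theorem for prob}: the complement of an $(m-1)\times(m-1)$ grid graph is a resource state, being reducible by one local complementation and a few further vertex deletions to an $(m-2)\times(m-2)$ planar grid, which is a standard MBQC resource state (and has $\Theta(n)$ vertices, hence remains a genuine resource as $n\to\infty$). Composing the three reductions exhibits a sequence of vertex deletions and local complementations taking $\overline{G}$ to an MBQC resource state, which completes the proof.

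The main obstacle is not any new hardness argument — everything downstream of ``the complement of a planar grid'' is exactly the already‑proven \cref{duality theorem for prob}, and the analogue for $k=n-5,n-4$ was handled in \cref{Secondhardgraph}. Rather, the delicate points are (i) verifying that $k' = n-1-k$ stays inside the window $(4,m^2]$ over the \emph{entire} target interval $m^2+1 \le k < 2m^2-5$, so that \cref{grid graph} is applicable, and (ii) checking that $\overline{G}[V(T_1)] = \overline{T_1}$, which hinges on the Gale–Ryser connector of $G$ introducing no edges internal to $T_1$, and that deleting a row plus a column of $T_1$ from $\overline{T_1}$ indeed yields the complement of a clean planar $(m-1)\times(m-1)$ grid with no surviving wrap‑around edges. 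These are short combinatorial checks, but they are where the construction would break if $T_1$ were not $4$‑regular or if the connector had been allowed to act within a single torus.
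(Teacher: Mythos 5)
Your proposal is correct and follows essentially the same route as the paper: complement the two-tori graphs of \cref{grid graph}, delete one torus copy to land on the complement of a torus grid, cut it open by deleting a row and column, and invoke \cref{duality theorem for prob}. Your explicit parameter check ($k' = 2m^2-1-k \in [5, m^2-2] \subset (4, m^2]$) and the observation that complementation commutes with vertex deletion are details the paper leaves implicit, but they do not change the argument.
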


\begin{proof}
From Proposition \ref{grid graph}, we have already seen how to construct an $\mathcal{H}$ such that \textsc{$\mathcal{H}$-RegularGraph$\big[2m^2, t, x, \mathsf{U} \big]$} is $\# \P$-hard for $4 < t \leq m^2$. For a particular choice of $m$, let $H \in \mathcal{H}$ be a $2m^{2}$--vertex, $t$--regular graph, of the type that was constructed in the proof of \cref{grid graph}. So, $H$ comprises of two copies of a torus connected to each other in a certain way.

Let $G$ be the complement of $H$. $G$ is $k$--regular, where $k = 2m^2 - t - 1$. Then, by deleting all vertices on one copy of the torus, we can reach the complement of a grid graph from $H$, under closed boundary conditions. Then, we can "cut open" the boundary by vertex deletion to reach the complement of a grid graph whose side lengths are $\Omega(\sqrt{n}) \times \Omega(\sqrt{n})$. By \cref{duality theorem for prob}, this is a resource state for measurement based quantum computation, and hence the proof follows. 
\end{proof}

\begin{remark}
Let $n = 2m^2$. Then, by \cref{other hard graph}, \textsc{$\mathcal{F}$-RegularGraph$\big[n, k, x, \mathsf{U} \big]$} is $\# \P$-hard, for any $n/2 \leq k < n-5$. 
\end{remark}

\begin{corollary}
\label{secondcor}
\textsc{RegularGraph$\big[n, k, x, \mathsf{U} \big]$} is $\# \P$-hard, for any $n/2 < k < n-5$. 
\end{corollary}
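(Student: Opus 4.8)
The plan is to obtain \cref{secondcor} from \cref{other hard graph} (together with the Remark immediately preceding it) by exactly the same trivial reduction that gave the corollary after \cref{grid graph}. First I would observe that, for $n$ of the form $2m^2$, the explicit family $\mathcal{F}$ produced by \cref{other hard graph} consists of $n$--vertex, $k$--regular graphs, so $\mathcal{F}\subseteq\mathcal{G}_k$ and every instance of \textsc{$\mathcal{F}$-RegularGraph$\big[n,k,x,\mathsf{U}\big]$} is a legal instance of \textsc{RegularGraph$\big[n,k,x,\mathsf{U}\big]$}. Hence any polynomial-time algorithm for the latter also solves the former; since the former is $\#\P$-hard for $m^2+1\le k<2m^2-5$, i.e.\ (writing $n=2m^2$) for $n/2<k<n-5$, the task \textsc{RegularGraph$\big[n,k,x,\mathsf{U}\big]$} is $\#\P$-hard in the same range of $k$ for those $n$. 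The underlying $\#\P$-hardness is the one supplied by \cref{hardness grid graph} and \cref{resource states}, applied to the resource state reachable from $G\in\mathcal F$ by vertex deletion and local complementation via \cref{duality theorem for prob}.

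Next I would handle values of $n$ not of the form $2m^2$ by a padding argument: given the target $n$, take a hard $k$--regular instance $H$ on $2m^2$ vertices (with $m$ chosen appropriately) and form the disjoint union $H\sqcup H'$ with a $k$--regular graph $H'$ on the remaining vertices. Because $\ket{H\sqcup H'}=\ket H\otimes\ket{H'}$, the output probability factorizes,
\begin{equation}
  p_x\big(H\sqcup H',\mathsf{U}\big)=p_{x|_H}\big(H,\mathsf{U}\big)\cdot p_{x|_{H'}}\big(H',\mathsf{U}\big),
\end{equation}
so choosing the local rotations on $H'$ to make $p_{x|_{H'}}(H',\mathsf{U})$ a fixed nonzero constant converts an inverse-polynomial multiplicative-error estimate for the padded graph into one for $H$; deleting the $H'$-vertices also recovers the resource state. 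The one place that needs care here is the bookkeeping: one must keep $k$ strictly inside the window $n/2<k<n-5$ after padding, and ensure enough base constructions are available (for instance by allowing the two tori in \cref{grid graph} to have unequal, non-square dimensions of matched cardinality) so that the achievable set of $n$ is dense enough; disjoint-union padding alone is delicate precisely when $k$ is close to $n$.

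In short, the genuinely substantive work — that $k$--regular parent graphs of the required regularity exist at all, and that they reduce by single-qubit operations to a known MBQC resource state — is already packaged in \cref{other hard graph}, which rests on the duality theorem (\cref{duality theorem for prob}) and on the Gale--Ryser construction of \cref{grid graph}; here I would simply cite those. So the ``main obstacle'' for this final statement is not a new idea but the careful verification that the admissible range $n/2<k<n-5$ and the range of accessible vertex counts line up; everything else is the one-line reduction $\mathcal{F}\text{-}\textsc{RegularGraph}\preceq\textsc{RegularGraph}$ plus \cref{lemma: sampling} for the sampling consequences, if those are desired.
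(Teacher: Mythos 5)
Your first paragraph is exactly the paper's argument: the paper obtains this corollary immediately from \cref{other hard graph} via the preceding remark setting $n=2m^2$, using precisely the trivial inclusion $\mathcal{F}\subseteq\mathcal{G}_k$ so that hardness of \textsc{$\mathcal{F}$-RegularGraph} transfers to \textsc{RegularGraph}. Your padding discussion for $n$ not of the form $2m^2$ goes beyond the paper, which silently restricts to $n=2m^2$ and never addresses general vertex counts — so that extra care (and your correct observation that disjoint-union padding breaks down when $k$ is close to $n$) flags a loose end in the paper itself rather than a divergence from its proof.
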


\begin{proposition}
\textsc{RegularGraph$\big[n, k, x, \mathsf{U} \big]$} is $\# \P$-hard, for any $n/2 < k \leq n-5$.
\end{proposition}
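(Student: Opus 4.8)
The plan is to close the last remaining gap in the hard regime by combining two facts already established. For the open interval $n/2 < k < n-5$ the statement is exactly \cref{secondcor}, so nothing new is needed there. The only additional value required to reach $n/2 < k \leq n-5$ is the single endpoint $k = n-5$, and this is already supplied by \cref{Secondhardgraph}, since the complement of an $n$-vertex grid graph with closed boundary conditions (a $4$-regular graph) is $(n-5)$-regular. Thus the proof will consist of: invoking \cref{secondcor} for $n/2 < k < n-5$; invoking \cref{Secondhardgraph} for $k = n-5$; and observing that the union of these two ranges is precisely $(n/2,\, n-5]$.

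For completeness I would also recall why the two inputs hold. In \cref{secondcor} one starts from the $t$-regular two-torus graphs of \cref{grid graph}, which exist for every $4 < t \leq n/2$ by the Gale--Ryser argument; one passes to the complement to obtain an $(n-t-1)$-regular graph, deletes one of the two tori, and then applies the duality theorem \cref{duality theorem for prob} to reduce the remaining complement of a grid graph on the torus -- via one local complementation, the deletion of the three distinguished vertices, and the deletion of one row and one column -- to a planar grid-graph resource state; by \cref{resource states} (together with \cref{hardness grid graph}) the output probabilities are then $\#\P$-hard, covering every $k = n-t-1 \in (n/2, n-5)$. For the endpoint $k = n-5$ one instead complements the bare grid graph on the torus of \cref{gridgraph1}, ``cuts it open'' by vertex deletion to expose a planar grid complement of side length $\Omega(\sqrt n)$, and again applies \cref{duality theorem for prob} and \cref{resource states}; this is the content of \cref{Secondhardgraph}.

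I do not expect a genuinely new obstacle at this step: the substantive work -- the Gale--Ryser degree-sequence check that the interconnection pattern between the two tori exists, and the duality theorem that reduces the complement of a grid or hexagonal lattice back to a grid or hexagonal resource state -- has already been done in the earlier propositions. The one thing to be careful about is the endpoint bookkeeping, namely verifying that $(n/2, n-5) \cup \{\, n-5\, \}$ leaves no value of $k$ in $(n/2, n-5]$ uncovered; since the first set is the open interval and the second contributes exactly its right endpoint, there is no gap. As in the earlier constructions, when $n$ is not of the special form used in \cref{grid graph} one first pads to the nearest admissible size; since every reduction involved uses only vertex deletions and single-qubit rotations, this padding does not affect the argument.
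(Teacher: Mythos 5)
Your proposal is correct and matches the paper's intended argument: the paper states this proposition immediately after \cref{secondcor} precisely so that the open interval $n/2 < k < n-5$ is covered by \cref{secondcor} and the endpoint $k=n-5$ by \cref{Secondhardgraph} (the complement of the $4$--regular grid graph on the torus being $(n-5)$--regular). The supporting details you recall (Gale--Ryser, the duality theorem, and the reductions via \cref{resource states}) are the same ones the paper relies on.
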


Finally, the proof of \cref{hard case} follows from Proposition \ref{hexagonal lattice}, Proposition \ref{grid graph}, \cref{Secondhardgraph}, \cref{resource states}, and \cref{secondcor}.

\subsection{Bounds on the entanglement width}
We prove some conditional and unconditional lower bounds on the clique width of the graphs constructed in the previous section.

\subsubsection{Unconditional bounds}
\begin{corollary}
\label{cor1}
Let ${G}$ be any $n$--vertex $k$--regular graph constructed in the proofs of any one of Proposition \ref{hexagonal lattice}, \cref{gridgraph1}, Proposition \ref{grid graph}, or Proposition \ref{other hard graph}. Then,
\begin{equation}
    \cw(G) = \Omega(\sqrt{n}).
\end{equation}
Consequently,
\begin{equation}
   \ew(\ket{G}) = \Omega(\log n).
\end{equation}
\end{corollary}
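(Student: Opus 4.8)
The plan is to lower-bound the clique width of each of the four families of hard graphs by $\Omega(\sqrt n)$, and then to convert this into a lower bound on the entanglement width via the inequality $\cw(G)\le 2^{\ew(\ket G)+1}-1$ recorded at the end of \cref{appendix: width measures}, which rearranges to $\ew(\ket G)\ge\log_2(\cw(G)+1)-1$; if $\cw(G)=\Omega(\sqrt n)$ this is $\Omega(\log n)$, as claimed.

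The workhorse is the observation that a planar $\ell\times\ell$ grid graph has clique width $\Omega(\ell)$. Indeed, the $\ell\times\ell$ grid has tree width exactly $\ell$ (a classical fact; see \cite{Diestel2017}), and, being planar, it contains no copy of $K_{3,3}$ as a subgraph. Applying \cref{tree width and clique width} with $t=3$ gives $\ell=\tw(G)\le 3\cw(G)\cdot 2-1$, hence $\cw(G)\ge(\ell+1)/6$. The same argument applies to an $\ell\times\ell$ patch of the hexagonal lattice, which is planar (so again $K_{3,3}$-free) and has tree width $\Omega(\ell)$, since it has an $\Omega(\ell)\times\Omega(\ell)$ grid minor and tree width is minor-monotone.

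Next I would apply this to the individual constructions. For the hexagonal torus of \cref{hexagonal lattice} and the grid torus of \cref{gridgraph1}, the vertex deletions that ``cut the torus open'' (described in those proofs) leave, as an \emph{induced} subgraph, a planar hexagonal lattice, respectively a planar grid, of side $\Omega(\sqrt n)$; since clique width does not increase under vertex deletion (\cref{vertex deletion}), the previous paragraph gives $\cw(G)=\Omega(\sqrt n)$. For the two-tori graph of \cref{grid graph} on $n=2m^2$ vertices, deleting every vertex of one of the two tori leaves, as an induced subgraph, a single $m\times m$ grid on a torus (all the cross edges disappear), so by \cref{vertex deletion} and the bound for the grid torus, $\cw(G)=\Omega(m)=\Omega(\sqrt n)$. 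Finally, each graph $G$ of \cref{other hard graph} is the complement $\overline H$ of a graph $H$ produced by \cref{grid graph}, so the clique-width complement bound $\cw(\overline H)\ge\tfrac12\cw(H)$ (\cref{complement}, \cite{Courcelle2000}) yields $\cw(G)=\Omega(\sqrt n)$. Note that one must route through complementation and vertex deletion here rather than through the local complementations used in the duality argument, since local complementation need not preserve clique width.

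Combining the cases, every $G$ in these families has $\cw(G)=\Omega(\sqrt n)$, whence $\ew(\ket G)\ge\log_2(\cw(G)+1)-1=\Omega(\log n)$. The only part needing real care is the bookkeeping: one must verify that ``cutting open'' each torus genuinely produces an \emph{induced} subgraph that is a planar grid (resp.\ hexagonal patch) of side $\Theta(\sqrt n)$, so that \cref{vertex deletion} applies; the external input that the $\ell\times\ell$ grid has tree width $\Theta(\ell)$ is standard. Everything else is a direct chain through the width-measure inequalities already assembled in \cref{appendix: width measures}.
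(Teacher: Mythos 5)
Your proof is correct and follows essentially the same route as the paper's: lower-bound the clique width of the planar grid/hexagonal lattice by $\Omega(\sqrt n)$, propagate this to the hard graphs via \cref{vertex deletion} and \cref{complement}, and convert to entanglement width through \cref{rank width} and \cref{graph states entanglement width}. The only difference is that you derive the $\Omega(\sqrt n)$ clique-width bound for the grid from scratch (tree width $\ell$, $K_{3,3}$-freeness of planar graphs, and \cref{tree width and clique width} with $t=3$), whereas the paper simply cites this bound from \cite{Dabrowski2019,Golumbic1999,Kaminski2009}; your derivation is a valid, self-contained substitute.
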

\begin{proof}
For each $G$, we can reach either a grid graph or a hexagonal lattice by vertex deletion, or reach a complement of either. From \cite{Dabrowski2019,Golumbic1999,Kaminski2009}, the $n$--vertex grid graph and the $n$--vertex hexagonal lattice have a clique width $\Omega(\sqrt{n})$. Since, by Lemma \ref{complement},
\begin{equation}
    \cw(\overline{G}) \geq \frac{1}{2} \cw({G}),
\end{equation}
the complement of the grid and the hexagonal lattice also have clique width $\Omega(\sqrt{n})$.

By Lemma \ref{vertex deletion}, since vertex deletion does not increase the clique width, the explicit graphs constructed in Proposition \ref{grid graph} and Proposition \ref{other hard graph} also have clique width $\Omega(\sqrt{n})$. Consequently, the bound on entanglement width follows from Lemma \ref{rank width} and Lemma \ref{graph states entanglement width}.
\end{proof}

\subsubsection{Conditional bounds}
\label{conditional}
\begin{corollary}
\label{entanglement width}
Let $\mathcal{F}$ be any $k$--regular graph family such that \textsc{$\mathcal{F}$-RegularGraph$\big[n, k, x, \mathsf{U}\big]$}, is $\# \class{P}$-hard, for any $x \in \{0, 1\}^n$.
Then, assuming $\BPP \neq \P^{\# \P}$, 
    \begin{equation}
   \ew(\ket{G}) = \omega(\log n),
\end{equation}
for some $G \in \mathcal{F}$.
\end{corollary}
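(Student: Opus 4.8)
\emph{Proof plan.} We argue the contrapositive. Suppose the conclusion fails, i.e.\ $\ew(\ket G) = O(\log n)$ uniformly over the family $\mathcal F$. The plan is to show that then $\textsc{$\mathcal{F}$-RegularGraph}\big[n,k,x,\mathsf{U}\big]$ is solvable in deterministic polynomial time, and then to run this collision into the $\#\P$-hardness hypothesis so hard that $\P^{\#\P}$ collapses down to $\BPP$.

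First I would invoke \cref{entanglement width and simulations}: for an $n$-qubit graph state $\ket G$, the probability $p_x(G,\mathsf U)$ can be computed \emph{exactly} in time $\mathrm{poly}\!\left(n, 2^{\ew(\ket G)}\right)$ for every $\mathsf U$ and every $x$, and (as noted immediately after that lemma) when $\ew(\ket G) = O(\log n)$ the optimal tree decomposition can itself be found in polynomial time. Hence, under the assumed bound, this running time is $\mathrm{poly}(n)$, and exact computation trivially yields an inverse-polynomial multiplicative estimate of $p_x(G,\mathsf U)$; that is, $\textsc{$\mathcal{F}$-RegularGraph}\big[n,k,x,\mathsf{U}\big] \in \mathsf{FP}$. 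Combining this with the hypothesis that this task is $\#\P$-hard, a $\#\P$-hard problem lying in $\mathsf{FP}$ forces $\#\P \subseteq \mathsf{FP}$, hence $\P^{\#\P} = \P$. Since $\P \subseteq \BPP$ and, by the standard inclusion $\BPP \subseteq \PP \subseteq \P^{\#\P}$, we obtain $\BPP \subseteq \P^{\#\P} = \P \subseteq \BPP$, i.e.\ $\BPP = \P^{\#\P}$, contradicting the assumed separation $\BPP \neq \P^{\#\P}$. Therefore $\ew(\ket G)$ cannot be $O(\log n)$ uniformly over $\mathcal F$; concretely, there is $G \in \mathcal F$ with $\ew(\ket G) = \omega(\log n)$, and by \cref{graph states entanglement width} the same superlogarithmic lower bound then transfers to $\rw(G)$, and via \cref{rank width} to $\cw(G)$ as well.

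The only genuinely delicate point is the reading of the asymptotic statement over a family rather than a single graph: since $\mathcal F$ is an explicit, size-parameterized family (boundedly or polynomially many graphs per vertex count), one should interpret ``$\ew(\ket G) = \omega(\log n)$ for some $G \in \mathcal F$'' as a statement about the sequence $n \mapsto \max\{\ew(\ket G) : G \in \mathcal F,\ |V_G| = n\}$, and correspondingly phrase the negated hypothesis so that what is ruled out is a single polynomial-time algorithm correct on \emph{all} of $\mathcal F$ --- which is exactly the regime in which the $\#\P$-hardness of $\textsc{$\mathcal{F}$-RegularGraph}$ is asserted. Everything else is routine bookkeeping with the lemmas already established, so this should be a short argument rather than a technically involved one.
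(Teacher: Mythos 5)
Your proposal is correct and follows essentially the same route as the paper's own proof: negate the conclusion, invoke \cref{entanglement width and simulations} (together with the remark that the optimal tree decomposition is efficiently found when the entanglement width is logarithmic) to place \textsc{$\mathcal{F}$-RegularGraph} in polynomial time, and collide this with the assumed $\#\P$-hardness to force $\BPP = \P^{\#\P}$. The paper states this in three sentences; your version merely spells out the collapse chain and the quantification over the family more explicitly.
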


\begin{proof}
Assume the contrary. Then there is an algorithm to solve \textsc{$\mathcal{F}$-RegularGraph$\big[n, k, x, \mathsf{U}\big]$} in classical polynomial time using \cite{Nest2007a}. But, it was assumed that this task is $\# \P$-hard. This implies $\BPP = \P^{\# \P}$, which is a contradiction.
\end{proof}

By using a stronger conjecture, we can get a better lower bound.

\begin{corollary}
\label{entanglement width3}
Let $\mathcal{F}$ be any $k$--regular graph family such that \textsc{$\mathcal{F}$-RegularGraph$\big[n, k, x, \mathsf{U}\big]$}, is $\# \class{P}$-hard, for any $x \in \{0, 1\}^n$.
Then, assuming $\mathsf{ETH}$ \footnote{\label{footnote2} The Exponential Time Hypothesis (ETH) states that there is no solution to $\#\mathsf{3SAT}$ in time $\mathcal{O}(2^{n^{\epsilon}})$, for some constant $\epsilon < 1$ \cite{Beame2016}. Note that $\#\mathsf{3SAT}$ is a $\#\P$-complete function.}, 
    \begin{equation}
   \ew(\ket{G}) = \Omega\left( n^{\delta}\right),
\end{equation}
for some $G \in \mathcal{F}$, and for some $\delta < 1$.
\end{corollary}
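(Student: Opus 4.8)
The plan is to run the same contradiction argument as in the proof of \cref{entanglement width}, but to carefully track the polynomial blow-up of the underlying reduction so that the logarithmic lower bound is upgraded to a polynomial one. First I would pin down the constant implicit in the $\#\P$-hardness hypothesis: since \textsc{$\mathcal{F}$-RegularGraph}$[n,k,x,\mathsf{U}]$ is $\#\P$-hard and $\#\mathsf{3SAT}$ is $\#\P$-complete, there is a polynomial-time (Turing) reduction from $\#\mathsf{3SAT}$ on $N$ variables to the task, and since a polynomial-time procedure can only write down instances of size $\mathrm{poly}(N)$, every graph $G\in\mathcal{F}$ that appears as a query in this reduction has $n\le N^{c}$ vertices for some absolute constant $c\ge 1$. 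For the explicit families of \cref{grid graph,other hard graph} one can even read $c$ off the concrete pipeline $\#\mathsf{3SAT}\to$ postselected poly-size circuit $\to$ locally rotated grid/hexagonal resource state $\to$ $k$-regular member of $\mathcal{F}$, invoking \cref{hardness grid graph,resource states}.

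Next I would feed this into the tensor-network upper bound. Let $\epsilon_{0}<1$ be the exponent appearing in the form of $\mathsf{ETH}$ in the footnote, set $\delta:=\epsilon_{0}/(2c)$, which is a constant $<1$, and suppose toward a contradiction that $\ew(\ket{G})=O(n^{\delta})$ uniformly over $G\in\mathcal{F}$ --- this negates the claim, mirroring the negation used in \cref{entanglement width}. By \cref{entanglement width and simulations}, each $p_x(G,\mathsf{U})$ is then computable \emph{exactly} in time $\mathrm{poly}(n)\cdot 2^{\ew(\ket{G})}=2^{O(n^{\delta})}$; an exact value a fortiori meets the (weaker) constant-multiplicative-precision requirement of the hardness reductions, so composing with the reduction and substituting $n\le N^{c}$ gives a $2^{O(N^{\epsilon_{0}/2})}$-time algorithm for $\#\mathsf{3SAT}$, contradicting $\mathsf{ETH}$. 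Hence the claim holds; moreover, replacing $\delta$ by any constant below $1/c$ shows $\ew(\ket{G})=\Omega(n^{\delta})$ for some $G\in\mathcal{F}$, where the identification $\ew(\ket{G})=\rw(G)$ of \cref{graph states entanglement width} makes the passage between the entanglement quantity and the graph-width estimates automatic.

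The step I expect to be the real obstacle is the middle of the first paragraph: certifying that $c$ is a genuine \emph{absolute} constant, i.e.\ that the entire route from an arbitrary $\#\mathsf{3SAT}$ instance down to a member of $\mathcal{F}$ blows up the instance size by at most a fixed polynomial, uniformly over instances. Once $c$ is fixed, the rest --- the simulation bound, the arithmetic $c\delta<\epsilon_{0}$, and the verification that a $2^{O(N^{\epsilon_{0}/2})}$ algorithm violates $\mathsf{ETH}$ --- is routine; and a sharper, Stockmeyer-style accounting of the same pipeline (as in the $\mathsf{SETH}$ variant) is precisely what would let one push $\delta$ up toward $1/2$.
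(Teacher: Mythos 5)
Your argument is correct and is essentially the paper's intended proof, which is stated only as ``similar to that of \cref{entanglement width}'': negate the bound, use \cref{entanglement width and simulations} to turn $\ew(\ket{G})=O(n^{\delta})$ into a $2^{O(n^{\delta})}$-time exact solver for the output probabilities, and compose with the polynomial-blowup reduction from $\#\mathsf{3SAT}$ to violate $\mathsf{ETH}$. The bookkeeping of the reduction exponent $c$ that you single out as the main obstacle is exactly the content the paper leaves implicit (and gestures at only in the $\mathsf{SETH}$ variant, where the $\Omega(\sqrt{n})\times\Omega(\sqrt{n})$ grid encoding $\Omega(\sqrt{n})$ logical qubits fixes the exponent), so your more explicit accounting is a faithful filling-in rather than a different route.
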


\begin{corollary}
\label{entanglement width2}
Let $\mathcal{F}$ be any $k$--regular graph family such that \textsc{$\mathcal{F}$-RegularGraph$\big[n, k, x, \mathsf{U}\big]$}, is $\# \class{P}$-hard, for any $x \in \{0, 1\}^n$ and every $G \in \mathcal{F}$ is reducible to a $\Omega(\sqrt{n}) \times \Omega(\sqrt{n})$ grid graph by a sequence of vertex deletions and local complementations. Then, assuming $\mathsf{SETH} ~$\footnote{ \label{footnote3} The Strong Exponential Time Hypothesis (SETH) states that for every constant $\epsilon >0$, there is a $k$ such that there is no solution to $\#\mathsf{kSAT}$ in time $\mathcal{O}\left(2^{n(1-\epsilon)}\right)$ \cite{Beame2016}. Note that $\#\mathsf{kSAT}$ is a $\#\P$-complete function for every $k \geq 3$.}, 
    \begin{equation}
   \ew(\ket{G}) = \Omega\left( n^{1/2}\right),
\end{equation}
for some $G \in \mathcal{F}$.
\end{corollary}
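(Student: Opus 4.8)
\emph{Proposal.} The plan is to argue by contradiction, turning a too-small entanglement width into a too-fast classical counting algorithm. Suppose the conclusion fails, i.e.\ $\ew(\ket{G}) = o(n^{1/2})$ as $n \to \infty$ over $\mathcal{F}$. By \cref{entanglement width and simulations}, for every such $G$ and all $x,\mathsf{U}$ the probability $p_x(G,\mathsf{U})$ can be computed to inverse-polynomial multiplicative error in time $\mathrm{poly}(n)\cdot 2^{\ew(\ket{G})} = 2^{o(n^{1/2})}$. I would then show this yields a $2^{o(m)}$-time algorithm for $\#\mathsf{kSAT}$ on $m$ variables, for \emph{every} $k$, contradicting $\mathsf{SETH}$.

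The reduction must be quantitatively tight. Fix $k$ and a $\#\mathsf{kSAT}$ instance $\phi$ on $m$ variables. First apply the sparsification lemma for counting (see, e.g., \cite{Beame2016}): at the cost of an outer $2^{\epsilon m}$ loop with $\epsilon>0$ an arbitrarily small constant, counting for $\phi$ reduces to at most $2^{\epsilon m}$ counting queries on instances with $O_\epsilon(m)$ clauses; fix one such sparse $\phi'$, whose constraint graph has $m$ vertices and bounded degree. I would lay this graph out in an $O(m)\times O(m)$ grid ($O(m^2)$ cells) in VLSI fashion, replacing edge crossings by constant-size cluster-state cross-over gadgets, and choose the single-qubit measurement bases $\mathsf{U}$ on the resulting $L\times L$ grid graph state $R_N$ ($L=O(m)$, $N=\Theta(m^2)$ vertices) so that $p_x(R_N,\mathsf{U})$ equals a fixed normalization times a complex Ising-type sum over assignments that multiplicatively encodes $|\phi'^{-1}(1)|$. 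Estimating $p_x(R_N,\mathsf{U})$ to inverse-polynomial multiplicative error then recovers the count, via $\postBQP=\PP\supseteq\#\P$ \cite{Aaronson2005} and Stockmeyer's theorem \cite{Stockmeyer1983} to pass from multiplicative estimates to the exact integer; this is just the quantitative (parameter-tracked) version of \cref{hardness grid graph}. Finally, since by hypothesis every $G\in\mathcal{F}$ on $n$ vertices reduces to an $\Omega(\sqrt{n})\times\Omega(\sqrt{n})$ grid by vertex deletions and local complementations, which are single-qubit operations on $\ket{G}$ (\cref{operations}, \cref{resource states}), one may realize the whole computation on a locally rotated $G\in\mathcal{F}$ with $n=\Theta(N)=\Theta(m^2)$ vertices (first implement the deletions and complementations, then the $R_N$ measurements).

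Composing the assumed $2^{o(n^{1/2})}$-time estimator with this polynomial-time reduction and the $2^{\epsilon m}$ sparsification loop counts $\phi$ in time $2^{\epsilon m}\cdot 2^{o(\sqrt{m^2})} = 2^{\epsilon m + o(m)}$. Since $\epsilon$ is an arbitrarily small constant, this gives, for every constant $c<1$ and every $k$, a $2^{cm}$-time algorithm for $\#\mathsf{kSAT}$ on $m$ variables, contradicting $\mathsf{SETH}$. Hence $\ew(\ket{G}) = \Omega(n^{1/2})$ for infinitely many $G\in\mathcal{F}$, as claimed. The same argument with only a fixed-polynomial blowup $n=m^{O(1)}$ and the cruder bound $\mathrm{poly}(n)\,2^{o(n^{\delta})}$ already contradicts $\mathsf{ETH}$ for a small enough constant $\delta<1$, which is exactly \cref{entanglement width3}.

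The hard part will be the quantitative embedding, i.e.\ guaranteeing $n=\Theta(m^2)$ rather than $n=m^{\omega(1)}$. A generic circuit-to-cluster-state compilation of a $\#\mathsf{kSAT}$ verifier pays a routing overhead proportional to the number of logical wires (from flattening two-dimensional locality or from swap networks), which would only give a bound of the form $\Omega(n^{1/c})$ with $c>2$. Avoiding this requires (i) sparsification so the formula has $O(m)$ clauses, (ii) a congestion-$1$ grid layout of the bounded-degree constraint graph of side $O(m)$ with constant-size crossing gadgets, and (iii) pushing the entire ``verification'' into the choice of local measurement bases $\mathsf{U}$, so there is no time axis to inflate the qubit count. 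The remaining ingredients --- the Ising-partition-function encoding of $\#\mathsf{SAT}$ on a graph, absorbing postselection into $\mathsf{U}$, and Stockmeyer's extraction of the exact count from multiplicative estimates --- are standard, and the generic $\#\P$-hardness is already recorded in \cref{hardness grid graph}; only its parameters need to be tracked.
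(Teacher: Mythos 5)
Your proposal is correct and follows essentially the same route as the paper's (very terse) proof: assume $\ew(\ket G)=o(n^{1/2})$, get a $\mathrm{poly}(n)\,2^{o(\sqrt n)}$-time estimator from \cref{entanglement width and simulations}, and derive a $2^{o(m)}$-time $\#\mathsf{kSAT}$ counter because the $\Omega(\sqrt n)\times\Omega(\sqrt n)$ grid reachable from $G$ encodes a $\#\P$-hard probability on $\Theta(\sqrt n)=\Theta(m)$ logical variables, contradicting $\mathsf{SETH}$. You go further than the paper by explicitly tracking the $n=\Theta(m^2)$ embedding (sparsification, congestion-$1$ grid layout, verification pushed into the measurement bases), which the paper only gestures at in one sentence; the lone quibble is that the extraction of the exact count from a multiplicative estimate is the standard binary-search argument packaged in \cref{hardness grid graph}, not Stockmeyer's theorem.
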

The proofs are similar to that of Corollary \ref{entanglement width}.
The $n^{1/2}$ comes from the fact that a $\Omega(\sqrt{n}) \times \Omega(\sqrt{n})$ grid graph ``encodes'' a \#\P-hard probability on $\Omega(\sqrt{n})$-many ``logical'' qubits and the fact that entanglement width does not increase under vertex deletions and local complementation. 

\section{Connections to previous work}
In this section, we discuss relations between our setup and what was considered in other relevant works.

\subsection{A discussion of other known classical simulation methods}

Apart from entanglement based methods discussed in the main text, there are known techniques to simulate quantum circuits based on the amount of other resources present in them, like $\mathsf{T}$ gates \cite{Bravyi2016}, stabilizer rank \cite{Bravyi2019b}, or Wigner negativity \cite{Mari2012, Pashayan2015}. We had discussed in the main text why these methods are not adequate to classically simulate our setup: here, we go into more technical details.

In \cite{Bravyi2016}, the runtime of the classical simulator, for both probability estimation and sampling, scales as $\mathcal{O}\left(2^{ct}\right)$, where $t$ is the number of $\mathsf{T}$ gates and $c$ is a constant. For every fixed graph state $\ket{G}$, from Figure $1$(a) of the main text, the presence of a last layer of unitaries means that the number of $\mathsf{T}$ gates, in the worst case, is $n$. Hence, classical simulation is inefficient.

From \cite{Bravyi2019b}, any quantum circuit having Clifford gates and $t$ $\mathsf{T}$ gates can be implemented by a quantum circuit starting with $t$ magic states, each given by
\begin{equation}
    \ket{T} = \frac{1}{\sqrt{2}} \left(\ket{0} + e^{i \pi/4} \ket{1} \right).
\end{equation}
Both the stabilizer and the approximate stabilizer rank of this equivalent circuit are $\mathcal{O}(2^{ct})$, for some constant $c$. Due to local rotations, this factor could be exponentially growing in $n$, in the worst case, for any graph state $\ket{G}$ in Figure $1$(a) of the main text, rendering simulation inefficient. So far, it is not known how to prove better bounds on the stabilizer rank of a circuit or better utilize it as a resource in classical simulations.

Techniques based on Wigner negativity run into similar problems. Note that, when the circuit is only composed of Clifford gates, Wigner negativity, as defined in \cite{Mari2012}, is zero. 
Hence, techniques from Ref.~\cite{Mari2012} can be used to classically sample from the circuit. 
However, local rotations can significantly increase the negativity of the circuit, and the best known techniques for classically sampling from the circuit depend on a metric called ``forward negativity'' \cite{Pashayan2015} being low. 
However, the authors in \cite{Pashayan2015} demonstrated that the forward negativity can, in general, increase exponentially with the number of magic states, which rules out Wigner-negativity-based samplers for our setting. 

For qubit circuits, an additional barrier is that qubit Wigner functions are much harder to define than their qutrit counterparts \cite{Delfosse2015,Raussendorf2017,Raussendorf2020}. 

In conclusion, local rotations render a number of known classical simulation techniques useless for the simulation of graph states. 
Since entanglement is not changed by local rotations, only entanglement-based simulation techniques seem to survive. 
This helps in our analysis and helps us to isolate entanglement as a potential cause of hardness.

\subsection{A discussion of other entanglement measures}
\label{entropy measures}
There are many measures to calculate both bipartite and multipartite entanglement, other than entanglement width, but we show below that it is harder to see what relation they have with hardness, if there is any relation at all. We discuss two popular measures here: the bipartitie von Neumann entropy and the geometric measure. 

The bipartite von Neumann entropy, for a graph state $\ket{G}$, with respect to any bipartition $(\A, \B)$ depends on the size, shape, and location of the bipartition and has the following upper and lower bounds \cite{Hein2004}: 
\begin{multline}
\label{upper bounds}
    \log_2 r = S(\ket{G}_{\A, \B}) \leq \mathsf{Pauli Persistency}(\ket{G}) \\
    \leq \mathsf{minvertexcover}(G),
\end{multline}
where $\mathsf{Pauli~ Persistency}$ is the number of local Pauli measurements required to fully disentangle the graph state, $\mathsf{minvertexcover}(G)$ is the minimal vertex cover of the corresponding graph $G$, and $r$ is the Schmidt rank of $\ket{G}$ across the bipartition $(\A, \B)$. 

By a result from graph theory, for $n$--vertex, $k$--regular graphs,
\begin{equation}
   \mathsf{minvertexcover}(G) \leq \frac{n \cdot k}{k + 1}.
\end{equation}
This implies
\begin{equation}
    S\left(\ket{G}_{\A, \B}\right) \leq \frac{n \cdot k}{k + 1}.
\end{equation}
Note that the upper bound grows with $k$, for a fixed $n$. However, for $k= n - 1$, the von Neumann entropy is $1$ across \emph{any} bipartition \cite{Hein2006}, which indicates that the upper bound is not tight at all. To derive this property of the complete graph, we note that the complete graph reduces to a $\mathsf{GHZ}$ state under local Clifford rotations, and a $\mathsf{GHZ}$ state has von Neumann entropy equal to $1$ for any bipartition. 

On the other hand, for a worst case $2$--regular graph, there is a bipartition $(\A, \B)$ which achieves von Neumann entropy $n/2$. For example, this is achieved by $n$ vertices arranged in a circle, where we fill $\A$ and $\B$ with alternate vertices. At the same time, a bipartition that divides the vertices into two semi-circles has constant von Neumann entropy. Therefore, this metric is very sensitive to the chosen bipartition. 
Since both the complete graph and any $2$--regular graph are easy to classically simulate, but the former has low von Neumann entropy and the latter has high von Neumann entropy across the worst-case cut, it is not clear how von Neumann entropy of the worst case cut relates to classical simulation complexity. 

Alternatively, one could consider von Neumann entropy across the best-case cut $(\A,\B)$ subject to the constraint that $|\A| = \lfloor n/2 \rfloor$. However, for every $n$, one can divide the $n$ qubits into two decoupled sets $\A$ and $\B$ with $|\A| = \lfloor n/2 \rfloor$ and define independent $k$-regular graphs on $\A$ and $\B$, for any constant $k$. The proposed best-case von Neumann entropy would then be zero for all such graphs and will thus clearly not track the simulation complexity.  

Therefore, if one wants to use the von Neumann entropy to track simulation complexity, one needs to make the choice of the cut more cleverly. Indeed, entanglement width, according to \cref{ewidth_def}, is precisely a case of such a cleverly chosen cut: entanglement width is the minimum von Neumann entropy across the most tree-like bipartition, and we show how it tracks simulation complexity. It is plausible that one can find other such variants. 

There are other measures, like the geometric measure of entanglement, given by
\begin{equation}
    S_{\text{geom}}\left(\ket{\psi} \right) = - \log_2~ \underset{\alpha \in \mathcal{P}}{\text{sup}} |\langle \alpha| \psi \rangle|^2,
\end{equation}
where $\mathcal{P}$ is the set of all separable states. For a graph $G$,
\begin{multline}
     \mathsf{m}_\p 
     \leq S_{\text{geom}}\left(\ket{G} \right)\leq \mathsf{Pauli~ Persistency}(\ket{G}) \\
     \leq \mathsf{minvertexcover}(G)\leq \frac{n \cdot k}{k + 1},
\end{multline}
where 
\begin{equation}
\label{bipartition}
    \mathsf{m}_\mathsf{p} = \underset{(\mathsf{A}, \mathsf{B})}{\mathsf{max}} ~\mathsf{m}_\mathsf{p} (\mathsf{A}, \mathsf{B}).
\end{equation}
In equation \eqref{bipartition}, $(\A, \B)$ is a bipartition and $\mathsf{m}_\p (\A, \B)$ is the maximum number of Bell pairs that can be created between $(\A, \B)$, by a bipartite LOCC circuit, comprising only of $\mathsf{CZ}$ gates and local Clifford gates \cite{Markham2007}, when starting from $\ket{G}$. In other words, there can be $\mathsf{CZ}$ gates within each partition, but there should be no $\mathsf{CZ}$ gates across the bipartition. For a complete graph, $\mathsf{m}_{\p}$ is $1$. 

Hence, the known upper and lower bounds on the geometric measure are also not tight, and it is not clear what relation, if any at all, these metrics have with hardness.

\section{Comments on average case hardness}
\label{average-case-hardness}
In the summary section of the main text, we had pointed out that our results generalize to the average case. In this section, we discuss that in detail.

There could be two notions of average case for our setup:
\begin{itemize}
    \item For a particular $G$, and random choice of $\mathsf{U}$.
    \item For a random choice of $G$, and random choice of $\mathsf{U}$, for a particular $k$--regular family.
\end{itemize}

For every hard $k$--regular graph identified in Theorem \ref{hard case}, for $3 \leq k \leq n-4$, computing $p_x(G, \mathsf{U})$ is $\# \P$-hard with high probability over the choice of $\mathsf{U}$, up to an additive error of $2^{-\mathcal{O}(m\log m)}$, using the worst-to-average case reductions in Ref.~\cite{Bouland2022}, where $m$ is the number of gates of the circuit. 

The results are not immediately extendable to a random choice of $G$, because the polynomial interpolation methods of \cite{Bouland2018,Movassagh2019,Bouland2022} may take us beyond $k$--regular graph states. The question then becomes whether it is reasonable to expect average case hardness in this regime.

We assert such an expectation is reasonable and does not violate known results from graph theory, because \emph{most} $k$--regular graphs, for a range of values for $k$, have $\Omega(n)$ clique width and $\Omega(\log n)$ entanglement width. This fact follows straightforwardly from techniques in graph theory, like Lemma \ref{tree width and clique width}, and from works like \cite{Kaminski2009}. But we still state it formally and sketch a proof, just for the sake of completeness.

Note that this does not rule out classical samplers, as there could be a matching upper bound of $\mathcal{O}(\log n)$ for the entanglement width which would make efficient sampling possible by \cref{entanglement width and simulations}---but it makes this setting ``almost'' out of reach of known non-trivial, tree width based samplers. 
\begin{lemma}
\label{high clique width}
Let $G$ be an $n$--vertex, $k$--regular graph picked uniformly at random from the set of all possible $n$--vertex, $k$--regular graphs. Then
\begin{equation}
\label{graphfirsteq}
    \lim_{n \rightarrow \infty} \mathsf{Pr}[\cw(G) = \Omega(n)] = 1,
\end{equation}
when $k = o(n)$, or $k = n - o(n)$. Consequently, under the same conditions as \eqref{graphfirsteq},
\begin{equation}
     \lim_{n \rightarrow \infty} \mathsf{Pr}[\mathsf{ew}\left(\ket{G}\right) = \Omega(\log n)] = 1.
\end{equation}
\end{lemma}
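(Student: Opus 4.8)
The plan is to lower-bound the clique width through the treewidth/$K_{t\times t}$ inequality of \cref{tree width and clique width}, and then convert the bound into one on entanglement width via \cref{rank width} and \cref{graph states entanglement width}. First I would dispose of the high-degree case by complementation: complementation is a bijection between $k$--regular and $(n-1-k)$--regular graphs on $n$ vertices that sends the uniform distribution to the uniform distribution, and by \cref{complement} one has $\cw(G)\ge \tfrac12\,\cw(\overline G)$; hence if $G$ is uniformly random $k$--regular with $k=n-o(n)$ then $\overline G$ is uniformly random $k'$--regular with $k'=o(n)$ and $\cw(G)=\Omega(\cw(\overline G))$. So it suffices to treat $3\le k=o(n)$.

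For random $k$--regular $G$ with $3\le k=o(n)$, I would establish two facts that hold with probability tending to $1$. (i) $\tw(G)=\Omega(n)$: a $k$--regular graph has degeneracy, and hence treewidth, exactly $k$ deterministically, which already gives $\Omega(n)$ when $k=\Omega(n)$; for smaller $k$ one uses that a random $k$--regular graph is whp an expander with edge expansion $\Omega(k)$, and a $k$--regular graph with edge expansion $h$ has no balanced vertex separator of size $o(hn/k)=o(n)$, so $\tw(G)=\Omega(n)$ (standard; cf.\ the literature on random regular graphs and \cite{Kim2007,Kaminski2009}). (ii) $G$ contains no $K_{t\times t}$ as a subgraph, for a suitable $t$: a first-moment computation bounds the expected number of copies of $K_{t\times t}$ by $O\big(n^{2t}(k/n)^{t^2}\big)=O\big(n^{\,t(2-t\log(n/k)/\log n)}\big)$, which tends to $0$ once $t>2\log n/\log(n/k)$; in particular, when $k\le n^{1-\delta}$ for a fixed $\delta>0$ one may take $t=\lceil 2/\delta\rceil+1$, a constant. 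Combining (i) and (ii) with \cref{tree width and clique width} yields $\cw(G)\ge (\tw(G)+1)/(3(t-1))=\Omega(n)$, the constant $t$ being absorbed. The statement about entanglement width then follows at once: \cref{rank width} gives $\rw(G)\ge \log_2(\cw(G)+1)-1$, and \cref{graph states entanglement width} gives $\ew(\ket G)=\rw(G)$, so $\ew(\ket G)=\Omega(\log n)$.

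The main obstacle is that the first-moment bound in step (ii) produces a \emph{constant} $t$ --- and hence the clean $\Omega(n)$ conclusion --- only when $k$ is polynomially smaller than $n$ (and, via the complementation step, when $n-1-k$ is). In the narrow remaining window $k=n^{1-o(1)}$ (respectively $k=n-n^{1-o(1)}$) the graph does contain $K_{t\times t}$ for $t$ slowly growing, \cref{tree width and clique width} then only delivers $\cw(G)=n^{1-o(1)}$, and recovering the full $\Omega(n)$ there seems to require a direct argument that every balanced vertex bipartition has cut-rank $\Omega(n)$ --- a delicate estimate, since the obvious union bound over the $2^{\Theta(n)}$ bipartitions is not clearly affordable for the relevant sparse adjacency submatrices. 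Thus the argument above rigorously establishes the lemma for $k=n^{1-\Omega(1)}$ and $k=n-n^{1-\Omega(1)}$ (and for $k=\Theta(1)$ it is entirely standard, using $K_{t\times t}$-freeness with $t=3$ from \cite{Kim2007}); extending it to the full ranges $k=o(n)$ and $k=n-o(n)$ is the one genuinely nontrivial point, and is the step I would expect to need the most care.
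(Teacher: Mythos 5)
Your proposal follows essentially the same route as the paper's proof: reduce the high-degree case to the low-degree one by complementation together with the clique-width bound of \cref{complement}, then combine $\tw(G)=\Theta(n)$ with high probability, $K_{t\times t}$-freeness for constant $t$, and \cref{tree width and clique width} to conclude $\cw(G)=\Omega(n)$, and finally pass to entanglement width via \cref{rank width} and \cref{graph states entanglement width}. The caveat you flag is genuine but is equally present in the source: the paper simply cites \cite{Kim2007} for $K_{t\times t}$-freeness ``for any constant $t$'' whenever $k=o(n)$, whereas your first-moment computation correctly shows this fails once $k=n^{1-o(1)}$ (e.g.\ $k=n/\log n$), so the paper's argument, like yours, only rigorously covers $k=n^{1-\Omega(1)}$ and the complementary high-degree range --- the paper's closing remark concedes a breakdown only at $k=\Theta(n)$, so if anything you have been more careful about the range of validity than the original.
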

\begin{proof}
First, we prove that random $k$--regular graphs, for $k = o(n)$, have clique width $\Omega(n)$ with high probability. By Lemma \ref{complement}, this means random $n - k - 1$--regular graphs also have a clique width of $\Omega(n)$ with high probability, as these are complements of random $k$--regular graphs. The result on entanglement width then follows from Lemma \ref{rank width} and Lemma \ref{graph states entanglement width}, just as we have seen before.

Let $k = o(n)$. Let $G$ be a random $k$--regular graph. It holds that
\begin{equation}
    \tw(G) = \Theta(n),
\end{equation}
with high probability \cite{Chekuri2014}. From Lemma \ref{tree width and clique width}, 
\begin{equation}
      \tw(G) \leq 3\cdot \cw(G)\cdot(t-1) - 1,
\end{equation}
if $G$ does not have the complete bipartite graph $K_{t \times t}$ as a subgraph. The proof then follows from the following result, which can be seen in Ref.~\cite{Kim2007}:
\begin{equation}
\label{lower bound}
    \lim_{n \rightarrow \infty} \mathsf{Pr}[K_{t\times t}~\text{is not a subgraph of}~G] = 1,
\end{equation}
for any constant $t$, for a random $k$--regular graph with $k = o(n)$.
\end{proof}
Note that Lemma \ref{high clique width} breaks down for the very specific case of when $k = \Theta(n)$. This is an artefact of the proof technique, because equation \eqref{lower bound} breaks down for this case, and we did not find other simple ways to bound the clique width. Nonetheless, we conjecture that there should be a better way to bound the clique width for these cases.


\begin{thebibliography}{75}%
\makeatletter
\providecommand \@ifxundefined [1]{%
 \@ifx{#1\undefined}
}%
\providecommand \@ifnum [1]{%
 \ifnum #1\expandafter \@firstoftwo
 \else \expandafter \@secondoftwo
 \fi
}%
\providecommand \@ifx [1]{%
 \ifx #1\expandafter \@firstoftwo
 \else \expandafter \@secondoftwo
 \fi
}%
\providecommand \natexlab [1]{#1}%
\providecommand \enquote  [1]{``#1''}%
\providecommand \bibnamefont  [1]{#1}%
\providecommand \bibfnamefont [1]{#1}%
\providecommand \citenamefont [1]{#1}%
\providecommand \href@noop [0]{\@secondoftwo}%
\providecommand \href [0]{\begingroup \@sanitize@url \@href}%
\providecommand \@href[1]{\@@startlink{#1}\@@href}%
\providecommand \@@href[1]{\endgroup#1\@@endlink}%
\providecommand \@sanitize@url [0]{\catcode `\\12\catcode `\$12\catcode
  `\&12\catcode `\#12\catcode `\^12\catcode `\_12\catcode `\%12\relax}%
\providecommand \@@startlink[1]{}%
\providecommand \@@endlink[0]{}%
\providecommand \url  [0]{\begingroup\@sanitize@url \@url }%
\providecommand \@url [1]{\endgroup\@href {#1}{\urlprefix }}%
\providecommand \urlprefix  [0]{URL }%
\providecommand \Eprint [0]{\href }%
\providecommand \doibase [0]{https://doi.org/}%
\providecommand \selectlanguage [0]{\@gobble}%
\providecommand \bibinfo  [0]{\@secondoftwo}%
\providecommand \bibfield  [0]{\@secondoftwo}%
\providecommand \translation [1]{[#1]}%
\providecommand \BibitemOpen [0]{}%
\providecommand \bibitemStop [0]{}%
\providecommand \bibitemNoStop [0]{.\EOS\space}%
\providecommand \EOS [0]{\spacefactor3000\relax}%
\providecommand \BibitemShut  [1]{\csname bibitem#1\endcsname}%
\let\auto@bib@innerbib\@empty
\bibitem [{\citenamefont {Vidal}(2003)}]{Vidal2003}%
  \BibitemOpen
  \bibfield  {author} {\bibinfo {author} {\bibfnamefont {G.}~\bibnamefont
  {Vidal}},\ }\bibfield  {title} {\bibinfo {title} {Efficient {{Classical
  Simulation}} of {{Slightly Entangled Quantum Computations}}},\ }\href
  {https://doi.org/10.1103/PhysRevLett.91.147902} {\bibfield  {journal}
  {\bibinfo  {journal} {Phys. Rev. Lett.}\ }\textbf {\bibinfo {volume} {91}},\
  \bibinfo {pages} {147902} (\bibinfo {year} {2003})}\BibitemShut {NoStop}%
\bibitem [{\citenamefont {Jozsa}\ and\ \citenamefont
  {Linden}(2003)}]{Jozsa2003}%
  \BibitemOpen
  \bibfield  {author} {\bibinfo {author} {\bibfnamefont {R.}~\bibnamefont
  {Jozsa}}\ and\ \bibinfo {author} {\bibfnamefont {N.}~\bibnamefont {Linden}},\
  }\bibfield  {title} {\bibinfo {title} {On the role of entanglement in quantum
  computational speed-up},\ }\href {https://doi.org/10.1098/rspa.2002.1097}
  {\bibfield  {journal} {\bibinfo  {journal} {Proc. R. Soc. Lond. A}\ }\textbf
  {\bibinfo {volume} {459}},\ \bibinfo {pages} {2011} (\bibinfo {year}
  {2003})}\BibitemShut {NoStop}%
\bibitem [{\citenamefont {Knill}\ and\ \citenamefont
  {Laflamme}(1998)}]{Knill1998}%
  \BibitemOpen
  \bibfield  {author} {\bibinfo {author} {\bibfnamefont {E.}~\bibnamefont
  {Knill}}\ and\ \bibinfo {author} {\bibfnamefont {R.}~\bibnamefont
  {Laflamme}},\ }\bibfield  {title} {\bibinfo {title} {Power of {{One Bit}} of
  {{Quantum Information}}},\ }\href
  {https://doi.org/10.1103/PhysRevLett.81.5672} {\bibfield  {journal} {\bibinfo
   {journal} {Phys. Rev. Lett.}\ }\textbf {\bibinfo {volume} {81}},\ \bibinfo
  {pages} {5672} (\bibinfo {year} {1998})}\BibitemShut {NoStop}%
\bibitem [{\citenamefont {Biham}\ \emph {et~al.}(2004)\citenamefont {Biham},
  \citenamefont {Brassard}, \citenamefont {Kenigsberg},\ and\ \citenamefont
  {Mor}}]{Biham2004}%
  \BibitemOpen
  \bibfield  {author} {\bibinfo {author} {\bibfnamefont {E.}~\bibnamefont
  {Biham}}, \bibinfo {author} {\bibfnamefont {G.}~\bibnamefont {Brassard}},
  \bibinfo {author} {\bibfnamefont {D.}~\bibnamefont {Kenigsberg}},\ and\
  \bibinfo {author} {\bibfnamefont {T.}~\bibnamefont {Mor}},\ }\bibfield
  {title} {\bibinfo {title} {Quantum computing without entanglement},\ }\href
  {https://doi.org/10.1016/j.tcs.2004.03.041} {\bibfield  {journal} {\bibinfo
  {journal} {Theoretical Computer Science}\ }\textbf {\bibinfo {volume}
  {320}},\ \bibinfo {pages} {15} (\bibinfo {year} {2004})}\BibitemShut
  {NoStop}%
\bibitem [{\citenamefont {Datta}\ \emph {et~al.}(2005)\citenamefont {Datta},
  \citenamefont {Flammia},\ and\ \citenamefont {Caves}}]{Datta2005}%
  \BibitemOpen
  \bibfield  {author} {\bibinfo {author} {\bibfnamefont {A.}~\bibnamefont
  {Datta}}, \bibinfo {author} {\bibfnamefont {S.~T.}\ \bibnamefont {Flammia}},\
  and\ \bibinfo {author} {\bibfnamefont {C.~M.}\ \bibnamefont {Caves}},\
  }\bibfield  {title} {\bibinfo {title} {Entanglement and the power of one
  qubit},\ }\href {https://doi.org/10.1103/PhysRevA.72.042316} {\bibfield
  {journal} {\bibinfo  {journal} {Phys. Rev. A}\ }\textbf {\bibinfo {volume}
  {72}},\ \bibinfo {pages} {042316} (\bibinfo {year} {2005})}\BibitemShut
  {NoStop}%
\bibitem [{\citenamefont {Raussendorf}\ \emph {et~al.}(2002)\citenamefont
  {Raussendorf}, \citenamefont {Browne},\ and\ \citenamefont
  {Briegel}}]{Raussendorf2002}%
  \BibitemOpen
  \bibfield  {author} {\bibinfo {author} {\bibfnamefont {R.}~\bibnamefont
  {Raussendorf}}, \bibinfo {author} {\bibfnamefont {D.}~\bibnamefont
  {Browne}},\ and\ \bibinfo {author} {\bibfnamefont {H.}~\bibnamefont
  {Briegel}},\ }\bibfield  {title} {\bibinfo {title} {The one-way quantum
  computer--a non-network model of quantum computation},\ }\href
  {https://doi.org/10.1080/09500340110107487} {\bibfield  {journal} {\bibinfo
  {journal} {Journal of Modern Optics}\ }\textbf {\bibinfo {volume} {49}},\
  \bibinfo {pages} {1299} (\bibinfo {year} {2002})}\BibitemShut {NoStop}%
\bibitem [{\citenamefont {Hein}\ \emph {et~al.}(2006)\citenamefont {Hein},
  \citenamefont {D{\"u}r}, \citenamefont {Eisert}, \citenamefont {Raussendorf},
  \citenamefont {den Nest},\ and\ \citenamefont {Briegel}}]{Hein2006}%
  \BibitemOpen
  \bibfield  {author} {\bibinfo {author} {\bibfnamefont {M.}~\bibnamefont
  {Hein}}, \bibinfo {author} {\bibfnamefont {W.}~\bibnamefont {D{\"u}r}},
  \bibinfo {author} {\bibfnamefont {J.}~\bibnamefont {Eisert}}, \bibinfo
  {author} {\bibfnamefont {R.}~\bibnamefont {Raussendorf}}, \bibinfo {author}
  {\bibfnamefont {M.~V.}\ \bibnamefont {den Nest}},\ and\ \bibinfo {author}
  {\bibfnamefont {H.-J.}\ \bibnamefont {Briegel}},\ }\bibfield  {title}
  {\bibinfo {title} {Entanglement in {{Graph States}} and its
  {{Applications}}},\ }\href {http://arxiv.org/abs/quant-ph/0602096} {\
  (\bibinfo {year} {2006})},\ \Eprint {https://arxiv.org/abs/quant-ph/0602096}
  {arXiv:quant-ph/0602096} \BibitemShut {NoStop}%
\bibitem [{\citenamefont {{Van den Nest}}(2013)}]{VandenNest2013}%
  \BibitemOpen
  \bibfield  {author} {\bibinfo {author} {\bibfnamefont {M.}~\bibnamefont {{Van
  den Nest}}},\ }\bibfield  {title} {\bibinfo {title} {Universal {{Quantum
  Computation}} with {{Little Entanglement}}},\ }\href
  {https://doi.org/10.1103/PhysRevLett.110.060504} {\bibfield  {journal}
  {\bibinfo  {journal} {Phys. Rev. Lett.}\ }\textbf {\bibinfo {volume} {110}},\
  \bibinfo {pages} {060504} (\bibinfo {year} {2013})}\BibitemShut {NoStop}%
\bibitem [{\citenamefont {Bremner}\ \emph {et~al.}(2009)\citenamefont
  {Bremner}, \citenamefont {Mora},\ and\ \citenamefont {Winter}}]{Bremner2009}%
  \BibitemOpen
  \bibfield  {author} {\bibinfo {author} {\bibfnamefont {M.~J.}\ \bibnamefont
  {Bremner}}, \bibinfo {author} {\bibfnamefont {C.}~\bibnamefont {Mora}},\ and\
  \bibinfo {author} {\bibfnamefont {A.}~\bibnamefont {Winter}},\ }\bibfield
  {title} {\bibinfo {title} {Are {{Random Pure States Useful}} for {{Quantum
  Computation}}?},\ }\href {https://doi.org/10.1103/PhysRevLett.102.190502}
  {\bibfield  {journal} {\bibinfo  {journal} {Phys. Rev. Lett.}\ }\textbf
  {\bibinfo {volume} {102}},\ \bibinfo {pages} {190502} (\bibinfo {year}
  {2009})}\BibitemShut {NoStop}%
\bibitem [{\citenamefont {Gross}\ \emph {et~al.}(2009)\citenamefont {Gross},
  \citenamefont {Flammia},\ and\ \citenamefont {Eisert}}]{Gross2009}%
  \BibitemOpen
  \bibfield  {author} {\bibinfo {author} {\bibfnamefont {D.}~\bibnamefont
  {Gross}}, \bibinfo {author} {\bibfnamefont {S.~T.}\ \bibnamefont {Flammia}},\
  and\ \bibinfo {author} {\bibfnamefont {J.}~\bibnamefont {Eisert}},\
  }\bibfield  {title} {\bibinfo {title} {Most {{Quantum States Are Too
  Entangled To Be Useful As Computational Resources}}},\ }\href
  {https://doi.org/10.1103/PhysRevLett.102.190501} {\bibfield  {journal}
  {\bibinfo  {journal} {Phys. Rev. Lett.}\ }\textbf {\bibinfo {volume} {102}},\
  \bibinfo {pages} {190501} (\bibinfo {year} {2009})}\BibitemShut {NoStop}%
\bibitem [{\citenamefont {Markov}\ and\ \citenamefont
  {Shi}(2008)}]{Markov2008}%
  \BibitemOpen
  \bibfield  {author} {\bibinfo {author} {\bibfnamefont {I.~L.}\ \bibnamefont
  {Markov}}\ and\ \bibinfo {author} {\bibfnamefont {Y.}~\bibnamefont {Shi}},\
  }\bibfield  {title} {\bibinfo {title} {Simulating quantum computation by
  contracting tensor networks},\ }\href {https://doi.org/10.1137/050644756}
  {\bibfield  {journal} {\bibinfo  {journal} {SIAM J. Comput.}\ }\textbf
  {\bibinfo {volume} {38}},\ \bibinfo {pages} {963} (\bibinfo {year}
  {2008})}\BibitemShut {NoStop}%
\bibitem [{\citenamefont {Shi}\ \emph {et~al.}(2006)\citenamefont {Shi},
  \citenamefont {Duan},\ and\ \citenamefont {Vidal}}]{Shi2006}%
  \BibitemOpen
  \bibfield  {author} {\bibinfo {author} {\bibfnamefont {Y.-Y.}\ \bibnamefont
  {Shi}}, \bibinfo {author} {\bibfnamefont {L.-M.}\ \bibnamefont {Duan}},\ and\
  \bibinfo {author} {\bibfnamefont {G.}~\bibnamefont {Vidal}},\ }\bibfield
  {title} {\bibinfo {title} {Classical simulation of quantum many-body systems
  with a tree tensor network},\ }\href
  {https://doi.org/10.1103/PhysRevA.74.022320} {\bibfield  {journal} {\bibinfo
  {journal} {Phys. Rev. A}\ }\textbf {\bibinfo {volume} {74}},\ \bibinfo
  {pages} {022320} (\bibinfo {year} {2006})}\BibitemShut {NoStop}%
\bibitem [{\citenamefont {Huang}\ \emph {et~al.}(2020)\citenamefont {Huang},
  \citenamefont {Zhang}, \citenamefont {Newman}, \citenamefont {Cai},
  \citenamefont {Gao}, \citenamefont {Tian}, \citenamefont {Wu}, \citenamefont
  {Xu}, \citenamefont {Yu}, \citenamefont {Yuan}, \citenamefont {Szegedy},
  \citenamefont {Shi},\ and\ \citenamefont {Chen}}]{Huang2020b}%
  \BibitemOpen
  \bibfield  {author} {\bibinfo {author} {\bibfnamefont {C.}~\bibnamefont
  {Huang}}, \bibinfo {author} {\bibfnamefont {F.}~\bibnamefont {Zhang}},
  \bibinfo {author} {\bibfnamefont {M.}~\bibnamefont {Newman}}, \bibinfo
  {author} {\bibfnamefont {J.}~\bibnamefont {Cai}}, \bibinfo {author}
  {\bibfnamefont {X.}~\bibnamefont {Gao}}, \bibinfo {author} {\bibfnamefont
  {Z.}~\bibnamefont {Tian}}, \bibinfo {author} {\bibfnamefont {J.}~\bibnamefont
  {Wu}}, \bibinfo {author} {\bibfnamefont {H.}~\bibnamefont {Xu}}, \bibinfo
  {author} {\bibfnamefont {H.}~\bibnamefont {Yu}}, \bibinfo {author}
  {\bibfnamefont {B.}~\bibnamefont {Yuan}}, \bibinfo {author} {\bibfnamefont
  {M.}~\bibnamefont {Szegedy}}, \bibinfo {author} {\bibfnamefont
  {Y.}~\bibnamefont {Shi}},\ and\ \bibinfo {author} {\bibfnamefont
  {J.}~\bibnamefont {Chen}},\ }\bibfield  {title} {\bibinfo {title} {Classical
  {{Simulation}} of {{Quantum Supremacy Circuits}}},\ }\href
  {http://arxiv.org/abs/2005.06787} {\  (\bibinfo {year} {2020})},\ \Eprint
  {https://arxiv.org/abs/2005.06787} {arXiv:2005.06787 [quant-ph]} \BibitemShut
  {NoStop}%
\bibitem [{\citenamefont {Pan}\ and\ \citenamefont {Zhang}(2022)}]{Pan2022}%
  \BibitemOpen
  \bibfield  {author} {\bibinfo {author} {\bibfnamefont {F.}~\bibnamefont
  {Pan}}\ and\ \bibinfo {author} {\bibfnamefont {P.}~\bibnamefont {Zhang}},\
  }\bibfield  {title} {\bibinfo {title} {Simulation of {{Quantum Circuits
  Using}} the {{Big-Batch Tensor Network Method}}},\ }\href
  {https://doi.org/10.1103/PhysRevLett.128.030501} {\bibfield  {journal}
  {\bibinfo  {journal} {Phys. Rev. Lett.}\ }\textbf {\bibinfo {volume} {128}},\
  \bibinfo {pages} {030501} (\bibinfo {year} {2022})}\BibitemShut {NoStop}%
\bibitem [{\citenamefont {Aaronson}(2005{\natexlab{a}})}]{Aaronson2005b}%
  \BibitemOpen
  \bibfield  {author} {\bibinfo {author} {\bibfnamefont {S.}~\bibnamefont
  {Aaronson}},\ }\bibfield  {title} {\bibinfo {title} {Ten {{Semi-Grand
  Challenges}} for {{Quantum Computing Theory}}},\ }\href@noop {} {\  (\bibinfo
  {year} {2005}{\natexlab{a}})},\ \bibinfo {note} {available at
  \url{https://www.scottaaronson.com/writings/qchallenge.html}}\BibitemShut
  {NoStop}%
\bibitem [{\citenamefont {{Van den Nest}}(2004)}]{VandenNest2004}%
  \BibitemOpen
  \bibfield  {author} {\bibinfo {author} {\bibfnamefont {M.}~\bibnamefont {{Van
  den Nest}}},\ }\bibfield  {title} {\bibinfo {title} {Graphical description of
  the action of local {{Clifford}} transformations on graph states},\ }\href
  {https://doi.org/10.1103/PhysRevA.69.022316} {\bibfield  {journal} {\bibinfo
  {journal} {Phys. Rev. A}\ }\textbf {\bibinfo {volume} {69}},\ \bibinfo
  {pages} {022316} (\bibinfo {year} {2004})}\BibitemShut {NoStop}%
\bibitem [{\citenamefont {den Nest}\ \emph {et~al.}(2007)\citenamefont {den
  Nest}, \citenamefont {D{\"u}r}, \citenamefont {Vidal},\ and\ \citenamefont
  {Briegel}}]{Nest2007a}%
  \BibitemOpen
  \bibfield  {author} {\bibinfo {author} {\bibfnamefont {M.~V.}\ \bibnamefont
  {den Nest}}, \bibinfo {author} {\bibfnamefont {W.}~\bibnamefont {D{\"u}r}},
  \bibinfo {author} {\bibfnamefont {G.}~\bibnamefont {Vidal}},\ and\ \bibinfo
  {author} {\bibfnamefont {H.~J.}\ \bibnamefont {Briegel}},\ }\bibfield
  {title} {\bibinfo {title} {Classical simulation versus universality in
  measurement based quantum computation},\ }\href
  {https://doi.org/10.1103/PhysRevA.75.012337} {\bibfield  {journal} {\bibinfo
  {journal} {Phys. Rev. A}\ }\textbf {\bibinfo {volume} {75}},\ \bibinfo
  {pages} {012337} (\bibinfo {year} {2007})}\BibitemShut {NoStop}%
\bibitem [{\citenamefont {Yoganathan}\ and\ \citenamefont
  {Cade}(2019)}]{Yoganathan2019}%
  \BibitemOpen
  \bibfield  {author} {\bibinfo {author} {\bibfnamefont {M.}~\bibnamefont
  {Yoganathan}}\ and\ \bibinfo {author} {\bibfnamefont {C.}~\bibnamefont
  {Cade}},\ }\bibfield  {title} {\bibinfo {title} {The one clean qubit model
  without entanglement is classically simulable},\ }\href
  {http://arxiv.org/abs/1907.08224} {\  (\bibinfo {year} {2019})},\ \Eprint
  {https://arxiv.org/abs/1907.08224} {arXiv:1907.08224 [quant-ph]} \BibitemShut
  {NoStop}%
\bibitem [{\citenamefont {Bouland}\ \emph {et~al.}(2016)\citenamefont
  {Bouland}, \citenamefont {Mancinska},\ and\ \citenamefont
  {Zhang}}]{Bouland2016}%
  \BibitemOpen
  \bibfield  {author} {\bibinfo {author} {\bibfnamefont {A.}~\bibnamefont
  {Bouland}}, \bibinfo {author} {\bibfnamefont {L.}~\bibnamefont {Mancinska}},\
  and\ \bibinfo {author} {\bibfnamefont {X.}~\bibnamefont {Zhang}},\ }\bibfield
   {title} {\bibinfo {title} {Complexity classification of two-qubit commuting
  hamiltonians},\ }in\ \href {https://doi.org/10.4230/LIPIcs.CCC.2016.28}
  {\emph {\bibinfo {booktitle} {31st {{Conf}}. {{Comput}}. {{Complex}}. {{CCC}}
  2016}}},\ \bibinfo {series and number} {Leibniz {{International Proceedings}}
  in {{Informatics}} ({{LIPIcs}})}\ (\bibinfo {year} {2016})\ pp.\ \bibinfo
  {pages} {28:1--28:33}\BibitemShut {NoStop}%
\bibitem [{\citenamefont {Maskara}\ \emph {et~al.}(2022)\citenamefont
  {Maskara}, \citenamefont {Deshpande}, \citenamefont {Ehrenberg},
  \citenamefont {Tran}, \citenamefont {Fefferman},\ and\ \citenamefont
  {Gorshkov}}]{Maskara2022}%
  \BibitemOpen
  \bibfield  {author} {\bibinfo {author} {\bibfnamefont {N.}~\bibnamefont
  {Maskara}}, \bibinfo {author} {\bibfnamefont {A.}~\bibnamefont {Deshpande}},
  \bibinfo {author} {\bibfnamefont {A.}~\bibnamefont {Ehrenberg}}, \bibinfo
  {author} {\bibfnamefont {M.~C.}\ \bibnamefont {Tran}}, \bibinfo {author}
  {\bibfnamefont {B.}~\bibnamefont {Fefferman}},\ and\ \bibinfo {author}
  {\bibfnamefont {A.~V.}\ \bibnamefont {Gorshkov}},\ }\bibfield  {title}
  {\bibinfo {title} {Complexity {{Phase Diagram}} for {{Interacting}} and
  {{Long-Range Bosonic Hamiltonians}}},\ }\href
  {https://doi.org/10.1103/PhysRevLett.129.150604} {\bibfield  {journal}
  {\bibinfo  {journal} {Phys. Rev. Lett.}\ }\textbf {\bibinfo {volume} {129}},\
  \bibinfo {pages} {150604} (\bibinfo {year} {2022})}\BibitemShut {NoStop}%
\bibitem [{\citenamefont {Hein}(2004)}]{Hein2004}%
  \BibitemOpen
  \bibfield  {author} {\bibinfo {author} {\bibfnamefont {M.}~\bibnamefont
  {Hein}},\ }\bibfield  {title} {\bibinfo {title} {Multiparty entanglement in
  graph states},\ }\href {https://doi.org/10.1103/PhysRevA.69.062311}
  {\bibfield  {journal} {\bibinfo  {journal} {Phys. Rev. A}\ }\textbf {\bibinfo
  {volume} {69}},\ \bibinfo {pages} {062311} (\bibinfo {year}
  {2004})}\BibitemShut {NoStop}%
\bibitem [{\citenamefont {Raussendorf}\ \emph {et~al.}(2006)\citenamefont
  {Raussendorf}, \citenamefont {Harrington},\ and\ \citenamefont
  {Goyal}}]{Raussendorf2006}%
  \BibitemOpen
  \bibfield  {author} {\bibinfo {author} {\bibfnamefont {R.}~\bibnamefont
  {Raussendorf}}, \bibinfo {author} {\bibfnamefont {J.}~\bibnamefont
  {Harrington}},\ and\ \bibinfo {author} {\bibfnamefont {K.}~\bibnamefont
  {Goyal}},\ }\bibfield  {title} {\bibinfo {title} {A fault-tolerant one-way
  quantum computer},\ }\href {https://doi.org/10.1016/j.aop.2006.01.012}
  {\bibfield  {journal} {\bibinfo  {journal} {Annals of Physics}\ }\textbf
  {\bibinfo {volume} {321}},\ \bibinfo {pages} {2242} (\bibinfo {year}
  {2006})}\BibitemShut {NoStop}%
\bibitem [{\citenamefont {Nielsen}(2006)}]{Nielsen2006}%
  \BibitemOpen
  \bibfield  {author} {\bibinfo {author} {\bibfnamefont {M.~A.}\ \bibnamefont
  {Nielsen}},\ }\bibfield  {title} {\bibinfo {title} {Cluster-state quantum
  computation},\ }\href {https://doi.org/10.1016/S0034-4877(06)80014-5}
  {\bibfield  {journal} {\bibinfo  {journal} {Rep. Math. Phys.}\ }\textbf
  {\bibinfo {volume} {57}},\ \bibinfo {pages} {147} (\bibinfo {year}
  {2006})}\BibitemShut {NoStop}%
\bibitem [{\citenamefont {Raussendorf}\ \emph {et~al.}(2003)\citenamefont
  {Raussendorf}, \citenamefont {Browne},\ and\ \citenamefont
  {Briegel}}]{Raussendorf2003}%
  \BibitemOpen
  \bibfield  {author} {\bibinfo {author} {\bibfnamefont {R.}~\bibnamefont
  {Raussendorf}}, \bibinfo {author} {\bibfnamefont {D.~E.}\ \bibnamefont
  {Browne}},\ and\ \bibinfo {author} {\bibfnamefont {H.~J.}\ \bibnamefont
  {Briegel}},\ }\bibfield  {title} {\bibinfo {title} {Measurement-based quantum
  computation on cluster states},\ }\href
  {https://doi.org/10.1103/PhysRevA.68.022312} {\bibfield  {journal} {\bibinfo
  {journal} {Phys. Rev. A}\ }\textbf {\bibinfo {volume} {68}},\ \bibinfo
  {pages} {022312} (\bibinfo {year} {2003})}\BibitemShut {NoStop}%
\bibitem [{\citenamefont {den Nest}\ \emph {et~al.}(2006)\citenamefont {den
  Nest}, \citenamefont {Miyake}, \citenamefont {D{\"u}r},\ and\ \citenamefont
  {Briegel}}]{Nest2006}%
  \BibitemOpen
  \bibfield  {author} {\bibinfo {author} {\bibfnamefont {M.~V.}\ \bibnamefont
  {den Nest}}, \bibinfo {author} {\bibfnamefont {A.}~\bibnamefont {Miyake}},
  \bibinfo {author} {\bibfnamefont {W.}~\bibnamefont {D{\"u}r}},\ and\ \bibinfo
  {author} {\bibfnamefont {H.~J.}\ \bibnamefont {Briegel}},\ }\bibfield
  {title} {\bibinfo {title} {Universal resources for measurement-based quantum
  computation},\ }\href {https://doi.org/10.1103/PhysRevLett.97.150504}
  {\bibfield  {journal} {\bibinfo  {journal} {Phys. Rev. Lett.}\ }\textbf
  {\bibinfo {volume} {97}},\ \bibinfo {pages} {150504} (\bibinfo {year}
  {2006})}\BibitemShut {NoStop}%
\bibitem [{\citenamefont {Markham}\ \emph {et~al.}(2007)\citenamefont
  {Markham}, \citenamefont {Miyake},\ and\ \citenamefont
  {Virmani}}]{Markham2007}%
  \BibitemOpen
  \bibfield  {author} {\bibinfo {author} {\bibfnamefont {D.}~\bibnamefont
  {Markham}}, \bibinfo {author} {\bibfnamefont {A.}~\bibnamefont {Miyake}},\
  and\ \bibinfo {author} {\bibfnamefont {S.}~\bibnamefont {Virmani}},\
  }\bibfield  {title} {\bibinfo {title} {Entanglement and local information
  access for graph states},\ }\href {https://doi.org/10.1088/1367-2630/9/6/194}
  {\bibfield  {journal} {\bibinfo  {journal} {New J. Phys.}\ }\textbf {\bibinfo
  {volume} {9}},\ \bibinfo {pages} {194} (\bibinfo {year} {2007})}\BibitemShut
  {NoStop}%
\bibitem [{\citenamefont {Blatt}\ and\ \citenamefont
  {Roos}(2012)}]{blatt_quantum_2012}%
  \BibitemOpen
  \bibfield  {author} {\bibinfo {author} {\bibfnamefont {R.}~\bibnamefont
  {Blatt}}\ and\ \bibinfo {author} {\bibfnamefont {C.~F.}\ \bibnamefont
  {Roos}},\ }\bibfield  {title} {\bibinfo {title} {Quantum simulations with
  trapped ions},\ }\href {https://doi.org/10.1038/nphys2252} {\bibfield
  {journal} {\bibinfo  {journal} {Nature Phys}\ }\textbf {\bibinfo {volume}
  {8}},\ \bibinfo {pages} {277} (\bibinfo {year} {2012})}\BibitemShut {NoStop}%
\bibitem [{\citenamefont {Swingle}\ \emph {et~al.}(2016)\citenamefont
  {Swingle}, \citenamefont {Bentsen}, \citenamefont {{Schleier-Smith}},\ and\
  \citenamefont {Hayden}}]{Swingle2016}%
  \BibitemOpen
  \bibfield  {author} {\bibinfo {author} {\bibfnamefont {B.}~\bibnamefont
  {Swingle}}, \bibinfo {author} {\bibfnamefont {G.}~\bibnamefont {Bentsen}},
  \bibinfo {author} {\bibfnamefont {M.}~\bibnamefont {{Schleier-Smith}}},\ and\
  \bibinfo {author} {\bibfnamefont {P.}~\bibnamefont {Hayden}},\ }\bibfield
  {title} {\bibinfo {title} {Measuring the scrambling of quantum information},\
  }\href {https://doi.org/10.1103/PhysRevA.94.040302} {\bibfield  {journal}
  {\bibinfo  {journal} {Phys. Rev. A}\ }\textbf {\bibinfo {volume} {94}},\
  \bibinfo {pages} {040302} (\bibinfo {year} {2016})}\BibitemShut {NoStop}%
\bibitem [{\citenamefont {Bravyi}\ and\ \citenamefont
  {Gosset}(2016)}]{Bravyi2016}%
  \BibitemOpen
  \bibfield  {author} {\bibinfo {author} {\bibfnamefont {S.}~\bibnamefont
  {Bravyi}}\ and\ \bibinfo {author} {\bibfnamefont {D.}~\bibnamefont
  {Gosset}},\ }\bibfield  {title} {\bibinfo {title} {Improved classical
  simulation of quantum circuits dominated by {{Clifford}} gates},\ }\href
  {https://doi.org/10.1103/PhysRevLett.116.250501} {\bibfield  {journal}
  {\bibinfo  {journal} {Phys. Rev. Lett.}\ }\textbf {\bibinfo {volume} {116}},\
  \bibinfo {pages} {250501} (\bibinfo {year} {2016})}\BibitemShut {NoStop}%
\bibitem [{\citenamefont {Bravyi}\ \emph {et~al.}(2019)\citenamefont {Bravyi},
  \citenamefont {Browne}, \citenamefont {Calpin}, \citenamefont {Campbell},
  \citenamefont {Gosset},\ and\ \citenamefont {Howard}}]{Bravyi2019b}%
  \BibitemOpen
  \bibfield  {author} {\bibinfo {author} {\bibfnamefont {S.}~\bibnamefont
  {Bravyi}}, \bibinfo {author} {\bibfnamefont {D.}~\bibnamefont {Browne}},
  \bibinfo {author} {\bibfnamefont {P.}~\bibnamefont {Calpin}}, \bibinfo
  {author} {\bibfnamefont {E.}~\bibnamefont {Campbell}}, \bibinfo {author}
  {\bibfnamefont {D.}~\bibnamefont {Gosset}},\ and\ \bibinfo {author}
  {\bibfnamefont {M.}~\bibnamefont {Howard}},\ }\bibfield  {title} {\bibinfo
  {title} {Simulation of quantum circuits by low-rank stabilizer
  decompositions},\ }\href {https://doi.org/10.22331/q-2019-09-02-181}
  {\bibfield  {journal} {\bibinfo  {journal} {Quantum}\ }\textbf {\bibinfo
  {volume} {3}},\ \bibinfo {pages} {181} (\bibinfo {year} {2019})}\BibitemShut
  {NoStop}%
\bibitem [{\citenamefont {Mari}\ and\ \citenamefont {Eisert}(2012)}]{Mari2012}%
  \BibitemOpen
  \bibfield  {author} {\bibinfo {author} {\bibfnamefont {A.}~\bibnamefont
  {Mari}}\ and\ \bibinfo {author} {\bibfnamefont {J.}~\bibnamefont {Eisert}},\
  }\bibfield  {title} {\bibinfo {title} {Positive {{Wigner Functions Render
  Classical Simulation}} of {{Quantum Computation Efficient}}},\ }\href
  {https://doi.org/10.1103/PhysRevLett.109.230503} {\bibfield  {journal}
  {\bibinfo  {journal} {Phys. Rev. Lett.}\ }\textbf {\bibinfo {volume} {109}},\
  \bibinfo {pages} {230503} (\bibinfo {year} {2012})}\BibitemShut {NoStop}%
\bibitem [{\citenamefont {Pashayan}\ \emph {et~al.}(2015)\citenamefont
  {Pashayan}, \citenamefont {Wallman},\ and\ \citenamefont
  {Bartlett}}]{Pashayan2015}%
  \BibitemOpen
  \bibfield  {author} {\bibinfo {author} {\bibfnamefont {H.}~\bibnamefont
  {Pashayan}}, \bibinfo {author} {\bibfnamefont {J.~J.}\ \bibnamefont
  {Wallman}},\ and\ \bibinfo {author} {\bibfnamefont {S.~D.}\ \bibnamefont
  {Bartlett}},\ }\bibfield  {title} {\bibinfo {title} {Estimating outcome
  probabilities of quantum circuits using quasiprobabilities},\ }\href
  {https://doi.org/10.1103/PhysRevLett.115.070501} {\bibfield  {journal}
  {\bibinfo  {journal} {Phys. Rev. Lett.}\ }\textbf {\bibinfo {volume} {115}},\
  \bibinfo {pages} {070501} (\bibinfo {year} {2015})}\BibitemShut {NoStop}%
\bibitem [{\citenamefont {Delfosse}\ \emph {et~al.}(2015)\citenamefont
  {Delfosse}, \citenamefont {Allard~Guerin}, \citenamefont {Bian},\ and\
  \citenamefont {Raussendorf}}]{Delfosse2015}%
  \BibitemOpen
  \bibfield  {author} {\bibinfo {author} {\bibfnamefont {N.}~\bibnamefont
  {Delfosse}}, \bibinfo {author} {\bibfnamefont {P.}~\bibnamefont
  {Allard~Guerin}}, \bibinfo {author} {\bibfnamefont {J.}~\bibnamefont
  {Bian}},\ and\ \bibinfo {author} {\bibfnamefont {R.}~\bibnamefont
  {Raussendorf}},\ }\bibfield  {title} {\bibinfo {title} {Wigner {{Function
  Negativity}} and {{Contextuality}} in {{Quantum Computation}} on
  {{Rebits}}},\ }\href {https://doi.org/10.1103/PhysRevX.5.021003} {\bibfield
  {journal} {\bibinfo  {journal} {Phys. Rev. X}\ }\textbf {\bibinfo {volume}
  {5}},\ \bibinfo {pages} {021003} (\bibinfo {year} {2015})}\BibitemShut
  {NoStop}%
\bibitem [{\citenamefont {Raussendorf}\ \emph {et~al.}(2017)\citenamefont
  {Raussendorf}, \citenamefont {Browne}, \citenamefont {Delfosse},
  \citenamefont {Okay},\ and\ \citenamefont
  {{Bermejo-Vega}}}]{Raussendorf2017}%
  \BibitemOpen
  \bibfield  {author} {\bibinfo {author} {\bibfnamefont {R.}~\bibnamefont
  {Raussendorf}}, \bibinfo {author} {\bibfnamefont {D.~E.}\ \bibnamefont
  {Browne}}, \bibinfo {author} {\bibfnamefont {N.}~\bibnamefont {Delfosse}},
  \bibinfo {author} {\bibfnamefont {C.}~\bibnamefont {Okay}},\ and\ \bibinfo
  {author} {\bibfnamefont {J.}~\bibnamefont {{Bermejo-Vega}}},\ }\bibfield
  {title} {\bibinfo {title} {Contextuality and {{Wigner-function}} negativity
  in qubit quantum computation},\ }\href
  {https://doi.org/10.1103/PhysRevA.95.052334} {\bibfield  {journal} {\bibinfo
  {journal} {Phys. Rev. A}\ }\textbf {\bibinfo {volume} {95}},\ \bibinfo
  {pages} {052334} (\bibinfo {year} {2017})}\BibitemShut {NoStop}%
\bibitem [{\citenamefont {Raussendorf}\ \emph {et~al.}(2020)\citenamefont
  {Raussendorf}, \citenamefont {{Bermejo-Vega}}, \citenamefont {Tyhurst},
  \citenamefont {Okay},\ and\ \citenamefont {Zurel}}]{Raussendorf2020}%
  \BibitemOpen
  \bibfield  {author} {\bibinfo {author} {\bibfnamefont {R.}~\bibnamefont
  {Raussendorf}}, \bibinfo {author} {\bibfnamefont {J.}~\bibnamefont
  {{Bermejo-Vega}}}, \bibinfo {author} {\bibfnamefont {E.}~\bibnamefont
  {Tyhurst}}, \bibinfo {author} {\bibfnamefont {C.}~\bibnamefont {Okay}},\ and\
  \bibinfo {author} {\bibfnamefont {M.}~\bibnamefont {Zurel}},\ }\bibfield
  {title} {\bibinfo {title} {Phase-space-simulation method for quantum
  computation with magic states on qubits},\ }\href
  {https://doi.org/10.1103/PhysRevA.101.012350} {\bibfield  {journal} {\bibinfo
   {journal} {Phys. Rev. A}\ }\textbf {\bibinfo {volume} {101}},\ \bibinfo
  {pages} {012350} (\bibinfo {year} {2020})}\BibitemShut {NoStop}%
\bibitem [{SM()}]{SM}%
  \BibitemOpen
  \href@noop {} {}\bibinfo {note} {Refer to the Supplemental Material for a
  more detailed derivation of the main results, which includes references
  \cite{Eppstein1992, aaronsonarkhipov, fefferman, kondo_quantum_2021, beame,
  Dahlberg_2020, Kim2007}.}\BibitemShut {Stop}%
\bibitem [{\citenamefont {Stockmeyer}(1983)}]{Stockmeyer1983}%
  \BibitemOpen
  \bibfield  {author} {\bibinfo {author} {\bibfnamefont {L.}~\bibnamefont
  {Stockmeyer}},\ }\bibfield  {title} {\bibinfo {title} {The complexity of
  approximate counting},\ }in\ \href {https://doi.org/10.1145/800061.808740}
  {\emph {\bibinfo {booktitle} {Proc. {{Fifteenth Annu}}. {{ACM Symp}}.
  {{Theory Comput}}.}}},\ \bibinfo {series and number} {{{STOC}} '83}\
  (\bibinfo  {publisher} {{Association for Computing Machinery}},\ \bibinfo
  {address} {{New York, NY, USA}},\ \bibinfo {year} {1983})\ pp.\ \bibinfo
  {pages} {118--126}\BibitemShut {NoStop}%
\bibitem [{\citenamefont {Bouland}\ \emph {et~al.}(2018)\citenamefont
  {Bouland}, \citenamefont {Fitzsimons},\ and\ \citenamefont
  {Koh}}]{Bouland2018}%
  \BibitemOpen
  \bibfield  {author} {\bibinfo {author} {\bibfnamefont {A.}~\bibnamefont
  {Bouland}}, \bibinfo {author} {\bibfnamefont {J.~F.}\ \bibnamefont
  {Fitzsimons}},\ and\ \bibinfo {author} {\bibfnamefont {D.~E.}\ \bibnamefont
  {Koh}},\ }\bibfield  {title} {\bibinfo {title} {Complexity {{Classification}}
  of {{Conjugated Clifford Circuits}}},\ }in\ \href
  {https://doi.org/10.4230/lipics.ccc.2018.21} {\emph {\bibinfo {booktitle}
  {33rd {{Comput}}. {{Complex}}. {{Conf}}. {{CCC}} 2018}}},\ \bibinfo {series}
  {Leibniz {{International Proceedings}} in {{Informatics}} ({{LIPIcs}})},
  Vol.\ \bibinfo {volume} {102},\ \bibinfo {editor} {edited by\ \bibinfo
  {editor} {\bibfnamefont {M.}~\bibnamefont {Herbstritt}}\ and\ \bibinfo
  {editor} {\bibfnamefont {R.~A.}\ \bibnamefont {Servedio}}}\ (\bibinfo
  {publisher} {{Schloss Dagstuhl--Leibniz-Zentrum fuer Informatik}},\ \bibinfo
  {year} {2018})\ pp.\ \bibinfo {pages} {21:1--21:25}\BibitemShut {NoStop}%
\bibitem [{\citenamefont {Movassagh}(2019)}]{Movassagh2019}%
  \BibitemOpen
  \bibfield  {author} {\bibinfo {author} {\bibfnamefont {R.}~\bibnamefont
  {Movassagh}},\ }\bibfield  {title} {\bibinfo {title} {Quantum supremacy and
  random circuits},\ }\href {http://arxiv.org/abs/1909.06210} {\  (\bibinfo
  {year} {2019})},\ \Eprint {https://arxiv.org/abs/1909.06210}
  {arXiv:1909.06210 [cond-mat, physics:hep-th, physics:math-ph,
  physics:quant-ph]} \BibitemShut {NoStop}%
\bibitem [{\citenamefont {Krovi}(2022)}]{Krovi2022}%
  \BibitemOpen
  \bibfield  {author} {\bibinfo {author} {\bibfnamefont {H.}~\bibnamefont
  {Krovi}},\ }\bibfield  {title} {\bibinfo {title} {Average-case hardness of
  estimating probabilities of random quantum circuits with a linear scaling in
  the error exponent},\ }\href {http://arxiv.org/abs/2206.05642} {\  (\bibinfo
  {year} {2022})},\ \Eprint {https://arxiv.org/abs/2206.05642}
  {arXiv:2206.05642 [quant-ph]} \BibitemShut {NoStop}%
\bibitem [{\citenamefont {Courcelle}\ \emph {et~al.}(2000)\citenamefont
  {Courcelle}, \citenamefont {Makowsky},\ and\ \citenamefont
  {Rotics}}]{Courcelle2000}%
  \BibitemOpen
  \bibfield  {author} {\bibinfo {author} {\bibfnamefont {B.}~\bibnamefont
  {Courcelle}}, \bibinfo {author} {\bibfnamefont {J.~A.}\ \bibnamefont
  {Makowsky}},\ and\ \bibinfo {author} {\bibfnamefont {U.}~\bibnamefont
  {Rotics}},\ }\bibfield  {title} {\bibinfo {title} {Linear {{Time Solvable
  Optimization Problems}} on {{Graphs}} of {{Bounded Clique-Width}}},\ }\href
  {https://doi.org/10.1007/s002249910009} {\bibfield  {journal} {\bibinfo
  {journal} {Theory of Computing Systems}\ }\textbf {\bibinfo {volume} {33}},\
  \bibinfo {pages} {125} (\bibinfo {year} {2000})}\BibitemShut {NoStop}%
\bibitem [{\citenamefont {Oum}(2017)}]{Oum2017}%
  \BibitemOpen
  \bibfield  {author} {\bibinfo {author} {\bibfnamefont {S.-i.}\ \bibnamefont
  {Oum}},\ }\bibfield  {title} {\bibinfo {title} {Rank-width: {{Algorithmic}}
  and structural results},\ }\href {https://doi.org/10.1016/j.dam.2016.08.006}
  {\bibfield  {journal} {\bibinfo  {journal} {Discrete Applied Mathematics}\
  }\textbf {\bibinfo {volume} {231}},\ \bibinfo {pages} {15} (\bibinfo {year}
  {2017})}\BibitemShut {NoStop}%
\bibitem [{\citenamefont {Bravyi}\ \emph {et~al.}(2020)\citenamefont {Bravyi},
  \citenamefont {Kliesch}, \citenamefont {Koenig},\ and\ \citenamefont
  {Tang}}]{Bravyi2020}%
  \BibitemOpen
  \bibfield  {author} {\bibinfo {author} {\bibfnamefont {S.}~\bibnamefont
  {Bravyi}}, \bibinfo {author} {\bibfnamefont {A.}~\bibnamefont {Kliesch}},
  \bibinfo {author} {\bibfnamefont {R.}~\bibnamefont {Koenig}},\ and\ \bibinfo
  {author} {\bibfnamefont {E.}~\bibnamefont {Tang}},\ }\bibfield  {title}
  {\bibinfo {title} {Obstacles to {{Variational Quantum Optimization}} from
  {{Symmetry Protection}}},\ }\href
  {https://doi.org/10.1103/PhysRevLett.125.260505} {\bibfield  {journal}
  {\bibinfo  {journal} {Phys. Rev. Lett.}\ }\textbf {\bibinfo {volume} {125}},\
  \bibinfo {pages} {260505} (\bibinfo {year} {2020})}\BibitemShut {NoStop}%
\bibitem [{\citenamefont {Bringewatt}\ and\ \citenamefont
  {Jarret}(2020)}]{Bringewatt2020}%
  \BibitemOpen
  \bibfield  {author} {\bibinfo {author} {\bibfnamefont {J.}~\bibnamefont
  {Bringewatt}}\ and\ \bibinfo {author} {\bibfnamefont {M.}~\bibnamefont
  {Jarret}},\ }\bibfield  {title} {\bibinfo {title} {Effective {{Gaps Are Not
  Effective}}: {{Quasipolynomial Classical Simulation}} of {{Obstructed
  Stoquastic Hamiltonians}}},\ }\href
  {https://doi.org/10.1103/PhysRevLett.125.170504} {\bibfield  {journal}
  {\bibinfo  {journal} {Phys. Rev. Lett.}\ }\textbf {\bibinfo {volume} {125}},\
  \bibinfo {pages} {170504} (\bibinfo {year} {2020})}\BibitemShut {NoStop}%
\bibitem [{\citenamefont {Aaronson}(2005{\natexlab{b}})}]{Aaronson2005}%
  \BibitemOpen
  \bibfield  {author} {\bibinfo {author} {\bibfnamefont {S.}~\bibnamefont
  {Aaronson}},\ }\bibfield  {title} {\bibinfo {title} {Quantum computing,
  postselection, and probabilistic polynomial-time},\ }\href
  {https://doi.org/10.1098/rspa.2005.1546} {\bibfield  {journal} {\bibinfo
  {journal} {Proc. R. Soc. Math. Phys. Eng. Sci.}\ }\textbf {\bibinfo {volume}
  {461}},\ \bibinfo {pages} {3473} (\bibinfo {year}
  {2005}{\natexlab{b}})}\BibitemShut {NoStop}%
\bibitem [{\citenamefont {Schuch}\ \emph {et~al.}(2007)\citenamefont {Schuch},
  \citenamefont {Wolf}, \citenamefont {Verstraete},\ and\ \citenamefont
  {Cirac}}]{Schuch2007a}%
  \BibitemOpen
  \bibfield  {author} {\bibinfo {author} {\bibfnamefont {N.}~\bibnamefont
  {Schuch}}, \bibinfo {author} {\bibfnamefont {M.~M.}\ \bibnamefont {Wolf}},
  \bibinfo {author} {\bibfnamefont {F.}~\bibnamefont {Verstraete}},\ and\
  \bibinfo {author} {\bibfnamefont {J.~I.}\ \bibnamefont {Cirac}},\ }\bibfield
  {title} {\bibinfo {title} {Computational {{Complexity}} of {{Projected
  Entangled Pair States}}},\ }\href
  {https://doi.org/10.1103/PhysRevLett.98.140506} {\bibfield  {journal}
  {\bibinfo  {journal} {Phys. Rev. Lett.}\ }\textbf {\bibinfo {volume} {98}},\
  \bibinfo {pages} {140506} (\bibinfo {year} {2007})}\BibitemShut {NoStop}%
\bibitem [{\citenamefont {Fujii}\ and\ \citenamefont
  {Morimae}(2017)}]{Fujii2017}%
  \BibitemOpen
  \bibfield  {author} {\bibinfo {author} {\bibfnamefont {K.}~\bibnamefont
  {Fujii}}\ and\ \bibinfo {author} {\bibfnamefont {T.}~\bibnamefont
  {Morimae}},\ }\bibfield  {title} {\bibinfo {title} {Commuting quantum
  circuits and complexity of {{Ising}} partition functions},\ }\href
  {https://doi.org/10.1088/1367-2630/aa5fdb} {\bibfield  {journal} {\bibinfo
  {journal} {New J. Phys.}\ }\textbf {\bibinfo {volume} {19}},\ \bibinfo
  {pages} {033003} (\bibinfo {year} {2017})}\BibitemShut {NoStop}%
\bibitem [{\citenamefont {{Bermejo-Vega}}\ \emph {et~al.}(2018)\citenamefont
  {{Bermejo-Vega}}, \citenamefont {Hangleiter}, \citenamefont {Schwarz},
  \citenamefont {Raussendorf},\ and\ \citenamefont
  {Eisert}}]{Bermejo-Vega2018}%
  \BibitemOpen
  \bibfield  {author} {\bibinfo {author} {\bibfnamefont {J.}~\bibnamefont
  {{Bermejo-Vega}}}, \bibinfo {author} {\bibfnamefont {D.}~\bibnamefont
  {Hangleiter}}, \bibinfo {author} {\bibfnamefont {M.}~\bibnamefont {Schwarz}},
  \bibinfo {author} {\bibfnamefont {R.}~\bibnamefont {Raussendorf}},\ and\
  \bibinfo {author} {\bibfnamefont {J.}~\bibnamefont {Eisert}},\ }\bibfield
  {title} {\bibinfo {title} {Architectures for {{Quantum Simulation Showing}} a
  {{Quantum Speedup}}},\ }\href {https://doi.org/10.1103/PhysRevX.8.021010}
  {\bibfield  {journal} {\bibinfo  {journal} {Phys. Rev. X}\ }\textbf {\bibinfo
  {volume} {8}},\ \bibinfo {pages} {021010} (\bibinfo {year}
  {2018})}\BibitemShut {NoStop}%
\bibitem [{\citenamefont {Hangleiter}\ and\ \citenamefont
  {Eisert}(2022)}]{Hangleiter2022}%
  \BibitemOpen
  \bibfield  {author} {\bibinfo {author} {\bibfnamefont {D.}~\bibnamefont
  {Hangleiter}}\ and\ \bibinfo {author} {\bibfnamefont {J.}~\bibnamefont
  {Eisert}},\ }\bibfield  {title} {\bibinfo {title} {Computational advantage of
  quantum random sampling},\ }\href {http://arxiv.org/abs/2206.04079} {\
  (\bibinfo {year} {2022})},\ \Eprint {https://arxiv.org/abs/2206.04079}
  {arXiv:2206.04079 [cond-mat, physics:quant-ph]} \BibitemShut {NoStop}%
\bibitem [{\citenamefont {Dahlberg}\ and\ \citenamefont
  {Wehner}(2018)}]{Dahlberg2018}%
  \BibitemOpen
  \bibfield  {author} {\bibinfo {author} {\bibfnamefont {A.}~\bibnamefont
  {Dahlberg}}\ and\ \bibinfo {author} {\bibfnamefont {S.}~\bibnamefont
  {Wehner}},\ }\bibfield  {title} {\bibinfo {title} {Transforming graph states
  using single-qubit operations},\ }\href
  {https://doi.org/10.1098/rsta.2017.0325} {\bibfield  {journal} {\bibinfo
  {journal} {Phil. Trans. R. Soc. A.}\ }\textbf {\bibinfo {volume} {376}},\
  \bibinfo {pages} {20170325} (\bibinfo {year} {2018})}\BibitemShut {NoStop}%
\bibitem [{\citenamefont {Gale}(1957)}]{Gale1957}%
  \BibitemOpen
  \bibfield  {author} {\bibinfo {author} {\bibfnamefont {D.}~\bibnamefont
  {Gale}},\ }\bibfield  {title} {\bibinfo {title} {A theorem on flows in
  networks},\ }\href {https://doi.org/10.2140/pjm.1957.7.1073} {\bibfield
  {journal} {\bibinfo  {journal} {Pacific J. Math.}\ }\textbf {\bibinfo
  {volume} {7}},\ \bibinfo {pages} {1073} (\bibinfo {year} {1957})}\BibitemShut
  {NoStop}%
\bibitem [{\citenamefont {Ryser}(2009)}]{Ryser2009}%
  \BibitemOpen
  \bibfield  {author} {\bibinfo {author} {\bibfnamefont {H.~J.}\ \bibnamefont
  {Ryser}},\ }\bibfield  {title} {\bibinfo {title} {Combinatorial
  {{Properties}} of {{Matrices}} of {{Zeros}} and {{Ones}}},\ }in\ \href
  {https://doi.org/10.1007/978-0-8176-4842-8_18} {\emph {\bibinfo {booktitle}
  {Classic {{Papers}} in {{Combinatorics}}}}},\ \bibinfo {editor} {edited by\
  \bibinfo {editor} {\bibfnamefont {I.}~\bibnamefont {Gessel}}\ and\ \bibinfo
  {editor} {\bibfnamefont {G.-C.}\ \bibnamefont {Rota}}}\ (\bibinfo
  {publisher} {{Birkh\"auser Boston}},\ \bibinfo {address} {{Boston, MA}},\
  \bibinfo {year} {2009})\ pp.\ \bibinfo {pages} {269--275}\BibitemShut
  {NoStop}%
\bibitem [{\citenamefont {Krause}(1996)}]{Krause1996}%
  \BibitemOpen
  \bibfield  {author} {\bibinfo {author} {\bibfnamefont {M.}~\bibnamefont
  {Krause}},\ }\bibfield  {title} {\bibinfo {title} {A {{Simple Proof}} of the
  {{Gale-Ryser Theorem}}},\ }\href
  {https://doi.org/10.1080/00029890.1996.12004747} {\bibfield  {journal}
  {\bibinfo  {journal} {The American Mathematical Monthly}\ }\textbf {\bibinfo
  {volume} {103}},\ \bibinfo {pages} {335} (\bibinfo {year}
  {1996})}\BibitemShut {NoStop}%
\bibitem [{\citenamefont {Corneil}\ and\ \citenamefont
  {Rotics}(2001)}]{Corneil2001}%
  \BibitemOpen
  \bibfield  {author} {\bibinfo {author} {\bibfnamefont {D.~G.}\ \bibnamefont
  {Corneil}}\ and\ \bibinfo {author} {\bibfnamefont {U.}~\bibnamefont
  {Rotics}},\ }\bibfield  {title} {\bibinfo {title} {On the {{Relationship}}
  between {{Clique-Width}} and {{Treewidth}}: ({{Extended}} abstract)},\ }in\
  \href {https://doi.org/10.1007/3-540-45477-2_9} {\emph {\bibinfo {booktitle}
  {Graph-{{Theoretic Concepts}} in {{Computer Science}}}}},\ Vol.\ \bibinfo
  {volume} {2204},\ \bibinfo {editor} {edited by\ \bibinfo {editor}
  {\bibfnamefont {G.}~\bibnamefont {Goos}}, \bibinfo {editor} {\bibfnamefont
  {J.}~\bibnamefont {Hartmanis}}, \bibinfo {editor} {\bibfnamefont
  {J.}~\bibnamefont {{van Leeuwen}}}, \bibinfo {editor} {\bibfnamefont
  {A.}~\bibnamefont {Brandst{\"a}dt}},\ and\ \bibinfo {editor} {\bibfnamefont
  {V.~B.}\ \bibnamefont {Le}}}\ (\bibinfo  {publisher} {{Springer Berlin
  Heidelberg}},\ \bibinfo {address} {{Berlin, Heidelberg}},\ \bibinfo {year}
  {2001})\ pp.\ \bibinfo {pages} {78--90}\BibitemShut {NoStop}%
\bibitem [{\citenamefont {Dabrowski}\ \emph {et~al.}(2019)\citenamefont
  {Dabrowski}, \citenamefont {Johnson},\ and\ \citenamefont
  {Paulusma}}]{Dabrowski2019}%
  \BibitemOpen
  \bibfield  {author} {\bibinfo {author} {\bibfnamefont {K.~K.}\ \bibnamefont
  {Dabrowski}}, \bibinfo {author} {\bibfnamefont {M.}~\bibnamefont {Johnson}},\
  and\ \bibinfo {author} {\bibfnamefont {D.}~\bibnamefont {Paulusma}},\
  }\bibfield  {title} {\bibinfo {title} {Clique-width for hereditary graph
  classes},\ }in\ \href {https://doi.org/10.1017/9781108649094.002} {\emph
  {\bibinfo {booktitle} {Surveys in {{Combinatorics}} 2019}}},\ \bibinfo
  {editor} {edited by\ \bibinfo {editor} {\bibfnamefont {A.}~\bibnamefont
  {Lo}}, \bibinfo {editor} {\bibfnamefont {R.}~\bibnamefont {Mycroft}},
  \bibinfo {editor} {\bibfnamefont {G.}~\bibnamefont {Perarnau}},\ and\
  \bibinfo {editor} {\bibfnamefont {A.}~\bibnamefont {Treglown}}}\ (\bibinfo
  {publisher} {{Cambridge University Press}},\ \bibinfo {year} {2019})\
  \bibinfo {edition} {1st}\ ed.,\ pp.\ \bibinfo {pages} {1--56}\BibitemShut
  {NoStop}%
\bibitem [{\citenamefont {Fellows}\ \emph {et~al.}(2009)\citenamefont
  {Fellows}, \citenamefont {Rosamond}, \citenamefont {Rotics},\ and\
  \citenamefont {Szeider}}]{Fellows2009}%
  \BibitemOpen
  \bibfield  {author} {\bibinfo {author} {\bibfnamefont {M.~R.}\ \bibnamefont
  {Fellows}}, \bibinfo {author} {\bibfnamefont {F.~A.}\ \bibnamefont
  {Rosamond}}, \bibinfo {author} {\bibfnamefont {U.}~\bibnamefont {Rotics}},\
  and\ \bibinfo {author} {\bibfnamefont {S.}~\bibnamefont {Szeider}},\
  }\bibfield  {title} {\bibinfo {title} {Clique-{{Width}} is {{NP-Complete}}},\
  }\href {https://doi.org/10.1137/070687256} {\bibfield  {journal} {\bibinfo
  {journal} {SIAM J. Discrete Math.}\ }\textbf {\bibinfo {volume} {23}},\
  \bibinfo {pages} {909} (\bibinfo {year} {2009})}\BibitemShut {NoStop}%
\bibitem [{\citenamefont {Arnborg}\ \emph {et~al.}(1987)\citenamefont
  {Arnborg}, \citenamefont {Corneil},\ and\ \citenamefont
  {Proskurowski}}]{Arnborg1987}%
  \BibitemOpen
  \bibfield  {author} {\bibinfo {author} {\bibfnamefont {S.}~\bibnamefont
  {Arnborg}}, \bibinfo {author} {\bibfnamefont {D.~G.}\ \bibnamefont
  {Corneil}},\ and\ \bibinfo {author} {\bibfnamefont {A.}~\bibnamefont
  {Proskurowski}},\ }\bibfield  {title} {\bibinfo {title} {Complexity of
  {{Finding Embeddings}} in a {\emph{k}} -{{Tree}}},\ }\href
  {https://doi.org/10.1137/0608024} {\bibfield  {journal} {\bibinfo  {journal}
  {SIAM. J. on Algebraic and Discrete Methods}\ }\textbf {\bibinfo {volume}
  {8}},\ \bibinfo {pages} {277} (\bibinfo {year} {1987})}\BibitemShut {NoStop}%
\bibitem [{\citenamefont {Kaminski}\ \emph {et~al.}(2009)\citenamefont
  {Kaminski}, \citenamefont {Lozin},\ and\ \citenamefont
  {Milanic}}]{Kaminski2009}%
  \BibitemOpen
  \bibfield  {author} {\bibinfo {author} {\bibfnamefont {M.}~\bibnamefont
  {Kaminski}}, \bibinfo {author} {\bibfnamefont {V.~V.}\ \bibnamefont
  {Lozin}},\ and\ \bibinfo {author} {\bibfnamefont {M.}~\bibnamefont
  {Milanic}},\ }\bibfield  {title} {\bibinfo {title} {Recent developments on
  graphs of bounded clique-width},\ }\href
  {https://doi.org/10.1016/j.dam.2008.08.022} {\bibfield  {journal} {\bibinfo
  {journal} {Discrete Applied Mathematics}\ }\textbf {\bibinfo {volume}
  {157}},\ \bibinfo {pages} {2747} (\bibinfo {year} {2009})}\BibitemShut
  {NoStop}%
\bibitem [{\citenamefont {Gurski}\ and\ \citenamefont
  {Wanke}(2000)}]{Gurski2000}%
  \BibitemOpen
  \bibfield  {author} {\bibinfo {author} {\bibfnamefont {F.}~\bibnamefont
  {Gurski}}\ and\ \bibinfo {author} {\bibfnamefont {E.}~\bibnamefont {Wanke}},\
  }\bibfield  {title} {\bibinfo {title} {The {{Tree-Width}} of {{Clique-Width
  Bounded Graphs}} without {{Kn}},n},\ }in\ \href
  {https://doi.org/10.1007/3-540-40064-8_19} {\emph {\bibinfo {booktitle}
  {Graph-{{Theoretic Concepts}} in {{Computer Science}}}}},\ Vol.\ \bibinfo
  {volume} {1928},\ \bibinfo {editor} {edited by\ \bibinfo {editor}
  {\bibfnamefont {G.}~\bibnamefont {Goos}}, \bibinfo {editor} {\bibfnamefont
  {J.}~\bibnamefont {Hartmanis}}, \bibinfo {editor} {\bibfnamefont
  {J.}~\bibnamefont {{van Leeuwen}}}, \bibinfo {editor} {\bibfnamefont
  {U.}~\bibnamefont {Brandes}},\ and\ \bibinfo {editor} {\bibfnamefont
  {D.}~\bibnamefont {Wagner}}}\ (\bibinfo  {publisher} {{Springer Berlin
  Heidelberg}},\ \bibinfo {address} {{Berlin, Heidelberg}},\ \bibinfo {year}
  {2000})\ pp.\ \bibinfo {pages} {196--205}\BibitemShut {NoStop}%
\bibitem [{\citenamefont {Golumbic}\ and\ \citenamefont
  {Rotics}(1999)}]{Golumbic1999}%
  \BibitemOpen
  \bibfield  {author} {\bibinfo {author} {\bibfnamefont {M.~C.}\ \bibnamefont
  {Golumbic}}\ and\ \bibinfo {author} {\bibfnamefont {U.}~\bibnamefont
  {Rotics}},\ }\bibfield  {title} {\bibinfo {title} {On the
  {{Clique}}\textemdash{{Width}} of {{Perfect Graph Classes}}},\ }in\ \href
  {https://doi.org/10.1007/3-540-46784-X_14} {\emph {\bibinfo {booktitle}
  {Graph-{{Theoretic Concepts}} in {{Computer Science}}}}},\ Vol.\ \bibinfo
  {volume} {1665},\ \bibinfo {editor} {edited by\ \bibinfo {editor}
  {\bibfnamefont {G.}~\bibnamefont {Goos}}, \bibinfo {editor} {\bibfnamefont
  {J.}~\bibnamefont {Hartmanis}}, \bibinfo {editor} {\bibfnamefont
  {J.}~\bibnamefont {{van Leeuwen}}}, \bibinfo {editor} {\bibfnamefont
  {P.}~\bibnamefont {Widmayer}}, \bibinfo {editor} {\bibfnamefont
  {G.}~\bibnamefont {Neyer}},\ and\ \bibinfo {editor} {\bibfnamefont
  {S.}~\bibnamefont {Eidenbenz}}}\ (\bibinfo  {publisher} {{Springer Berlin
  Heidelberg}},\ \bibinfo {address} {{Berlin, Heidelberg}},\ \bibinfo {year}
  {1999})\ pp.\ \bibinfo {pages} {135--147}\BibitemShut {NoStop}%
\bibitem [{\citenamefont {Chekuri}(2014)}]{Chekuri2014}%
  \BibitemOpen
  \bibfield  {author} {\bibinfo {author} {\bibfnamefont {C.}~\bibnamefont
  {Chekuri}},\ }\bibfield  {title} {\bibinfo {title} {Treewidth,
  {{Applications}}, and some {{Recent Developments}}},\ }\href
  {http://chekuri.cs.illinois.edu/talks/NIPS-Tutorial.pdf} {\  (\bibinfo {year}
  {2014})}\BibitemShut {NoStop}%
\bibitem [{\citenamefont {Diestel}(2017)}]{Diestel2017}%
  \BibitemOpen
  \bibfield  {author} {\bibinfo {author} {\bibfnamefont {R.}~\bibnamefont
  {Diestel}},\ }\href {https://doi.org/10.1007/978-3-662-53622-3} {\emph
  {\bibinfo {title} {Graph {{Theory}}}}},\ \bibinfo {series} {Graduate
  {{Texts}} in {{Mathematics}}}, Vol.\ \bibinfo {volume} {173}\ (\bibinfo
  {publisher} {{Springer Berlin Heidelberg}},\ \bibinfo {address} {{Berlin,
  Heidelberg}},\ \bibinfo {year} {2017})\BibitemShut {NoStop}%
\bibitem [{\citenamefont {Eppstein}(1992)}]{Eppstein1992}%
  \BibitemOpen
  \bibfield  {author} {\bibinfo {author} {\bibfnamefont {D.}~\bibnamefont
  {Eppstein}},\ }\bibfield  {title} {\bibinfo {title} {Parallel recognition of
  series-parallel graphs},\ }\href
  {https://doi.org/10.1016/0890-5401(92)90041-D} {\bibfield  {journal}
  {\bibinfo  {journal} {Information and Computation}\ }\textbf {\bibinfo
  {volume} {98}},\ \bibinfo {pages} {41} (\bibinfo {year} {1992})}\BibitemShut
  {NoStop}%
\bibitem [{\citenamefont {Aaronson}\ and\ \citenamefont
  {Arkhipov}(2010)}]{aaronsonarkhipov}%
  \BibitemOpen
  \bibfield  {author} {\bibinfo {author} {\bibfnamefont {S.}~\bibnamefont
  {Aaronson}}\ and\ \bibinfo {author} {\bibfnamefont {A.}~\bibnamefont
  {Arkhipov}},\ }\href {https://doi.org/10.48550/ARXIV.1011.3245} {\bibinfo
  {title} {The computational complexity of linear optics}} (\bibinfo {year}
  {2010})\BibitemShut {NoStop}%
\bibitem [{\citenamefont {Fefferman}(2014{\natexlab{a}})}]{fefferman}%
  \BibitemOpen
  \bibfield  {author} {\bibinfo {author} {\bibfnamefont {W.~J.}\ \bibnamefont
  {Fefferman}},\ }\emph {\bibinfo {title} {The Power of Quantum Fourier
  Sampling}},\ \href {https://doi.org/10.7907/6HJB-MC69} {Ph.D. thesis}
  (\bibinfo {year} {2014}{\natexlab{a}})\BibitemShut {NoStop}%
\bibitem [{\citenamefont {Kondo}\ \emph {et~al.}(2021)\citenamefont {Kondo},
  \citenamefont {Mori},\ and\ \citenamefont {Movassagh}}]{kondo_quantum_2021}%
  \BibitemOpen
  \bibfield  {author} {\bibinfo {author} {\bibfnamefont {Y.}~\bibnamefont
  {Kondo}}, \bibinfo {author} {\bibfnamefont {R.}~\bibnamefont {Mori}},\ and\
  \bibinfo {author} {\bibfnamefont {R.}~\bibnamefont {Movassagh}},\ }\bibfield
  {title} {\bibinfo {title} {Quantum supremacy and hardness of estimating
  output probabilities of quantum circuits},\ }\href@noop {} {\  (\bibinfo
  {year} {2021})},\ \Eprint {https://arxiv.org/abs/2102.01960}
  {arXiv:2102.01960} \BibitemShut {NoStop}%
\bibitem [{\citenamefont {Beame}()}]{beame}%
  \BibitemOpen
  \bibfield  {author} {\bibinfo {author} {\bibfnamefont {P.}~\bibnamefont
  {Beame}},\ }\href@noop {} {\bibinfo {title} {Lecture 17: The strong
  exponential time hypothesis}}\BibitemShut {NoStop}%
\bibitem [{\citenamefont {Dahlberg}\ \emph
  {et~al.}(2020{\natexlab{a}})\citenamefont {Dahlberg}, \citenamefont
  {Helsen},\ and\ \citenamefont {Wehner}}]{Dahlberg_2020}%
  \BibitemOpen
  \bibfield  {author} {\bibinfo {author} {\bibfnamefont {A.}~\bibnamefont
  {Dahlberg}}, \bibinfo {author} {\bibfnamefont {J.}~\bibnamefont {Helsen}},\
  and\ \bibinfo {author} {\bibfnamefont {S.}~\bibnamefont {Wehner}},\
  }\bibfield  {title} {\bibinfo {title} {Transforming graph states to
  bell-pairs is {NP}-complete},\ }\href
  {https://doi.org/10.22331/q-2020-10-22-348} {\bibfield  {journal} {\bibinfo
  {journal} {Quantum}\ }\textbf {\bibinfo {volume} {4}},\ \bibinfo {pages}
  {348} (\bibinfo {year} {2020}{\natexlab{a}})}\BibitemShut {NoStop}%
\bibitem [{\citenamefont {Kim}\ \emph {et~al.}(2007)\citenamefont {Kim},
  \citenamefont {Sudakov},\ and\ \citenamefont {Vu}}]{Kim2007}%
  \BibitemOpen
  \bibfield  {author} {\bibinfo {author} {\bibfnamefont {J.~H.}\ \bibnamefont
  {Kim}}, \bibinfo {author} {\bibfnamefont {B.}~\bibnamefont {Sudakov}},\ and\
  \bibinfo {author} {\bibfnamefont {V.}~\bibnamefont {Vu}},\ }\bibfield
  {title} {\bibinfo {title} {Small subgraphs of random regular graphs},\ }\href
  {https://doi.org/10.1016/j.disc.2006.09.032} {\bibfield  {journal} {\bibinfo
  {journal} {Discrete Mathematics}\ }\textbf {\bibinfo {volume} {307}},\
  \bibinfo {pages} {1961} (\bibinfo {year} {2007})}\BibitemShut {NoStop}%
\bibitem [{\citenamefont {Aaronson}\ and\ \citenamefont
  {Arkhipov}(2013)}]{Aaronson2013a}%
  \BibitemOpen
  \bibfield  {author} {\bibinfo {author} {\bibfnamefont {S.}~\bibnamefont
  {Aaronson}}\ and\ \bibinfo {author} {\bibfnamefont {A.}~\bibnamefont
  {Arkhipov}},\ }\bibfield  {title} {\bibinfo {title} {The {{Computational
  Complexity}} of {{Linear Optics}}},\ }\href
  {https://doi.org/10.4086/toc.2013.v009a004} {\bibfield  {journal} {\bibinfo
  {journal} {Theory Comput.}\ }\textbf {\bibinfo {volume} {9}},\ \bibinfo
  {pages} {143} (\bibinfo {year} {2013})}\BibitemShut {NoStop}%
\bibitem [{\citenamefont
  {Fefferman}(2014{\natexlab{b}})}]{FeffermanWilliamJason2014}%
  \BibitemOpen
  \bibfield  {author} {\bibinfo {author} {\bibfnamefont {W.~J.}\ \bibnamefont
  {Fefferman}},\ }\emph {\bibinfo {title} {The {{Power}} of {{Quantum Fourier
  Sampling}}}},\ \href {https://doi.org/10.7907/6HJB-MC69} {Ph.D. thesis},\
  \bibinfo  {school} {California Institute of Technology} (\bibinfo {year}
  {2014}{\natexlab{b}})\BibitemShut {NoStop}%
\bibitem [{\citenamefont {Dahlberg}\ \emph
  {et~al.}(2020{\natexlab{b}})\citenamefont {Dahlberg}, \citenamefont
  {Helsen},\ and\ \citenamefont {Wehner}}]{Dahlberg2020}%
  \BibitemOpen
  \bibfield  {author} {\bibinfo {author} {\bibfnamefont {A.}~\bibnamefont
  {Dahlberg}}, \bibinfo {author} {\bibfnamefont {J.}~\bibnamefont {Helsen}},\
  and\ \bibinfo {author} {\bibfnamefont {S.}~\bibnamefont {Wehner}},\
  }\bibfield  {title} {\bibinfo {title} {Transforming graph states to
  {{Bell-pairs}} is {{NP-Complete}}},\ }\href
  {https://doi.org/10.22331/q-2020-10-22-348} {\bibfield  {journal} {\bibinfo
  {journal} {Quantum}\ }\textbf {\bibinfo {volume} {4}},\ \bibinfo {pages}
  {348} (\bibinfo {year} {2020}{\natexlab{b}})}\BibitemShut {NoStop}%
\bibitem [{\citenamefont {Bouland}\ \emph {et~al.}(2022)\citenamefont
  {Bouland}, \citenamefont {Fefferman}, \citenamefont {Landau},\ and\
  \citenamefont {Liu}}]{Bouland2022}%
  \BibitemOpen
  \bibfield  {author} {\bibinfo {author} {\bibfnamefont {A.}~\bibnamefont
  {Bouland}}, \bibinfo {author} {\bibfnamefont {B.}~\bibnamefont {Fefferman}},
  \bibinfo {author} {\bibfnamefont {Z.}~\bibnamefont {Landau}},\ and\ \bibinfo
  {author} {\bibfnamefont {Y.}~\bibnamefont {Liu}},\ }\bibfield  {title}
  {\bibinfo {title} {Noise and the {{Frontier}} of {{Quantum Supremacy}}},\
  }in\ \href {https://doi.org/10.1109/FOCS52979.2021.00127} {\emph {\bibinfo
  {booktitle} {2021 {{IEEE}} 62nd {{Annu}}. {{Symp}}. {{Found}}. {{Comput}}.
  {{Sci}}. {{FOCS}}}}}\ (\bibinfo  {publisher} {{IEEE}},\ \bibinfo {address}
  {{Denver, CO, USA}},\ \bibinfo {year} {2022})\ pp.\ \bibinfo {pages}
  {1308--1317}\BibitemShut {NoStop}%
\bibitem [{\citenamefont {Kondo}\ \emph {et~al.}(2022)\citenamefont {Kondo},
  \citenamefont {Mori},\ and\ \citenamefont {Movassagh}}]{Kondo2022}%
  \BibitemOpen
  \bibfield  {author} {\bibinfo {author} {\bibfnamefont {Y.}~\bibnamefont
  {Kondo}}, \bibinfo {author} {\bibfnamefont {R.}~\bibnamefont {Mori}},\ and\
  \bibinfo {author} {\bibfnamefont {R.}~\bibnamefont {Movassagh}},\ }\bibfield
  {title} {\bibinfo {title} {Quantum supremacy and hardness of estimating
  output probabilities of quantum circuits},\ }in\ \href
  {https://doi.org/10.1109/FOCS52979.2021.00126} {\emph {\bibinfo {booktitle}
  {2021 {{IEEE}} 62nd {{Annu}}. {{Symp}}. {{Found}}. {{Comput}}. {{Sci}}.
  {{FOCS}}}}}\ (\bibinfo  {publisher} {{IEEE}},\ \bibinfo {address} {{Denver,
  CO, USA}},\ \bibinfo {year} {2022})\ pp.\ \bibinfo {pages}
  {1296--1307}\BibitemShut {NoStop}%
\bibitem [{\citenamefont {Beame}(2016)}]{Beame2016}%
  \BibitemOpen
  \bibfield  {author} {\bibinfo {author} {\bibfnamefont {P.}~\bibnamefont
  {Beame}},\ }\bibfield  {title} {\bibinfo {title} {Lecture 17: {{The Strong
  Exponential Time Hypothesis}}},\ }\href
  {https://courses.cs.washington.edu/courses/cse531/16wi/lectures/lect17.pdf}
  {\  (\bibinfo {year} {2016})}\BibitemShut {NoStop}%
\end{thebibliography}
\end{document}